\keywords{Petri nets, Causally-consistent reversibility}
\theoremstyle{plain} 
\newcommand{\reach}[1]{{\it reach}(#1)}
\newcommand{\voidt}{\epsilon}
\newcommand{\trace}{\omega}
\newcommand{\fseq}{s}
\newcommand{\ceq}{\asymp}
\newcommand{\len}[1]{|#1|}
\newcommand{\m}{Mazurkiewicz }
\newcommand{\supp}{\mathit{supp}}
\newcommand{\minp}[1]{\ensuremath{{}^\circ{#1}}}
\newcommand{\maxp}[1]{\ensuremath{#1^\circ}}
\newcommand{\preS}[1]{\ensuremath{{}^\bullet{#1}}}
\newcommand{\postS}[1]{\ensuremath{#1^\bullet}}
\newcommand{\places}{\mathcal{P}}
\newcommand{\p}[1]{{\sf #1}}
\newcommand{\transitions}{\mathcal{T}}
\newcommand{\tr}[1]{{\sf #1}}
\newcommand{\nat}{\mathbb{N}} 
\newcommand{\multiset}[1]{#1}
\newcommandx{\red}[1][1=\tr t]{\xrightarrow{#1}}
\newcommand{\xMapsto}[2][]{\ext@arrow 0599{\Mapstofill@}{#1}{#2}}
\def\Mapstofill@{\arrowfill@{\Mapstochar\Relbar}\Relbar\Rightarrow}
\newcommandx{\fred}[1][1=\tr t]{\stackrel{#1}{\twoheadrightarrow}}
\newcommandx{\bred}[1][1=\tr t]{\stackrel{#1}{\rightsquigarrow}}
\newcommand{\pntrans}{[\rangle}
\def \mathrule #1#2#3{\begin{array}{l}%
    {\mbox{\scriptsize ({\sc #1})} }%
    \\ \irule{#2}{#3}%
\end{array}}
\newcommand{\irule}[2]{\frac{\textstyle\rule[-1.3ex]{0cm}{3ex}#1}%
{\textstyle\rule[-.5ex]{0cm}{3ex}#2}}
\DeclareRobustCommand{\cev}[1]{%
  \mathpalette\do@cev{#1}%
}
\newcommand{\do@cev}[2]{%
  \fix@cev{#1}{+}%
  \reflectbox{$\m@th#1\overrightarrow{\reflectbox{$\fix@cev{#1}{-}\m@th#1#2\fix@cev{#1}{+}$}}$}%
  \fix@cev{#1}{-}%
}
\newcommand{\fix@cev}[2]{%
  \ifx#1\displaystyle
    \mkern#23mu
  \else
    \ifx#1\textstyle
      \mkern#23mu
    \else
      \ifx#1\scriptstyle
        \mkern#22mu
      \else
        \mkern#22mu
      \fi
    \fi
  \fi
}
\newcommand{\rev}[1]{\cev{#1}}
\newcommand{\enc}[1]{\llbracket #1\rrbracket}
\definecolor{mygray}{RGB}{120,120,120}
\newcommand{\drawplace}{*[o]=<1.2pc,1.2pc>{\ }\drop\cir{}}
\newcommand{\drawmarkedplace}{*[o]=<1.2pc,1.2pc>{\bullet}\drop\cir{}}
\newcommand{\drawcolouredmarkedplace}[1]{*[o]=<1.2pc,1.2pc>{#1}\drop\cir{}}
\newcommand{\nameplaceup}[1]{\POS[]+<0pc,1pc>\drop{{#1}}}
\newcommand{\nameplacedown}[1]{\POS[]-<0pc,1pc>\drop{{#1}}}
\newcommand{\nameplaceright}[1]{\POS[]+<1pc,0pc>\drop{{#1}}}
\newcommand{\nameplaceleft}[1]{\POS[]-<1pc,0pc>\drop{{#1}}}
\newcommand{\drawtrans}[1]{*=<2pc,1.3pc>{{#1}}\drop\frm{-}}
\newcommand{\colourset}{\mathcal{C}}
\newcommand{\varset}{\mathcal{X}}
\newcommand{\vars}[1]{{\it vars}(#1)}
\begin{document}

\title[Reversing P/T Nets]{Reversing Place Transition Nets}
\titlecomment{{\lsuper*}The authors would like to thank for partial
  support by COST Action IC1405 on Reversible Computation - Extending
  Horizons of Computing. This paper is a revised and extended version
  of \cite{MelgrattiMU19}.}

\author[H.~Melgratti]{Hern\'an Melgratti}
\address{Instituto de Ciencias de la Computaci\'on - Universidad de
  Buenos Aires - Conicet, Argentina } 
\email{hmelgra@dc.uba.ar} 
\thanks{The first author is partially supported by the EU H2020 RISE
  programme under the Marie Sk\l{}odowska-Curie grant agreement
  778233, by the UBACyT projects 20020170100544BA and
  20020170100086BA, and by the PIP project 11220130100148CO} 

\author[C.A.~Mezzina]{Claudio Antares Mezzina}	
\address{Dipartimento di Scienze Pure e Applicate, Universit\`a di
  Urbino, Italy } 
\email{claudio.mezzina@uniurb.it} 
\thanks{The second author has been partially supported by French ANR
  project DCore ANR-18-CE25-0007 and by the Italian INdAM -- GNCS
  project 2020 Reversible Concurrent Systems: from Models to
  Languages.} 

\author[I.~Ulidowski]{Irek Ulidowski} 

\address{School of Informatics, University of Leicester, England} 

\email{iu3@leicester.ac.uk}  

\newtheorem{innercustomlem}{Lemma}
\newenvironment{lm}[1]
  {\renewcommand\theinnercustomlem{#1}\innercustomlem}
  {\endinnercustomlem}
\newtheorem{innercustomtm}{Theorem}
\newenvironment{tm}[1]
  {\renewcommand\theinnercustomlem{#1}\innercustomtm}
  {\endinnercustomlem}




\begin{abstract}
\noindent Petri nets are a well-known model of concurrency and provide
an ideal setting for the study of fundamental aspects in concurrent
systems.  Despite their simplicity, they still lack a satisfactory
causally reversible semantics.  We develop such semantics for
Place/Transitions Petri nets (P/T nets) based on two observations.
Firstly, a net that explicitly expresses causality and conflict among
events, for example an occurrence net, can be straightforwardly
reversed by adding a reverse transition for each of its forward
transitions.  Secondly, given a P/T net the standard unfolding
construction associates with it an occurrence net that preserves all
of its computation. Consequently, the reversible semantics of a P/T
net can be obtained as the reversible semantics of its unfolding.
We show that such reversible behaviour can be expressed as a finite
net whose tokens are coloured by causal histories.  Colours in our
encoding resemble the causal memories that are typical in reversible
process calculi.
\end{abstract}

\maketitle

\section{Introduction}

Reversible computing is attracting interest for its applications in
many fields including hardware design and quantum
computing~\cite{2018Vos}, the modelling of bio-chemical
reactions~\cite{CardelliL11,PhiUliYuen12,UK16,Pinna17,UK18,KACPPU20},
parallel discrete event simulation~\cite{SchordanOJB18} and program
reversing for
debugging~\cite{GiachinoLM14,Lanese0PV18,JH2018,HoeyLNUV20}.

A model for reversible computation features two computation flows: the
standard forward direction and the reverse direction, which allows us
to go back to any past state of the computation.
Reversibility is well understood in a sequential setting in which
executions are totally ordered sets of events (see~\cite{Leeman}): a
sequential computation can be reversed by successively undoing the
last not yet undone event.
Reversibility becomes more challenging in a concurrent setting because
there is no natural way to define a total ordering on concurrent
events.
Many concurrency models represent the causal dependencies among
concurrent events using partial orders.
Reversing an execution of a partially ordered set of events then
reduces to successively undoing one of the maximal events which has
not been undone yet.  This is the basis of
the \textit{causal-consistent
reversibility}~\cite{rccs,ccsk,cc_bulletin,revaxiom}, which relates
reversibility with concurrency and causality.
Intuitively, this notion stipulates that any event can be undone
provided that all its consequences, if any, are undone beforehand.
Reversibility in distributed systems, for example in
checkpoint/rollback protocols~\cite{VassorS18} and in
transactions~\cite{DanosK05,LaneseLMSS13,stm}, can also be modelled by
causal-consistent reversibility.
The interplay between reversibility and concurrency and causality has
been widely studied in process
calculi~\cite{rccs,ccsk,rhotcs,CristescuKV13,family_pi}, event
structures~\cite{PhillipsU15,CristescuKV15,UlidowskiPY18,GraversenPY18},
and lately in Petri
nets~\cite{BarylskaKMP18,PhilippouP18,BarylskaGMPPP18}.
Despite being a very basic model of concurrency, previous works on
Petri nets only provide causally-consistent reversible semantics for
subclases of Petri nets (e.g., acyclic nets or
backward-concurrent-free).

A key point when reversing computations in Petri nets is to handle
backward conflicts, namely that a token can be generated in a place
due to different causes.
Consider the net in Figure \ref{fig:back_confl} showing the initial
state of a system that can either perform $\tr t_1$ followed by $\tr
t_3$, or $\tr t_2$ followed by $\tr t_4$.
The final state of a complete computation is depicted in
Figure \ref{fig:back_confl_final}.
The information in that state is not enough to deduce whether the
token in $\p d$ has been produced because of $\tr t_3$ or $\tr t_4$.
Even worse, if we ``naively'' reverse the net by just adding
transitions in the reverse direction, as shown in
Figure \ref{fig:back_confl_rev} (where reverse firings are represented
by dashed arrows and the names of their transitions are decorated with
`$\rev{}$' as in, for example, $\rev{\tr{t_1}}$), the reverse
transitions will do more than just undoing the computation.
In fact, the token in $\p d$ can be put back either in $\p b$ or in
$\p c$ regardless of which place it came from.

\begin{figure}[t]
  \subfigure[Initial]{	
  \label{fig:back_confl}
  $$
  \xymatrix@R=1pc@C=.2pc{
    &
    \drawmarkedplace\ar[dl]\ar[dr]
    \nameplaceup {\p a}
    \\           
    \drawtrans{\tr{t_1}} \ar[d]
    &&
    \drawtrans{\tr{t_2}} \ar[d]
    \\
    \drawplace\ar[d]
    \nameplaceleft {\p b}
    &
    &
    \drawplace\ar[d]
    \nameplaceleft {\p c}
    \\
    \drawtrans{\tr{t_3}} \ar[dr]
    &&
    \drawtrans{\tr{t_4}} \ar[dl]
    \\
    &
    \drawplace
    \nameplacedown {\p d}
    }
  $$
  }
  \hspace{1cm}
  \subfigure[Final]{	
  \label{fig:back_confl_final}
  $$
  \xymatrix@R=1pc@C=.2pc{
    &
    \drawplace\ar[dl]\ar[dr]
    \nameplaceup {\p a}
    \\
    \drawtrans{\tr{t_1}} \ar[d]
    &&
    \drawtrans{\tr{t_2}} \ar[d]
    \\
    \drawplace\ar[d]
    \nameplaceleft {\p b}
    &&
    \drawplace\ar[d]
    \nameplaceleft {\p c}
    \\
    \drawtrans{\tr{t_3}} \ar[dr]
    &&
    \drawtrans{\tr{t_4}} \ar[dl]
    \\
    &
    \drawmarkedplace
    \nameplacedown {\p d}
    }
    $$
  }
  \hspace{1cm}
  \subfigure[Naive reversing]
  {
  \label{fig:back_confl_rev}
  $$
  \xymatrix@R=1pc@C=.2pc{
    &&
    \drawplace\ar[dl]\ar[dr]
    \nameplaceup{\p a}
    &
    &
    \\
    \drawtrans{ \rev{\tr{t_1}}} \ar@{-->}@/^/[urr]	
    &
    \drawtrans{\tr{t_1}} \ar[d]
    &&
    \drawtrans{\tr{t_2}} \ar[d] 
    &
    \drawtrans{\rev{\tr{t_2}}} \ar@{-->}@/_/[ull]
    \\
    &
    \drawplace\ar[d]\ar@{-->}[ul]
    \nameplaceleft {\p b}
    &&
    \drawplace\ar[d]\ar@{-->}[ur]
    \nameplaceleft {\p c}
    \\
    \drawtrans{ \rev{\tr{t_3}}} \ar@{-->}[ur]	
    &
    \drawtrans{\tr{t_3}} \ar[dr]
    &&
    \drawtrans{\tr{t_4}} \ar[dl] 
    & \drawtrans{\rev{\tr{t_4}}} \ar@{-->}[ul]
    \\
    &&
    \drawmarkedplace{\p d} \ar@{-->}@/^/[ull] \ar@{-->}@/_/[urr]
    \nameplacedown {\p d}
  }
  $$
}
\label{fig:init}
\caption{Backward conflict and naive reversing.}
\end{figure}
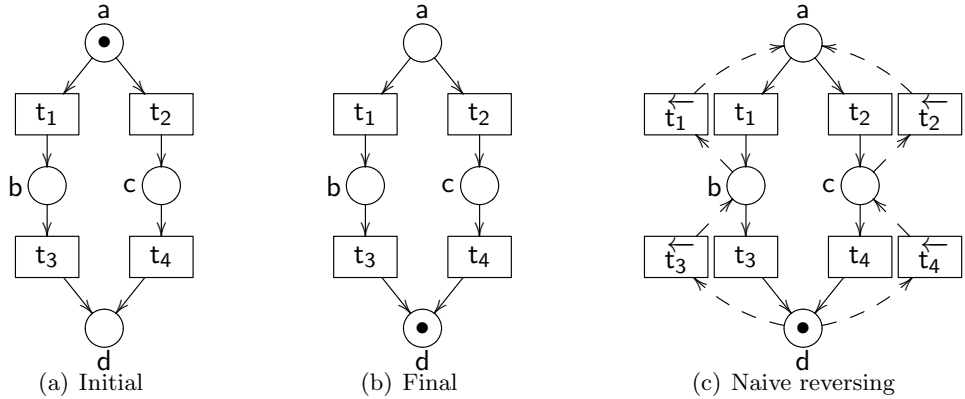
Analogous problems arise when a net is cyclic.
Previous approaches~\cite{PhilippouP18,BarylskaGMPPP18} to reversing
Petri nets tackle backward conflicts by relying on a new kind of
tokens, called {\em bonds}, that keep track of the execution history.
Bonds are rich enough for allowing other approaches to reversibility
such as \textit{out-of-causal order}
reversibility~\cite{PhiUliYuen12,UK16,UK18}, however they do
not apply to general nets but only to nets that are either
acyclic~\cite{PhilippouP18,BarylskaGMPPP18} or
backward-concurrent-free~\cite{PhilippouPS19}.

We base our paper on the following simple question:
\begin{quote}\begin{center}
\emph{Can we define a reversible semantics for Petri nets by relying on standard notions of Petri net theory?}
\end{center}\end{quote}
We provide here an affirmative answer to this question.  We propose a
reversible model for \textsc{p/t} nets that can handle cyclic nets by
relying on standard notions in Petri net theory.
We first observe that a Petri net can be mapped via the standard
unfolding construction to an occurrence net, i.e., an acyclic net that
does not have backward conflicts and makes causal dependencies
explicit.
Then, an occurrence net can be reversed by adding a reverse transition
for each of its transitions. Such construction gives a model where
causal-consistent reversibility holds (shown in Section~4).
We also prove that each reachable marking in the reversible version of
an occurrence net can be reached by just forward computational
steps. We observe that the unfolding construction could produce an
infinite occurrence net.
However, the unfolding can be seen as the definition of a
coloured net, where colours account for causal histories. Such
interpretation associates a \textsc{p/t} net with an equivalent
coloured \textsc{p/t} net, which can be reversed correspondingly as
described above.
The correctness of the construction is shown by exhibiting a
one-to-one correspondence of its executions with the ones of the
reversible version of the unfolding.  Interestingly, the colours used
by the construction resemble the memories common in reversible
calculi~\cite{rccs,LaneseMSS11,rhotcs}.

We remark that our proposal deals with reversing (undoing)
computations in a Petri net and not with the classical problem of
reversibility~\cite{revPT} which requires every computation to be able
to reach back the initial state of the system (but not necessary by
undoing the previous events).
In the latter case, the problem of making a net reversible equates to
adding a minimal amount of transitions that make a net
reversible~\cite{BarylskaKMP18,MikulskiL19}.
Reversibility is a global property while reversing a computation is a
local one, as discussed in~\cite{BarylskaKMP18,MikulskiL19}.

This paper is an extended version of \cite{MelgrattiMU19}, which
includes proofs of our results and many additional examples and
explanations.

\section{Background}\label{sec:bg}
\subsection{Petri Nets}\label{sec:pt}

\noindent Petri nets are built up from \emph{places} (denoting, e.g.,
resources and message types), which are repositories for {\em tokens}
(representing instances of resources), and \emph{transitions}, which
fetch and produce tokens.
We consider the infinite sets $\places$ of places and $\transitions$
of transitions, and assume that they are disjoint, i.e., $\places \cap
\transitions = \emptyset$. We let $\p{a}, \p{a}', \ldots$ range over
$\places$ and $\tr{t}, \tr{t'}, \ldots$ over $\transitions$. We write
$x, y, \ldots$ for elements in $\places \cup \transitions$.
 
A \emph{multiset} over a set $S$ is a function $m: S \rightarrow \nat$
(where $\nat$ denotes the natural numbers including zero).
We write $\nat^{S}$ for the set of multisets over $S$.
For $m\in\nat^{S}$, $\supp(m)= \{x \in S \; | \; m(x) > 0 \}$ is the
{\em support} of $m$, and $|m|= \sum_{x\in S} m(x)$ stands for its
{\em cardinality}.  We write $\emptyset$ for the empty multiset, i.e.,
$supp(\emptyset) = \emptyset$.
The union of $m_1,m_2\in\nat^{S}$, written $(m_1 \oplus m_2)$, is
defined such that $(m_1 \oplus m_2)(x) = m_1(x)+m_2(x)$ for all $x\in
S$.
Note that $\oplus$ is associative and commutative, and has $\emptyset$
as identity. Hence, $\nat^S$ is the free commutative monoid $S^\oplus$
over $S$. We write $x$ for a singleton multiset, i.e., $\supp(x)=
\{x\}$ and $m(x)=1$.
Moreover, we write $\multiset{x_1\ldots x_n}$ for
$x_1\oplus\ldots\oplus x_n$.  Let $f : S \rightarrow S'$, we write $f$
also for its obvious extension to multisets, i.e.,
$f(\multiset{x_0\ldots x_n}) = \multiset{f(x_0)\ldots f(x_n)}$. We
avoid writing $\supp(\_)$ when applying set operators to multisets,
e.g., we write $x \in m$ or $m_1\cap m_2$ instead of $x\in \supp(m)$
or $\supp(m_1)\cap \supp(m_2)$.

\begin{defi}[Petri Net]\label{def:net}
  A \emph{net} $N$ is a 4-tuple $N = (S_N, T_N, \preS{\_}_N,
  \postS{\_}_N)$ where $S_N\subseteq \mathcal{P}$ is the (nonempty)
  set of places, $T_N \subseteq \transitions$ is the set of
  transitions and the functions $\preS{\_}_N, \postS{\_}_N:
  T_N\rightarrow 2^{S_N}$ assign source and target to each transition
  such that $\preS{\tr{t}} \neq\emptyset$ and $\postS{\tr{t}}
  \neq\emptyset$ for all $\tr{t}\in T_N$.
  A marking of a net $N$ is a multiset over $S_N$, i.e.,
  $m\in\nat^{S_N}$.
  A Petri net is a pair $(N, m)$ where $N$ is a net and $m$ is a marking of
  $N$.  
\end{defi}

We denote $S_N\cup T_N$ by $N$, and omit the subscript $N$ if no
confusion arises.
We abbreviate a transition $\tr{t} \in T$ with {\it preset} $\preS{\tr
  t} = s_1$ and {\it postset} $\postS{\tr t} = s_2$ as $s_1 \pntrans
s_2$.
The preset and postset of a place  $\p{a} \in S$ are  
defined respectively as 
$\preS{\p{a}} = \{\tr{t}\ |\ \p{a} \in \postS{\tr{t}}\}$ 
and
$\postS{\p{a}{}} = \{\tr{t}\ |\ \p{a} \in \preS{\tr{t}}\}$.
We let $\minp N = \{x\in
N\ |\ \preS x = \emptyset\}$ and $\maxp N = \{x\in N |\ \postS x =
\emptyset\}$ denote the sets of \emph{initial} and \emph{final
elements} of $N$ respectively. 
Note that we only consider nets whose 
initial and final elements are places since 
transitions have non-empty presets and postsets, namely 
 $\preS{\tr{t}} \neq\emptyset$ and $\postS{\tr{t}} \neq\emptyset$ 
hold for all $\tr t$.

\begin{defi}[Net morphisms]
  Let $N, N'$ be nets. A pair $f =(f_S: S_N \rightarrow S_{N'}, f_T:
  T_N \rightarrow T_{N'})$ is a \emph{net morphism} from $N$ to $N'$
  (written $f: N \rightarrow N'$) if $f_S (\preS{\tr{t}}_{N}) =
  \preS{(f_T(\tr{t}))}_{N'}$ and $f_S (\postS{\tr{t}}_{N}) =
  \postS{(f_T(\tr{t}))}_{N'}$ for any $\tr{t}$. Moreover, we say $N$
  and $N'$ are \emph{isomorphic} if $f$ is bijective.
\end{defi}

The operational (interleaving) semantics of a Petri net is given by
the least relation on Petri nets satisfying the following inference
rule:
\[
  \mathrule{firing}
           {\tr t = m \;\pntrans\;  m' \in T_N}
           {(N, m\oplus m'')
            \red
            (N, m'\oplus m'')}
\]
which describes the evolution of the state of a net (represented by
the marking $m\oplus m''$) by the firing of a transition $m\pntrans
m'$ that consumes the tokens $m$ in its preset and produces the tokens
$m'$ in its postset. We shall call expressions of the form $(N,
m\oplus m'') \red (N, m'\oplus m'')$ as \emph{firings}.
We sometimes omit $\tr t$ in $\red$ when the fired transition is
uninteresting.

According to Definition~\ref{def:net}, each transition consumes and
produces at most one token in each place. On the other hand,
\textsc{p/t} nets below fetch and consume multiple tokens by defining
the pre- and postsets of transitions as multisets.
 
\begin{defi}[\textsc{p/t} net]\label{def:ptnet}
  A \emph{Place/Transition Petri net} (\textsc{p/t} net) is a 4-tuple
  $N = (S_N, T_N, \preS{\_}_N, \postS{\_}_N)$ where $S_N\subseteq
  \mathcal{P}$ is the (nonempty) set of places, $T_N \subseteq
  \transitions$ is the set of transitions and the functions
  $\preS{\_}_N, \postS{\_}_N: T_N\rightarrow \nat^{S_N}$ assign source
  and target to each transition such that $\preS{\tr{t}}
    \neq\emptyset$ and $\postS{\tr{t}} \neq\emptyset$ for all
    $\tr{t}\in T_N$.
  A marking of a net $N$ is multiset over $S_N$, i.e.,
  $m\in\nat^{S_N}$.
  A marked \textsc{p/t} net is a pair $(N, m)$ where $N$ is a
  \textsc{p/t} net and $m$ is a marking of $N$.
\end{defi}

The notions of pre- and postset, initial and final elements, morphisms
and operational semantics are straightforwardly extended to
\textsc{p/t} nets.
For technical reasons, we only consider \textsc{p/t} nets whose
transitions have non-empty pre- and post-sets.
Note that Petri nets can be regarded as a \textsc{p/t} net whose arcs
have unary weights.

Next, we introduce notation for sequences of firings (or
transitions). Let `;' denote concatenation of such sequences.
For the sequence $\fseq = \tr t_1 ;\tr t_2;\ldots;\tr t_n$, we write
$(N, m_0) \red[\fseq] (N, m_n)$ if $(N,m_0)\red[\tr t_1]
(N,m_1)\red[\tr t_2]\ldots\red[\tr t_n] (N,m_n)$; we call $\fseq$ a
firing sequence.
We write $(N, m_0) \red[]^{*} (N, m_n)$ if there exists $s$ such that
$(N, m_0) \red[\fseq] (N, m_n)$, and $\voidt_m$ for the empty sequence
with the marking $m$.

\begin{defi} Let $(N,m)$ be a \textsc{p/t}  net. 
The set of {\em reachable markings} $\reach{N, m}$ is defined as
$\big\{ m' \,|\, (N,m) \red[]^{*}\big (N, m')\}$.
\end{defi} 
We say a marked \textsc{p/t} net $(N, m)$ is {\em ($1$-)safe} if every
reachable marking is a set, i.e., $m'\in \reach{N,m}$ implies $m'\in
2^{S_N}$.

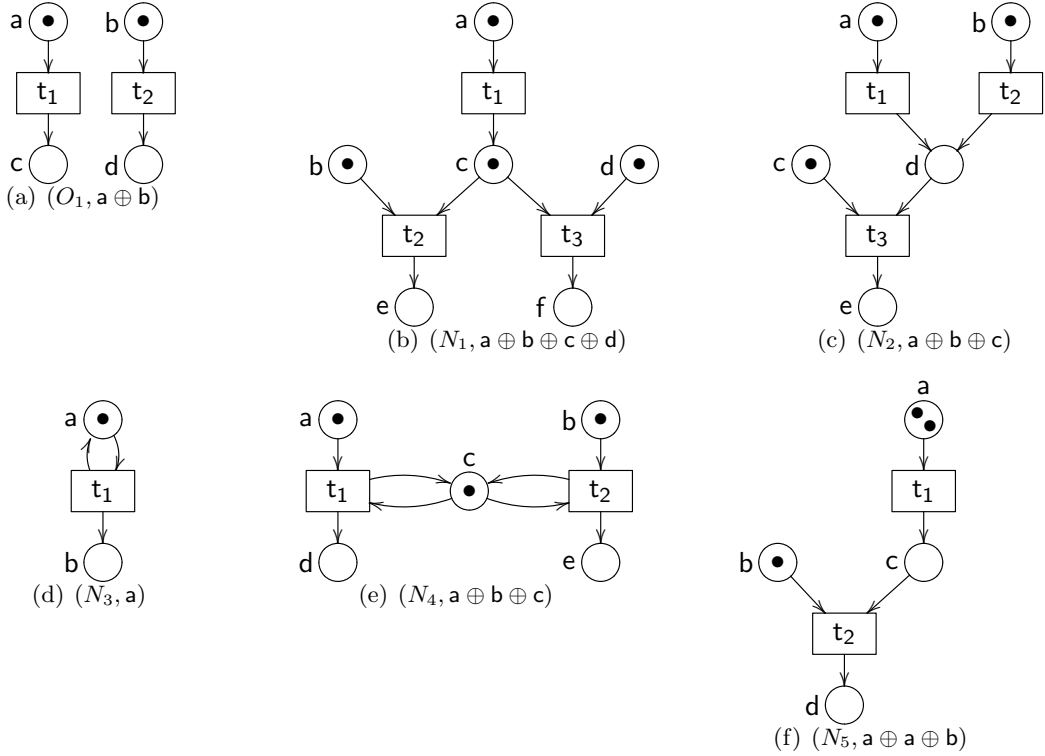
\begin{figure}[t]
\subfigure[$(O_1,\p a\oplus \p b)$]{ 
\label{fig:O1}
$$
 \xymatrix@R=1pc@C=1pc{
 \drawmarkedplace\ar[d]
 \nameplaceleft {\p a}
 &
 \drawmarkedplace\ar[d]
 \nameplaceleft {\p b}
 \\
 \drawtrans {\tr{t_1}} \ar[d] 
 &
 \drawtrans {\tr{t_2}} \ar[d] 
 \\
 \drawplace
 \nameplaceleft {\p c}
 &
 \drawplace
 \nameplaceleft {\p d}
 }
$$
}
\hspace{1cm}
\subfigure[$(N_1, \p a \oplus \p b \oplus \p c \oplus \p d)$]{ 
\label{fig:N1}
$$
 \xymatrix@R=1pc@C=.5pc{
 &
 &
 \drawmarkedplace\ar[d]
 \nameplaceleft {\p a}
 \\
 &
 &
 \drawtrans {\tr{t_1}} \ar[d] 
 \\
 \drawmarkedplace\ar[dr]
 \nameplaceleft {\p b}
 &
 &
 \drawmarkedplace\ar[dl]\ar[dr]
 \nameplaceleft {\p c}
 &
 &
 \drawmarkedplace\ar[dl]
 \nameplaceleft {\p d}
 &
 &
 \\
 &
  \drawtrans {\tr{t_2}} \ar[d] 
 &
 &
  \drawtrans {\tr{t_3}} \ar[d] 
 \\
 &
 \drawplace
 \nameplaceleft {\p e}
 &
 &
  \drawplace
 \nameplaceleft {\p f}
 }
$$
}
\subfigure[$(N_2, \p a \oplus \p b \oplus \p c)$]{ 
\label{fig:N2}
$$
 \xymatrix@R=1pc@C=.5pc{
 &
 \drawmarkedplace\ar[d]
 \nameplaceleft {\p a}
 &
 &
 \drawmarkedplace\ar[d]
 \nameplaceleft {\p b}
 \\
 &
 \drawtrans {\tr{t_1}} \ar[dr] 
 &
 &
 \drawtrans {\tr{t_2}} \ar[dl] 
 \\
\drawmarkedplace\ar[dr]
 \nameplaceleft {\p c}
 &
 &
 \drawplace\ar[dl]
 \nameplaceleft {\p d}
 &
 &
 \\
 &
  \drawtrans {\tr{t_3}} \ar[d] 
 &
 &
 \\
 &
 \drawplace
 \nameplaceleft {\p e}
 &
 }
$$
}
\qquad
\subfigure[$(N_3, \p a)$]{ 
\label{fig:N3}
$$
  \xymatrix@R=1pc@C=1pc{
 \drawmarkedplace\ar@/^/[d]
 \nameplaceleft {\p a}
  \\
 \drawtrans {\tr{t_1}}  \ar[d]\ar@/^/[u]
  \\
 \drawplace
 \nameplaceleft {\p b}
 }
$$
}
\qquad\qquad
\subfigure[$(N_4, \p a \oplus \p b \oplus \p c)$]{ 
\label{fig:N4}

$$
  \xymatrix@R=1pc@C=1pc{
 \drawmarkedplace\ar[d]
 \nameplaceleft {\p a}
 &&
 &&
 \drawmarkedplace\ar[d]
 \nameplaceleft {\p b}
 \\
 \drawtrans {\tr{t_1}} \ar[d] \ar@/^/[rr]
 &&
 \drawmarkedplace\ar@/_/[rr]\ar@/^/[ll]
 \nameplaceup {\p c} 
 &&
 \drawtrans {\tr{t_2}} \ar[d] \ar@/_/[ll]
 \\
 \drawplace
 \nameplaceleft {\p d}
 &&
 &&
 \drawplace
 \nameplaceleft {\p e}
 }
$$
}
\quad \quad
\subfigure[$(N_5, \p a \oplus \p a \oplus \p b)$]{ 
\label{fig:N5}
$$
 \xymatrix@R=1pc@C=.5pc{
 &&
  \drawplace\ar[d]
 \POS[]+<-0.2pc,.2pc>\drop{\bullet}
 \POS[]+<.2pc,-.2pc>\drop{\bullet}
 \nameplaceup{\p a}
 &
 &\\
 &&
 \drawtrans {\tr{t_1}} \ar[d] 
 &
 &
 \\
\drawmarkedplace\ar[dr]
 \nameplaceleft {\p b}
 &
 &
 \drawplace\ar[dl]
 \nameplaceleft {\p c}
 &
 &
 \\
 &
  \drawtrans {\tr{t_2}} \ar[d] 
 &
 &
 \\
 &
 \drawplace
 \nameplaceleft {\p d}
 &
 }
$$
}
\caption{\textsc{p/t} nets}\label{fig:nets}
\end{figure}

\begin{exa}
  \label{ex:nets}
  Figure~\ref{fig:nets} shows different \textsc{p/t} nets, which will
  be used throughout the paper.  As usual, places and transitions are
  represented by circles and boxes, respectively.
  The nets $O_1$ and $N_4$ are Petri nets, and $N_1$, $N_2$, $N_3$
  and $N_5$ are \textsc{p/t} nets which, when executing, may produce
  multiple tokens in some places.
\end{exa}

\subsection{Unfolding  of $\mbox{\sc p/t}$ nets}\label{sec:unf}

Our approach to reversing Petri nets relies on their occurrence net
semantics, which explicitly exhibits the causal ordering, concurrency,
and conflict among places and transitions.
We start by introducing several useful notions and notations.  First,
we shall describe a flow of causal dependencies in a net with the
relation $\prec$:
\begin{defi}\label{def:causaldep}
  Let $\prec$ be 
  $\{ (\p{a},\tr{t}) \mid {
    \p{a}\in S_N \wedge \tr t\in\postS{\p{a}}\}
    \ \cup\ 
    \{ (\tr{t},\p{a}) \mid
    \p{a}\in S_N \wedge \tr t\in \preS{\p{a}}\}}$.
  We write $\preceq$ for the reflexive and transitive closure of
  $\prec$.
\end{defi}
Consider Figure~\ref{fig:nets}.  We have $\p a \prec \tr t_1$ and $\tr
t_1 \prec \p c$ in $O_1$ as well as $\tr t_1 \preceq \tr t_2$ in
$N_1$.

Two transitions $\tr {t}_1$ and $\tr{t}_2$ are in an \emph{immediate
  conflict}, written $\tr t_1 \#_0 \tr{t}_2$, when $\tr{t}_1 \neq
\tr{t}_2$ and $\preS {\tr t_1} \cap \preS{\tr t_2} \neq \emptyset$.
For example, $\tr{t}_2$ and $\tr{t}_3$ in $N_1$ in
Figure~\ref{fig:nets} are in an immediate conflict since they share a
token in the place $\p c$. Correspondingly, for $\tr{t}_1$ and
$\tr{t}_2$ in $N_4$.
The \emph{conflict} relation $\#$ is defined by letting $x\#y$ if
there are $\tr t_1, \tr t_2 \in T$ such that $\tr t_1\preceq x$, and
$\tr t_2 \preceq y$, and $\tr{t}_1 \#_0 \tr{t}_2$.

We are now ready to define occurrence nets
following~\cite{NielsenPW81,HaymanW08}.

\begin{defi}[Occurrence net] A  net $(N,m)$ is an {\em occurrence net} if 
  \begin{enumerate}
  \item
    $N$ is \emph{acyclic};
  \item
    $N$ is a \emph{(1-)safe} net, i.e, any reachable marking is a set;
  \item
    $m = \minp N$, i.e., the initial marking is identified with the
    set of initial places;
  \item there are no \emph{backward conflicts}, i.e.,
    $|\preS{\p{a}}|\leq 1$ for all $a$ in $S_N$;
  \item there are no \emph{self-conflicts}, i.e, $\neg (\tr t \#\tr
    t)$ for all $\tr t$ in $T_N$.
  \end{enumerate}
\end{defi}
We use $O, O', \ldots$ to range over occurrence nets.

\begin{exa}
  The net $O_1$ in Figure \ref{fig:nets} is an occurrence net, while
  the remaining nets are not.
  $N_1$ is not an occurrence net since there is a token in place $\p
  c$ and $\p c$ is not an initial place of the net.
  $N_2$ has a backward conflict since two transitions produce tokens
  on the place $\p d$.
  $N_3$ is cyclic, and $N_4$ is cyclic and has a backward conflict on
  $\p c$.
  $N_5$ is not an occurence net since it is not 1-safe.
\end{exa}

The absence of backward conflicts in occurrence nets ensures that each
place appears in the postset of at most one transition.
Hence, pre- and postset relations can be interpreted as a causal
dependency. So, $\preceq$ represents causality.

We say $x, y \in S_N \cup T_N$ are \emph{concurrent}, written
$x\ co\ y$, if $x\neq y$ and $x\not\preceq y$, $y\not\preceq x$, and
$\neg x \# y$.
A set $X\subseteq S_N \cup T_N$ is concurrent, written $CO(X)$, if
$\forall x, y\in X: x\neq y \Rightarrow x\ co\ y$, and $|\{\tr{t}\in
T_N\ |\ \exists x \in X, \tr{t} \preceq x\}|$ is finite.
For example, the set $\{\tr{t_1},\tr{t_2}\}$ of firings in $O_1$ of
Figure~\ref{fig:nets} is concurrent, so we can write $CO(\{\tr{t_1},
\tr{t_2}\})$.

The notions of causality, conflict and concurrency for firings are
defined in terms of the corresponding notions for their
transitions.
For example, the firing $(N,m)\red (N,m')$ causes the firing
$(N,m_1)\red[\tr t'] (N,m_1')$, written as $(N,m)\red (N,m')\preceq
(N,m_1)\red[\tr t'] (N,m_1')$, if $\tr t\preceq \tr
t'$.
Correspondingly for the conflict and concurrency relations.

Two firings (or transitions) are \textit{coinitial} if they start with
the same marking, and \textit{cofinal} if they end up in the same
marking.
We now have a simple version of Square Lemma~\cite{rccs} for forward
concurrent firings.  It will be helpful in proving our
Lemma~\ref{lem:SL-all} in Section~4.

\begin{lem}\label{con=noconflict}
Let $O$ be an occurrence net and $\tr t$ and $\tr t'$ be
enabled coinitial transitions of $O$.
Then $\tr t\ co\ \tr t'$ if and only if $\preS{\tr t}\cap \preS{\tr
  t'}=\emptyset$.
\end{lem}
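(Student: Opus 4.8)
The plan is to prove the two implications separately; the left-to-right direction is immediate, while the right-to-left direction carries all the content and rests on the token economy of occurrence nets. For the easy direction, suppose $\tr t\ co\ \tr t'$, which already gives $\tr t\neq\tr t'$ and $\neg(\tr t\#\tr t')$. I argue the preset condition by contraposition: if $\preS{\tr t}\cap\preS{\tr t'}\neq\emptyset$, then, since $\tr t\neq\tr t'$, the two transitions are in immediate conflict, $\tr t\#_0\tr t'$; taking $\tr s_1=\tr t\preceq\tr t$ and $\tr s_2=\tr t'\preceq\tr t'$ in the definition of $\#$ yields $\tr t\#\tr t'$, contradicting $\neg(\tr t\#\tr t')$. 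Hence $\preS{\tr t}\cap\preS{\tr t'}=\emptyset$.

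For the converse, assume $\preS{\tr t}\cap\preS{\tr t'}=\emptyset$ and let $m$ be the common marking at which both are enabled. Since presets are non-empty, disjointness already forces $\tr t\neq\tr t'$. The three remaining obligations, namely $\tr t\not\preceq\tr t'$, $\tr t'\not\preceq\tr t$, and $\neg(\tr t\#\tr t')$, I would derive from a single auxiliary fact, proved by induction on the length of firing sequences from $\minp N$ to $m$: no transition fires twice, and consequently each place receives at most one token throughout the whole computation (its unique producer, guaranteed by the no-backward-conflict clause, fires at most once, or else it is an initial place carrying a single token by 1-safeness). From this I extract two consequences for any transition $\tr u$ enabled at $m$: (a) $\tr u$ itself has not fired on the way to $m$, for otherwise firing it would be a second firing; and (b) for every $\tr s\preceq\tr u$ with $\tr s\neq\tr u$, the transition $\tr s$ has fired before $m$, obtained by descending a causal chain $\tr s\prec\p{c_1}\prec\cdots\prec\p{c_k}\prec\tr u$: the place $\p{c_k}\in\preS{\tr u}$ is marked at $m$, so its unique producer fired, which in turn needed $\p{c_{k-1}}$ marked, and so on down to $\tr s$.

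Now I can close the converse. If $\tr t\preceq\tr t'$ with $\tr t\neq\tr t'$, then (b) applied to $\tr u=\tr t'$ shows $\tr t$ has fired before $m$, contradicting (a) for $\tr u=\tr t$; symmetrically $\tr t'\not\preceq\tr t$. For the conflict clause, suppose $\tr t\#\tr t'$, witnessed by $\tr s_1\preceq\tr t$ and $\tr s_2\preceq\tr t'$ with $\tr s_1\neq\tr s_2$ and a common place $\p b\in\preS{\tr s_1}\cap\preS{\tr s_2}$. The case $\tr s_1=\tr t$ and $\tr s_2=\tr t'$ is excluded by the disjointness hypothesis, so at least one witness, say $\tr s_1$, is a proper causal ancestor and hence has fired by (b), consuming the single token that $\p b$ ever holds. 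Then $\tr s_2$ can neither have fired (a second consumption of $\p b$ would require it to be refilled, hence its producer to fire twice) nor be enabled at $m$ (as then $\p b\in m$, yet $\p b$ is empty after $\tr s_1$), so either subcase is contradictory; the symmetric subcase is identical. Hence $\neg(\tr t\#\tr t')$, and together with the order conditions and $\tr t\neq\tr t'$ we obtain $\tr t\ co\ \tr t'$.

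I expect the main obstacle to be the auxiliary fact itself, i.e.\ converting the structural hypotheses (acyclicity, 1-safeness, absence of backward conflicts, $m=\minp N$) into the operational statement that occurrence-net computations never refire a transition and never refill a place. Establishing (a) and (b) cleanly, and in particular the chain induction in (b) that pins causal ancestors to actual firings, is the delicate point; everything afterwards is routine bookkeeping with disjoint presets and single-token consumption.
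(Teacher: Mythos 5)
Your proof is correct, and it is actually more complete than the paper's own argument, although both rest on the same underlying token-economy idea. The paper's $\Leftarrow$ direction passes from ``$\neg(\tr t\ co\ \tr t')$ and $\preS{\tr t}\cap\preS{\tr t'}=\emptyset$'' directly to ``$\tr t\preceq \tr t'$ or $\tr t'\preceq \tr t$'', silently discarding the inherited-conflict case $\tr t\#\tr t'$, which disjointness of $\preS{\tr t}$ and $\preS{\tr t'}$ alone does not exclude (the immediate conflict may sit between proper ancestors); it then refutes the causal-ordering case by a local argument: the place $\p a\in\preS{\tr t'}$ at the end of the causal chain has a unique producer lying causally above $\tr t$, and since $\tr t$ is still enabled at $m$ that producer cannot have fired, so $\p a\notin m$. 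Even this local step tacitly uses the fact that an enabled transition of an occurrence net cannot have already fired. You instead make the needed invariant explicit and global --- in any run from $\minp O$ no transition fires twice and no place is ever marked twice --- and derive from it your facts (a) and (b), from which all three obligations ($\tr t\not\preceq\tr t'$, $\tr t'\not\preceq\tr t$, and $\neg(\tr t\#\tr t')$, including the inherited-conflict case the paper skips) follow by routine bookkeeping. What the paper's route buys is brevity; what yours buys is rigour and coverage, since the invariant closes both holes in the paper's sketch and is exactly the reusable fact that the paper itself needs implicitly. The only part you leave as a sketch is the induction establishing the invariant, but the ingredients you cite (unique producers from absence of backward conflicts, initial marking equal to $\minp O$, 1-safety) are the right ones, and the induction does go through, for instance by considering the earliest second firing of any transition and showing its preset could not have been refilled.
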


\begin{proof} 
$\Rightarrow$ part: $\tr t\ co\ \tr t'$ implies $\neg \tr t\#_0\tr
  t'$. Hence, $\preS{\tr t}\cap \preS{\tr t'}=\emptyset$.
 
\noindent 
$\Leftarrow$ part: We proceed by contradiction. Assume that $\tr t$
and $\tr t'$ are enabled at some marking $m$ of $O$ and $\neg(\tr
t\ co\ \tr t')$.
Since $\preS{\tr t}\cap \preS{\tr t'}=\emptyset$, either $\tr t\preceq
\tr t'$ or $\tr t'\preceq \tr t$ holds (note that $\tr t\preceq \tr
t'$ and $\tr t'\preceq \tr t$ do not hold simultaneously because
occurrence nets are acyclic).
We consider the case $\tr t\preceq \tr t'$ and show that $t$ is
enabled implies that $t'$ is not enabled.
Since, $\tr t\preceq \tr t'$, there exists $\p a\in\preS{\tr t'}$ such
that $\tr t \prec \p a$.
Therefore, $\p a\not\in\minp O$. Finally, $|\preS{\p{a}}|\leq 1$ and
$\tr t \prec \p a$ imply $\p{a} \not\in m$, which is in contradiction
with the assumption that $\tr t'$ is enabled at $m$.
The case $\tr t'\preceq \tr t$ follows analogously.
\end{proof}

\begin{lem}\label{lem:SL-F}
Let $O$ be an occurrence net and $(O,n) \red (O,n')$ and
$(O,n)\red[\tr t'] (O,n'')$ be coinitial concurrent firings.
Then, there exist firings $(O,n')\red[\tr t'] (O,n'_1)$ and
$(O,n'')\red (O,n_1'')$, for some $n'_1$ and $n_1''$, such that the
firings are cofinal.
\end{lem}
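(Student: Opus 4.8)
The plan is to reduce the whole statement to the disjointness of the two presets, a fact that Lemma~\ref{con=noconflict} supplies directly. Since the firings $(O,n)\red(O,n')$ and $(O,n)\red[\tr t'](O,n'')$ are coinitial and concurrent, we have $\tr t\ co\ \tr t'$, and hence $\preS{\tr t}\cap\preS{\tr t'}=\emptyset$ by Lemma~\ref{con=noconflict}. Everything else will be bookkeeping in the free commutative monoid $\nat^{S_N}$, with $1$-safety entering only to let me treat presets and postsets as sets.

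First I would record the effect of each firing through the firing rule: enabledness of $\tr t$ and $\tr t'$ at $n$ lets me write $n=\preS{\tr t}\oplus m_1$ and $n=\preS{\tr t'}\oplus m_2$, so that $n'=\postS{\tr t}\oplus m_1$ and $n''=\postS{\tr t'}\oplus m_2$. The key step is to split the common source marking. Because $\preS{\tr t}\cap\preS{\tr t'}=\emptyset$, none of the (single) tokens of $\preS{\tr t'}$ lie in $\preS{\tr t}$, so from $\preS{\tr t}\oplus m_1=n=\preS{\tr t'}\oplus m_2$ I get $\preS{\tr t'}\leq m_1$ and may factor $m_1=\preS{\tr t'}\oplus m_0$; cancelling $\preS{\tr t'}$ then yields $m_2=\preS{\tr t}\oplus m_0$ for the same remainder $m_0$. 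Substituting back gives $n'=\postS{\tr t}\oplus\preS{\tr t'}\oplus m_0$ and $n''=\postS{\tr t'}\oplus\preS{\tr t}\oplus m_0$. Now I would close the diamond: the marking $n'$ still contains $\preS{\tr t'}$, so $\tr t'$ is enabled at $n'$ and firing it gives $(O,n')\red[\tr t'](O,n'_1)$ with $n'_1=\postS{\tr t}\oplus\postS{\tr t'}\oplus m_0$; symmetrically $\tr t$ is enabled at $n''$ and $(O,n'')\red(O,n''_1)$ with $n''_1=\postS{\tr t'}\oplus\postS{\tr t}\oplus m_0$. Commutativity of $\oplus$ then gives $n'_1=n''_1$, so the two firings are cofinal, as required.

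The only point that genuinely needs care is the enabledness of the second transition after the first has fired: I must be sure that firing $\tr t$ does not consume a token demanded by $\tr t'$. This is exactly what $\preS{\tr t}\cap\preS{\tr t'}=\emptyset$ guarantees, since it forces $\preS{\tr t'}$ to survive inside the common remainder $m_0$ (and dually for $\tr t$). I expect this disjointness bookkeeping, rather than any deeper property of occurrence nets, to be the crux; the remaining manipulations are routine equalities between markings.
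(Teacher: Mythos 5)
Your proof is correct and follows essentially the same route as the paper's: both reduce concurrency to disjointness of the presets (the paper via "no immediate conflict", you via Lemma~\ref{con=noconflict}) and then close the diamond by multiset bookkeeping in $\nat^{S}$. The only difference is cosmetic: the paper simply asserts the three-way decomposition $n = m \oplus m' \oplus m''$ with $\preS{\tr t}\subseteq m$ and $\preS{\tr t'}\subseteq m'$ disjoint, whereas you derive it explicitly by factoring and cancellation, which is a slightly more rigorous rendering of the same argument.
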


\begin{proof}
Assume that $(O,n) \red (O,n')$ and $(O,n)\red[\tr t'] (O,n'')$ are
coinitial concurrent firings.
That means $\tr t$ and $\tr t'$ are coinitial, concurrent and are
enabled. Hence they do not cause each other, and they are also not in
an immediate conflict.
So, we can write the firings as
$(O, m\oplus m'\oplus m'')\red (O,m_1\oplus m'\oplus m'')$ and
$(O, m\oplus m'\oplus m'')\red[\tr t'](O, m\oplus m_1'\oplus m'')$,
where $\preS{\tr t}\subseteq m$ and $\preS{\tr t'}\subseteq m'$ with
markings $m$ and $m'$ not overlapping.
Clearly then these are valid firings:
$(O, m_1\oplus m'\oplus m'')\red[\tr t'] (O, m_1\oplus m_1'\oplus m'')$
and
$(O, m\oplus m_1'\oplus m'')\red[\tr t] (O, m_1\oplus m_1'\oplus m'')$.
They are cofinal since they end up in $(O, m_1\oplus m_1'\oplus m'')$.
\end{proof}

This lemma can be equivalently expressed as follows: if $\tr t$ and
$\tr t'$ are coinitial and concurrent, then $\tr t;\tr t'$ and $\tr
t';\tr t$ are cofinal.
Informally, if firings with transitions $\tr t$ and $\tr t'$ originate
from one corner of a square, and if they represent independent
(concurrent) events, then the square completes with two firings $\tr
t'$ and $\tr t$, which meet at the opposite corner of the square.
Hence, the order in which concurrent transitions are executed in a
firing sequence does not matter.
We then consider sequences equivalent up to the swapping of concurrent
transitions. This corresponds to considering the set of \m traces
induced by $co$ as the independence relation.

Formally, trace equivalence $\equiv$ is the least congruence over
firing sequences $s$ such that $\forall \tr t_1, \tr{t}_2 : \tr
t_1\ co\ \tr t_2 \implies \tr t_1 ; \tr t_2 \equiv \tr t_2;\tr t_1$.
The equivalence classes of $\equiv$ are the (Mazurkiewicz)
\emph{traces}. We use $\trace$ to range over such traces. We also will
use $\voidt$ for the empty trace, and $;$ for the concatenation
operator.

For occurrence nets we have this standard property:
\[
\fseq_1 \equiv \fseq_2\
\textit{ iff }\
(O,m_0)\red[\fseq_1](O,m_n) \iff (O,m_0)\red[\fseq_2](O,m_n)
\]
Two traces are \textit{coinitial} if they start with the same marking,
and \textit{cofinal} if they end up in the same marking.
Hence, the above property tells us that two traces that are
\textit{coinitial} and \textit{cofinal} are precisely trace
equivalent.

The unfolding of a net $N$ is the least occurrence net that can
account for all the possible computations of $N$ and makes explicit
causal dependencies, conflicts and concurrency between
firings~\cite{NielsenPW81}.
\begin{figure}[t]
  \[
    \begin{array}{l}
      \mathrule{
        ini-mk}{m(\p{a}) = n
      }{
        \{\p{a}(\emptyset, i) \ |\ 1\leq i \leq n\} \subseteq S
      }
      \\[20pt]
      \mathrule{pre}{
        H = \{\p a_j (h_j, i_j)\ |\ j \in J\} \subseteq S
        \qquad
        CO(H)
        \qquad
        \tr{t} \in T_N
        \quad
        \preS{\tr{t}}_N = \oplus_{j\in J} \p a_j
      }{
        \tr{t}(H) \in T ,\quad \preS(\tr{t}(H)) =  H
      }
      \\[10pt]
      \mathrule{post}{
        \quad x = \tr{t}(H) \in T
      }{
        Q=\{\p a (\{x\}, i) \mid 1 \leq i \leq \postS{\tr{t}}_N(\p a)\}
        \subseteq S,\quad  \postS x = Q
      }
    \end{array}
    \]
    \caption{Unfolding rules.}\label{fig:unfolding}
\end{figure}

\begin{defi} [Unfolding]
Let $(N,m)$ be a \textsc{p/t} net. The unfolding of $N$ is the
occurrence net $\mathcal{U}\lbrack N,m\rbrack = (S,T,\preS{\_},
\postS{\_})$ generated inductively by the inference rules in
Figure~\ref{fig:unfolding}, and the folding morphism $(f_S, f_T) :
\mathcal{U}\lbrack N,m\rbrack \rightarrow N$ is defined such that
$f_S(\p a( \_, \_)) = \p a$ and $f_T(\tr t( \_)) = \tr t$.
\end{defi}

Places in the unfolding of a net represent tokens and are named by
triples $\p{a}(H, i)$ where: $\p{a}$ is the place of the original net
$N$ in which the token resides; $H$ is the set of its immediate causes
(i.e., the causal history of the token); and $i$ is a positive integer
used to disambiguate tokens with the same history, i.e., when the
initial marking assigns several tokens to a place or a transition
produces multiple tokens in a place.
Analogously, transitions in the unfolding represent firings or events
and are named by pairs $\tr{t}(H)$, where $H$ encodes the causal
history as above and $\tr{t}$ is the fired transition.

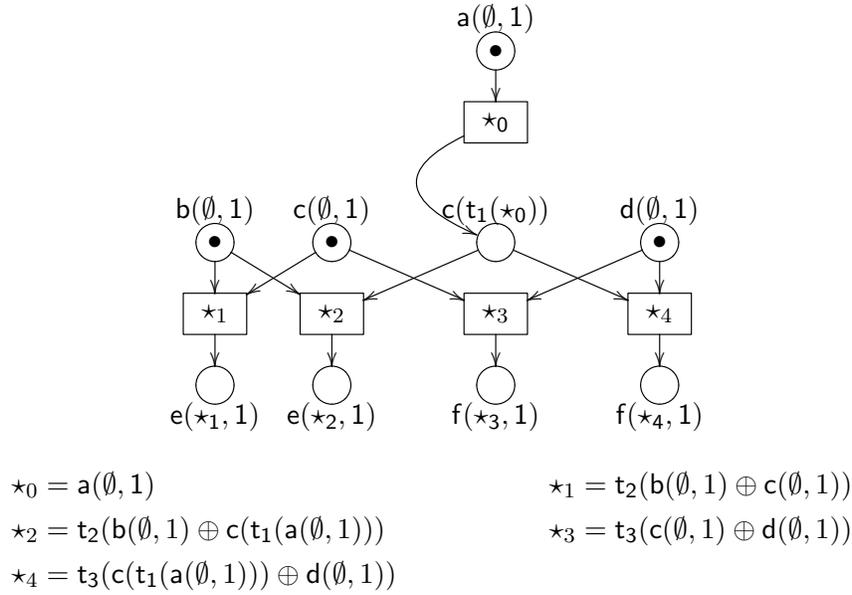
\begin{figure}[t]
  \subfigure{
    \xymatrix@R=1pc@C=.2pc{
      &&
      &
      &&
      &&&
      \drawmarkedplace\ar[d]\nameplaceup{\p{a(\emptyset,1)}}
      \\
      &&
      &
      &&
      &&&
      \drawtrans{\p{\star_0}}\ar@/_2.5pc/[dd]
      \\\\
      \drawmarkedplace\ar[d]\ar[drrr]
      \nameplaceup{\p{b(\emptyset,1)}}
      &&
      &
      \drawmarkedplace\ar[dlll] \ar[drrrrr]
      \nameplaceup{\p{c(\emptyset,1)}}
      &&
      &&&
      \drawplace\ar[dlllll]\ar[drrrrr]
      \nameplaceup{\p{c(\tr{t_1}(\star_0))}}
      &&
      &&&
      \drawmarkedplace\ar[dlllll]\ar[d]
      \nameplaceup{\p{d(\emptyset,1)}} 
      \\
      \drawtrans{\star_1} \ar[d]
      &&
      &
      \drawtrans{\star_2} \ar[d]
      &&
      &&&
      \drawtrans{\star_3} \ar[d]
      &&
      &&&
      \drawtrans{\star_4}\ar[d]
      \\
      \drawplace\nameplacedown{\p{e(\star_1,1)}}
      &&
      &
      \drawplace\nameplacedown{\p{e(\star_2,1)}}
      &&
      &&&
      \drawplace\nameplacedown{\p{f(\star_3,1)}}
      &&
      &&&
      \drawplace\nameplacedown{\p{f(\star_4,1)}} 
    }
  }
  
  \begin{align*}
    &
    \star_0 = \p{a(\emptyset,1)
    }&
    \star_1 = \tr{t_2}(\p{b}(\emptyset,1)\oplus \p{c}(\emptyset,1))
    &
    \\
    &
    \star_2 = \tr{t_2}({
      \p{b}(\emptyset,1)\oplus \p{c}(\tr{t_1}(\p{a}(\emptyset,1))
    }
    )
    &
    \star_3 = \tr{t_3}(\p{c}(\emptyset,1)\oplus \p{d}(\emptyset,1))
    &
    \\
    &
    \star_4 = \tr{t_3}({
      \p{c}(\tr{t_1}(\p{a}(\emptyset,1))) \oplus \p{d}(\emptyset,1)
    }
    )
  \end{align*}
\caption{Unfolding of $N_1$ with histories}
\label{fig:unfold_hist}
\end{figure}

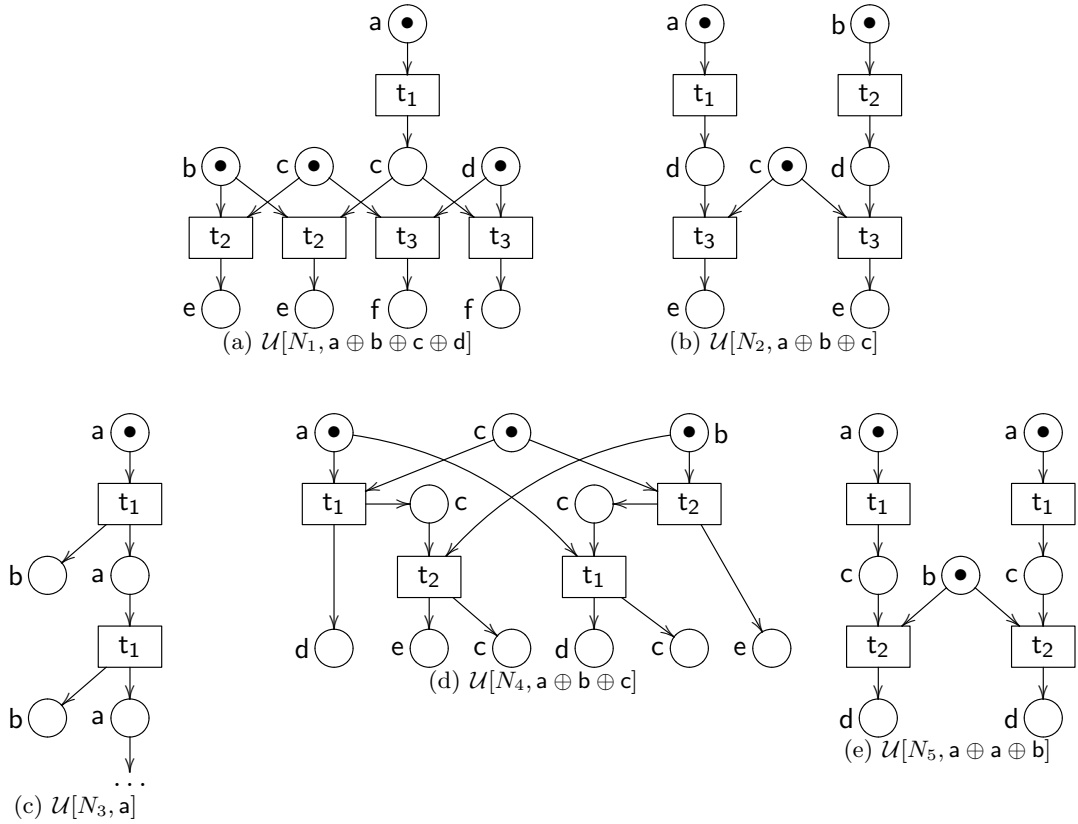
\begin{figure}[t]
  \subfigure[
    $\mathcal{U}\lbrack N_1, \p a\oplus\p b\oplus\p c\oplus\p d\rbrack$]{ 
    \label{fig:UN1}
    $$
    \xymatrix@R=1pc@C=.2pc{
      &&&&
      \drawmarkedplace \ar[d]\nameplaceleft{\p a}
      \\
      &&&&
      \drawtrans{\tr{t_1}}\ar[d]
      \\
      \drawmarkedplace \ar[d] \ar[drr]\nameplaceleft{\p b}
      &&
      \drawmarkedplace \ar[dll] \ar[drr]\nameplaceleft{\p c}
      &&
      \drawplace \ar[dll]\ar[drr]\nameplaceleft{\p c}
      &&
      \drawmarkedplace \ar[d] \ar[dll] \nameplaceleft{\p d}
      \\
      \drawtrans{\tr{t_2}}\ar[d]
      &&
      \drawtrans{\tr{t_2}}\ar[d]
      &&
      \drawtrans{\tr{t_3}}\ar[d]
      &&
      \drawtrans{\tr{t_3}}\ar[d]
      \\
      \drawplace \nameplaceleft{\p e}
      &&
      \drawplace \nameplaceleft{\p e}
      &&
      \drawplace \nameplaceleft{\p f}
      &&
      \drawplace \nameplaceleft{\p f} 
    }
    $$
  }
  \hspace{1cm}
  \subfigure[$\mathcal{U}\lbrack N_2, \p a \oplus \p b \oplus \p c\rbrack$]{ 
    \label{fig:UN2}
    $$
    \xymatrix@R=1pc@C=1pc{
      \drawmarkedplace\ar[d]
      \nameplaceleft {\p a}
      &
      &
      \drawmarkedplace\ar[d]
      \nameplaceleft {\p b}
      \\
      \drawtrans {\tr{t_1}} \ar[d] 
      &
      &
      \drawtrans {\tr{t_2}} \ar[d] 
      \\
      \drawplace\ar[d]
      \nameplaceleft {\p d}
      &
      \drawmarkedplace\ar[dl]\ar[dr]
      \nameplaceleft {\p c}
      &
      \drawplace\ar[d]
      \nameplaceleft {\p d}
      \\
      \drawtrans {\tr{t_3}} \ar[d] 
      &
      &
      \drawtrans {\tr{t_3}} \ar[d] 
      \\
      \drawplace
      \nameplaceleft {\p e}
      &&
      \drawplace
      \nameplaceleft {\p e}
    }
    $$
  }
  \\[.4cm]
  \subfigure[$\mathcal{U}\lbrack N_3, \p a\rbrack$]{ 
    \label{fig:UN3}
    $$
    \xymatrix@R=1pc@C=1pc{
      &
      \drawmarkedplace\ar[d]
      \nameplaceleft {\p a}
      \\
      &
      \drawtrans {\tr{t_1}}  \ar[d]\ar[dl]
      \\
      \drawplace
      \nameplaceleft {\p b}
      &
      \drawplace\ar[d]
      \nameplaceleft {\p a}
      \\
      &
      \drawtrans {\tr{t_1}}  \ar[d]\ar[dl]
      \\
      \drawplace
      \nameplaceleft {\p b}
      &
      \drawplace\ar[d]
      \nameplaceleft {\p a}
      \\
      &{\ldots} }
    $$
  }
  \hspace{1cm}
  \subfigure[$\mathcal{U}\lbrack N_4, \p a \oplus \p b \oplus \p c \rbrack$]{ 
    \label{fig:UN4}
    $$
    \xymatrix@R=1pc@C=1pc{
      \drawmarkedplace \ar[d]\ar@/^1pc/[ddrrr]
      \nameplaceleft {\p a}
      &&
      \drawmarkedplace \ar[dll] \ar[drr]
      \nameplaceleft {\p c}
      &&
      \drawmarkedplace \ar[d] \ar@/_1pc/[ddlll]
      \nameplaceright {\p b}
      \\
      \drawtrans {\tr{t_1}}  \ar[dd] \ar[r]
      &
      \drawplace \ar[d]\nameplaceright {\p c}
      &&
      \drawplace \ar[d]\nameplaceleft {\p c}
      &
      \drawtrans {\tr{t_2}}  \ar[l]\ar[ddr]
      \\
      &
      \drawtrans {\tr{t_2}} \ar[d] \ar[dr]
      &&
      \drawtrans {\tr{t_1}} \ar[d] \ar[dr]
      \\
      \drawplace \nameplaceleft {\p d}
      &
      \drawplace \nameplaceleft {\p e}
      &
      \drawplace \nameplaceleft {\p c}
      &
      \drawplace \nameplaceleft {\p d}
      &
      \drawplace \nameplaceleft {\p c}
      &
      \drawplace \nameplaceleft {\p e}   
    }
    $$
  }
  \subfigure[$\mathcal{U}\lbrack N_5, \p a \oplus \p a \oplus \p b\rbrack$]{ 
    \label{fig:UN5}
    $$
    \xymatrix@R=1pc@C=1pc{
      \drawmarkedplace\ar[d]
      \nameplaceleft {\p a}
      &
      &
      \drawmarkedplace\ar[d]
      \nameplaceleft {\p a}
      \\
      \drawtrans {\tr{t_1}} \ar[d] 
      &
      &
      \drawtrans {\tr{t_1}} \ar[d] 
      \\
      \drawplace\ar[d]
      \nameplaceleft {\p c}
      &
      \drawmarkedplace\ar[dl]\ar[dr]
      \nameplaceleft {\p b}
      &
      \drawplace\ar[d]
      \nameplaceleft {\p c}
      \\
      \drawtrans {\tr{t_2}} \ar[d] 
      &
      &
      \drawtrans {\tr{t_2}} \ar[d] 
      \\
      \drawplace
      \nameplaceleft {\p d}
      &&
      \drawplace
      \nameplaceleft {\p d}
    }
    $$
  }
  \caption{Unfoldings of \textsc{p/t} nets}\label{fig:unfoldings}
\end{figure}

\begin{exa}
The unfolding of the net $(N_1, \p a \oplus \p b \oplus \p c \oplus \p
d)$ in Figure~\ref{fig:nets} is given in Figure~\ref{fig:unfold_hist}.
\end{exa}
Henceforth, we adopt the usual convention of omitting causal histories
in the name of places and transitions when depicting unfoldings, and
can write instead the image of the unfolding morphisms, as illustrated
by the following example.

\begin{exa}
The unfoldings of the nets $(N_1, \p a \oplus \p b \oplus \p c \oplus
\p d)$, $(N_2,\p a \oplus \p b \oplus \p c)$, $(N_3, \p a)$, $(N_4,\p
a \oplus \p b \oplus \p c)$ and $(N_4,\p a \oplus \p a \oplus \p b)$
in Figure~\ref{fig:nets} are shown in
Figure~\ref{fig:unfoldings}. Since $O_1$ in Figure~\ref{fig:nets} is
an occurrence net its unfolding is isomorphic to $O_1$, thus it is
omitted.
Consider the occurrence net $\mathcal{U}\lbrack N_1, \p a \oplus \p b
\oplus \p c \oplus \p d\rbrack$.  The leftmost transition $\tr t_2$ is
different from the other transition $\tr t_2$ since they have
different histories: the leftmost $\tr t_2$ is caused by the tokens in
$\p b$ and $\p c$ (which are available in the initial marking),
whereas the other $\tr t_2$ is caused only by the token in $\p b$ and
the token that is produced by the firing of $\tr t_1$.
Correspondingly, for the two transitions labelled $\tr t_3$.  Consider
$\mathcal{U}\lbrack N_2, \p a \oplus \p b \oplus \p c\rbrack$. After
the transitions $\tr t_1$ and $\tr t_2$ have fired, there is a token
in each of the places labelled $\p d$. The token in the leftmost $\p
d$ has the history corresponding to the firing of $\tr t_1$ and the
token in the other $\p d$ has the history corresponding to $\tr t_2$.
Once $\tr t_3$ has fired, we can tell the copies of $\tr t_3$ apart by
inspecting their histories: the leftmost $\tr t_3$ is caused by a
token in $\p d$ with the history $\tr t_1$ (as well as the token in
$\p c$), whereas the other $\tr t_3$ is caused by $\p d$ with the
history $\tr t_2$ and by $\p c$.
\end{exa}

\section{Reversing Occurrence Nets}

\begin{defi}\label{lbl:rev_occ}
Let $O$ be an occurrence net. The reversible version of $O$ is $\rev O
= (S_{\rev O}, T_{\rev O}, \preS{\_}_{\rev O}, \postS{\_}_{\rev O})$
defined as follows:
\[
\begin{array}{l@{\hspace{1cm}}l}
  S_{\rev O} = S_O
  &
  T_{\rev O} = T_O \cup \{\rev {\tr t} \ |\ \tr{t} \in T_O\}
  \\[10pt]
  \preS{\tr t}_{\rev O} = 
  \begin{cases}
    \preS{\tr t}_O & \mbox{if}\ \tr t\in T_O \\
    \postS{\tr t}_O & \mbox{otherwise}
  \end{cases}
  &
  \postS{\tr t}_{\rev O} = 
  \begin{cases}
    \postS{\tr t}_O & \mbox{if}\ \tr t\in T_O \\
    \preS{\tr t}_O & \mbox{otherwise}
  \end{cases}
\end{array}
\]
\end{defi}
Given a transition $\tr{t}$ we write $\rev {\tr t}$ for a transition
that reverses $\tr t$.
We shall call transitions like $\rev {\tr t_1}$ and $\rev {\tr t_2}$
in Figure~\ref{fig:rev-nets} \emph{reverse} (or backwards) transitions
and transitions like $\tr{t}$ \emph{forward} transitions.
We will use $t, t_1$, $t_2, \ldots$ to denote forward or reverse
transitions.  If $t$ is a reverse transition, say $\rev{\tr{t}}$, then
$\rev{t}$ is the forward transition $\tr{t}$.

\begin{exa}
The reversible version of the nets in Figure~\ref{fig:nets} are shown
in Figure~\ref{fig:rev-nets}.
We remark that they are the reversible versions of the nets in
Figure~\ref{fig:unfoldings}, which are the unfoldings of the original
nets.
\begin{figure}[ht]
  \subfigure[$(\rev{O_1},\p a\oplus \p b)$]{ 
    \label{fig:revO1}
    $$
    \xymatrix@R=1pc@C=1pc{
      \drawmarkedplace\ar[d]
      \nameplaceleft {\p a}
      &&
      \drawmarkedplace\ar[d]
      \nameplaceleft {\p b}
      &  
      \\
      \drawtrans {\tr{t_1}} \ar[d] 
      &
      \drawtrans {\rev{\tr{t_1}}} \ar@{-->}@/_2ex/[ul]
      &
      \drawtrans {\tr{t_2}} \ar[d] 
      &
      \drawtrans {\rev{\tr{t_2}}} \ar@{-->}@/_2ex/[ul]
      \\
      \drawplace\ar@{-->}@/_2ex/[ur] 
      \nameplaceleft {\p c}  
      &&
      \drawplace \ar@{-->}@/_2ex/[ur] 
      \nameplaceleft {\p d}
    }
    $$
  }
  \hspace{1.5cm}
  \subfigure[$(\rev{N_1},\p a \oplus \p b \oplus \p c \oplus \p d$)]{ 
    \label{fig:RevN1}
    $$
    \xymatrix@R=1pc@C=.8pc{
      &&&&
      \drawmarkedplace \ar@{<--}[dl] \ar[d]\nameplaceleft{\p a}
      \\
      \drawtrans{\rev{\tr{t_2}}}\ar@{-->}[d]\ar@{-->}[drr] 
      &&
      \drawtrans{\rev{\tr{t_2}}}\ar@{-->}[drr]\ar@{-->}[dll] 
      &
      \drawtrans{\rev{\tr{t_1}}}\ar@{<--}[dr]& \drawtrans{\tr{t_1}}\ar[d]
      &&
      \drawtrans{\rev{\tr{t_3}}} \ar@{-->}[d] \ar@{-->}[dll]
      \\
      \drawmarkedplace \ar[d] \ar[drr]\nameplaceleft{\p b}
      &&
      \drawmarkedplace \ar[dll] \ar[drr]\nameplaceleft{\p c}
      &&
      \drawplace \ar[dll]\ar[drr]\nameplaceleft{\p c}
      &&
      \drawmarkedplace \ar[d] \ar[dll] \nameplaceleft{\p d}
      \\
      \drawtrans{\tr{t_2}}\ar[d]
      &&
      \drawtrans{\tr{t_2}}\ar[d]
      &&
      \drawtrans{\tr{t_3}}\ar[d]
      &&
      \drawtrans{\tr{t_3}}\ar[d]
      \\
      \drawplace  \ar@{-->}@/^6ex/[uuu]\nameplaceleft{\p e}
      &&
      \drawplace  \ar@{-->}@/^6ex/[uuu]\drawplace \nameplaceleft{\p e}
      &&
      \drawplace \nameplaceleft{\p f}
      &&
      \drawplace  \ar@{-->}@/_6ex/[uuu] \nameplaceleft{\p f}
      \\
      &&&&
      \drawtrans{\rev{\tr{t_3}}}\ar@{<--}[u] \ar@{-->}[uuull]
      \ar@{<--}[u] \ar@{-->}[uuurr]
      &&
    }
    $$
  }
  \subfigure[($\rev{N_2}, \p a \oplus \p b \oplus \p c)$]{ 
    \label{fig:RevN2}
    $$
    \xymatrix@R=1pc@C=1pc{
      &
      \drawmarkedplace\ar[d]
      \nameplaceleft {\p a}
      &
      &
      \drawmarkedplace\ar[d]
      \nameplaceleft {\p b}
      \\
      \drawtrans {\rev{\tr{t_1}}}  \ar@{-->}[ur]  \ar@{<--}[dr]
      &
      \drawtrans {\tr{t_1}} \ar[d] 
      &
      &
      \drawtrans {\tr{t_2}} \ar[d]
      & 
      \drawtrans {\rev{\tr{t_2}}} \ar@{-->}[ul]  \ar@{<--}[dl]
      \\
      &
      \drawplace\ar[d]
      \nameplaceleft {\p d}
      &
      \drawmarkedplace\ar[dl]\ar[dr]
      \nameplaceleft {\p c}
      &
      \drawplace\ar[d]
      \nameplaceleft {\p d}
      \\
      &
      \drawtrans {\tr{t_3}} \ar[d] 
      &
      &
      \drawtrans {\tr{t_3}} \ar[d] 
      \\
      &
      \drawplace \ar@{-->}[d] \nameplaceleft  {\p e}
      &&
      \drawplace  \ar@{-->}[d]  \nameplaceleft {\p e}
      \\
      &
      \drawtrans {\rev{\tr{t_3}}} \ar@{-->}@/_2ex/[uuur]
      \ar@{-->}@/^5ex/[uuu]   
      &&
      \drawtrans {\rev{\tr{t_3}}}\ar@{-->}@/^2ex/[uuul] \ar@{-->}@/_5ex/[uuu]
    }
    $$
  }
  \hspace{2cm}
  \subfigure[($\rev{N_3}, \p a$)]{ 
    \label{fig:RevN3}
    $$
    \xymatrix@R=1pc@C=1pc{
      &&
      \drawmarkedplace\ar[d]
      \nameplaceleft {\p a}
      \\
      &
      \drawtrans {\rev{\tr{t_1}}} \ar@{-->}[ur]  
      &
      \drawtrans {\tr{t_1}}  \ar[d]\ar[dl]
      \\
      &
      \drawplace \ar@{-->}[u]
      \nameplaceleft {\p b}
      &
      \drawplace\ar[d]\ar@{-->}[ul]
      \nameplaceleft {\p a}
      \\
      &
      \drawtrans {\rev{\tr{t_1}}} \ar@{-->}[ur]  
      &
      \drawtrans {\tr{t_1}}  \ar[d]\ar[dl]
      \\
      &
      \drawplace \ar@{-->}[u]
      \nameplaceleft {\p b}
      &
      \drawplace\ar[d]\ar@{-->}[ul]
      \nameplaceleft {\p a}
      \\
      &
      \ldots \ar@{-->}[ur] 
      &
      \ldots
    }
    $$
  }
  \subfigure[$ (\rev{N_4}, \p a \oplus \p b \oplus \p c \rbrack$]{ 
    \label{fig:RevN4}
    $$
    \xymatrix@R=1.3pc@C=1.5pc{
      \drawtrans{\rev{\tr{t_1}}} \ar@{-->}[d]  \ar@{-->}@/^1pc/[drr]
      &&
      &&  
      \drawtrans{\rev{\tr{t_2}}} \ar@{-->}[d] \ar@{-->}@/_1pc/[dll]
      \\
      \drawmarkedplace \ar[d] \ar@/^1pc/[ddrrr]
      \nameplaceleft {\p a}
      &&
      \drawmarkedplace \ar[dll] \ar[drr]
      \nameplaceleft {\p c}
      &&
      \drawmarkedplace \ar[d] \ar@/_1pc/[ddlll]
      \nameplaceright {\p b}
      \\
      \drawtrans {\tr{t_1}}  \ar[dd] \ar[r]
      &
      \drawplace \ar[d] \ar@{-->}[uul] \nameplaceright {\p c}
      &&
      \drawplace \ar[d] \ar[d] \ar@{-->}[uur] \nameplaceleft {\p c}
      &
      \drawtrans {\tr{t_2}}  \ar[l]\ar[ddr]
      \\
      &
      \drawtrans {\tr{t_2}} \ar[d] \ar[dr]
      &&
      \drawtrans {\tr{t_1}} \ar[d] \ar[dr]
      \\
      \drawplace  \ar@{-->}@/^5ex/[uuuu]
      \nameplaceleft {\p d}
      &
      \drawplace \ar@{-->}[d] \nameplaceleft {\p e}
      &
      \drawplace \ar@{-->}[dl] \nameplaceleft {\p c}
      &
      \drawplace \ar@{-->}[d]  \nameplaceleft {\p d}
      &
      \drawplace \ar@{-->}[dl] \nameplaceleft {\p c}
      &
      \drawplace \ar@{-->}@/_5ex/[uuuul] \nameplaceleft {\p e}
      \\
      &
      \drawtrans {\rev{\tr{t_2}}} \ar@{-->}@/^5ex/[uuu]
      \ar@{-->}@/^5ex/[uuuurrr] 
      &&
      \drawtrans {\rev{\tr{t_1}}} \ar@{-->}@/_4ex/[uuu]
      \ar@{-->}@/_5ex/[uuuulll]
    }
    $$
  }
  \caption{Reversible \textsc{p/t} and Petri nets}\label{fig:rev-nets}
\end{figure}
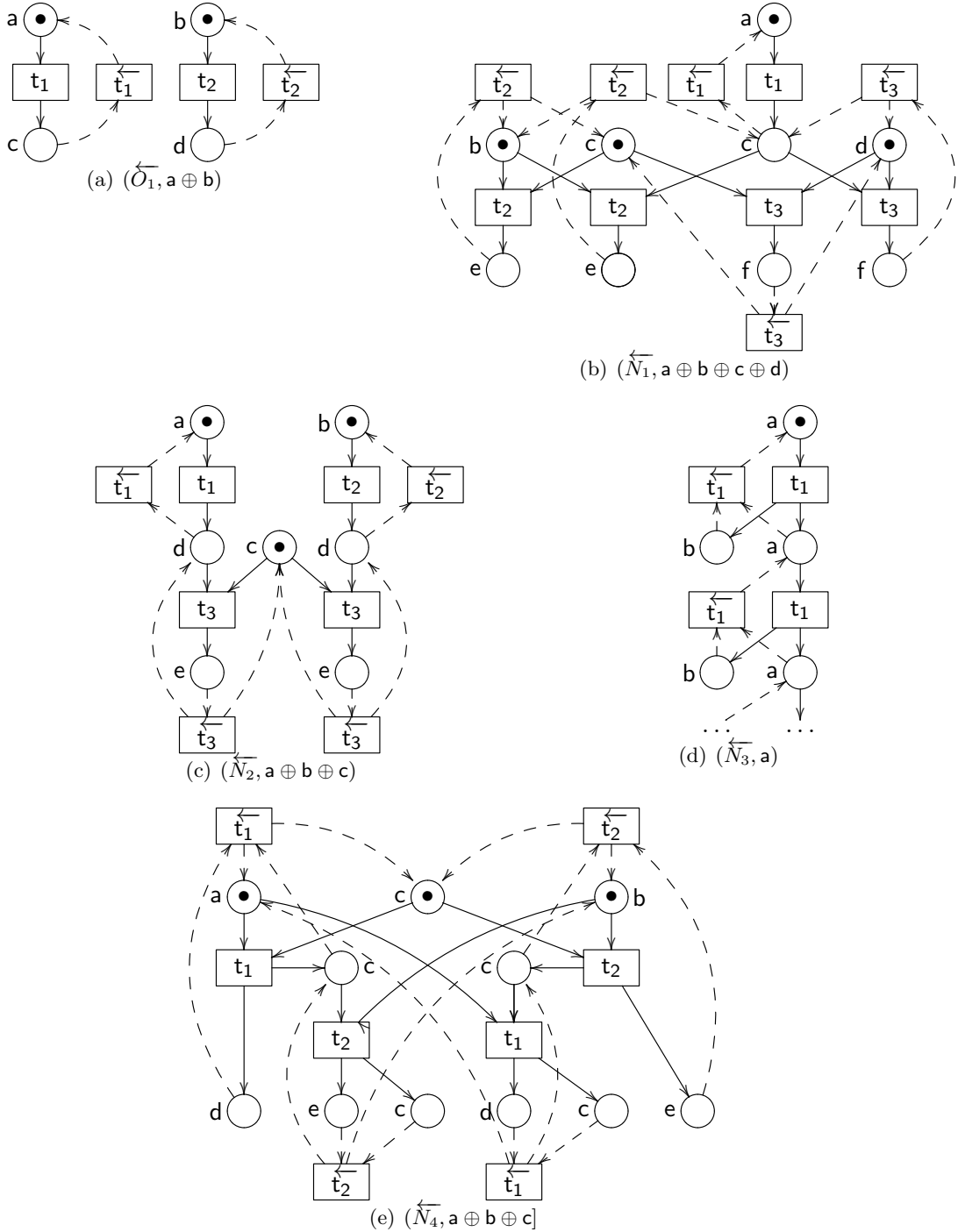

\end{exa}

Given $\rev O$, we write $(\rev O, m) \fred[t] (\rev O,m')$ for a
forward firing when $t\in T_O$, and $(\rev O, m) \bred[t] (\rev O,m')$
for the reverse (or backward) firing when $t\not\in T_O$.
We let $\red[t]$ be $\fred[t]\cup\bred[t]$.
Henceforth we often refer to firings $(\rev O, m) \red[t] (\rev
O,m')$, $(\rev O, m) \fred (\rev O,m')$ and $(\rev O, m)
\bred[\rev{\tr{t}}] (\rev O,m')$ simply as $t$, $\tr{t}$ and
$\rev{\tr{t}}$ respectively, especially when discussing properties of
$\red[t]$ in this section and in Section~\ref{sec:prop}.
We shall work with sequences of firings, ranged over by $s, s_1$ and
$s_2$.
We say that a sequence is a {\em forward} (respectively {\em
  backward}) {\em sequence} when all its firings are forward
(respectively backward).

Next, we extend the notions of causality, conflict and concurrency to
forward transitions and reverse transitions in reverse
versions of occurrence nets.
We extend $\prec$ in Definition~\ref{def:causaldep} to cover reverse 
transitions in an obvious way using Definition~\ref{lbl:rev_occ}.
As a result, we obtain $\tr{t}\preceq \rev{\tr{t}}$ and
$\rev{\tr{t}}\preceq \tr{t}$. We note that although $\preceq$ is now
circular, there are no loops of forward transitions only, and all
loops involve forward transitions followed by their reverse versions
(in the inverse order).

The conflict relation for forward transitions or reverse
transitions is defined correspondingly as in Section 2.
There is an immediate conflict between $t$ and $t'$, written as $t\#_0
t'$, if $\preS{t} \cap \preS{t'}\neq \emptyset$.
When $t$ and $t'$ are forward transitions, then $t\#_0 t'$ is as in
Section~\ref{sec:unf}.
If $t$ is $\tr t$ and $t'$ is $\rev{\tr{t'}}$, then $t\#_0 t'$ is
$\preS{\tr{t}} \cap \preS{\rev{\tr{t'}}} \neq \emptyset$, which is
equivalent to $\postS{\tr{t}} \cap \postS{\tr{t'}} \neq \emptyset$,
that trivially holds in occurrence nets.
Hence, the immediate conflict relation is empty between reverse
transitions.
The \emph{conflict} relation $\#$ between $t$ and $t'$ is as in in
Section~\ref{sec:unf}: $ t\# t'$ if there are forward transitions $\tr
t_1, \tr t_1' \in T_O$ such that $\tr t_1 \preceq t$, $\tr t_1'
\preceq t'$, and $\tr t_1 \#_0 \tr t_1'$.

Having introduced the notions of causality and conflict, we can now
define concurrent transitions in $\rev O$.
We follow the definition in Section~\ref{sec:unf}: $t, t'$ are
\emph{concurrent}, written $t\ co\ t'$, if $t\neq t'$ and
$t\not\preceq t'$, $t'\not\preceq t$, and $\neg t \# t'$.

\begin{lem}\label{mixedcon}
Let $t$ and $t'$ be enabled coinitial forward or reverse
transitions of a reversible occurrence net.
Then $t\ co\ t'$ if and only if 
(a) $\neg (t\ \#_0 t')$ if $t,t'$ are forward transitions, 
(b) $t\not\preceq t'$ and $t'\not\preceq t$ if $t,t'$ are reverse
transitions, and
(c) $t'\not\preceq t$ and $\rev{t'}\not\preceq t$ if $t$ is a
forward transition and $t'$ is a reverse transition.
\end{lem}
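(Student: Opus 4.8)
The plan is to prove the biconditional in Lemma~\ref{mixedcon} by a case analysis on whether $t, t'$ are forward or reverse, reducing each case to the already-established characterisations. The key observation is that $t \co t'$ unfolds by definition into three conjuncts: $t \neq t'$, mutual non-causality ($t \not\preceq t'$ and $t' \not\preceq t$), and non-conflict ($\neg\, t \# t'$). So in each case I would show that, under the hypothesis that $t$ and $t'$ are \emph{enabled} and \emph{coinitial}, the stated condition is equivalent to the conjunction of these three.

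For case (a), where both are forward transitions, the equivalence is essentially Lemma~\ref{con=noconflict}: enabled coinitial forward transitions satisfy $\tr t \co \tr t'$ iff $\preS{\tr t} \cap \preS{\tr t'} = \emptyset$, i.e.\ iff $\neg(\tr t \#_0 \tr t')$. I would just cite that lemma, noting that $\neg(t \#_0 t')$ already subsumes non-causality by the argument in its proof (sharing no preset place together with enabledness forces concurrency). For case (b), both reverse, I would use the fact established just above the lemma that the immediate-conflict relation is \emph{empty} between reverse transitions; hence $\neg\, t \# t'$ reduces to $\neg(\rev t \#_0 \rev{t'})$ on the underlying forward transitions, and since $t,t'$ are enabled reverse firings their presets (which are the \emph{postsets} $\postS{\rev t}, \postS{\rev{t'}}$) carry tokens, so the conflict condition holds automatically; what remains to be checked is exactly mutual non-causality $t \not\preceq t'$ and $t' \not\preceq t$, which is the stated condition. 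For case (c), with $t = \tr t$ forward and $t' = \rev{\tr{t'}}$ reverse, I would expand $t \co t'$ and show that, given enabledness and coinitiality, the conditions $t \neq t'$ and $\neg\, t \# t'$ are either automatic or follow from the listed causal conditions, leaving precisely $t' \not\preceq t$ and $\rev{t'} \not\preceq t$.

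The main obstacle I expect is case (c), the mixed case, because the extended $\preceq$ is now circular ($\tr t \preceq \rev{\tr t}$ and vice versa), so one must be careful about which causal chains are genuinely present and which are artefacts of the reverse arrows. The delicate point is the asymmetry: only $t' \not\preceq t$ and $\rev{t'} \not\preceq t$ appear, but not $t \not\preceq t'$. I would argue that $t \preceq t'$ (forward into the reverse of $\tr{t'}$) is ruled out by enabledness: if $\tr t \preceq \rev{\tr{t'}}$, then since $\preS{\rev{\tr{t'}}} = \postS{\tr{t'}}$, a forward-causal chain from $\tr t$ to a token consumed by $\rev{\tr{t'}}$ would conflict with $\rev{\tr{t'}}$ being enabled coinitially with $\tr t$ — using acyclicity of the forward part and absence of backward conflicts, much as in the proof of Lemma~\ref{con=noconflict}. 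So the work is in showing that the omitted conjunct is \emph{redundant} under the hypotheses, which is why it does not appear in the stated condition.

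Throughout I would lean on the standing assumption that $t$ and $t'$ are enabled at a common marking $m$; this is what lets me convert abstract causal statements into concrete token-availability statements and thereby discharge the conflict conjuncts, exactly as in the $\Leftarrow$ direction of Lemma~\ref{con=noconflict}.
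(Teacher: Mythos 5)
Your proposal is correct and follows essentially the same route as the paper's own proof: the same three-way case split, with case (a) discharged by citing Lemma~\ref{con=noconflict}, case (b) by the observation that an immediate conflict between reverse transitions would be a backward conflict of their forward versions (with enabledness/coinitiality ruling out inherited conflicts), and case (c) by noting that absence of immediate conflict amounts to $\rev{t'}\not\preceq t$ while the omitted conjunct $t\not\preceq t'$ is automatic because it would entail a backward conflict, which occurrence nets forbid. Nothing further is needed.
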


\begin{proof}
$\Rightarrow$ part: This is obvious by the definition of $co$ and
  Lemma~\ref{con=noconflict}.

\noindent
$\Leftarrow$ part: We have three cases. If $t$ and $t'$ are forward
transitions, then we are done by Lemma~\ref{con=noconflict}.
If $t,t'$ are reverse transitions, then since they are initial and
enabled, they cannot be in conflict unless they are in an immediate
conflict.
However, an immediate conflict of two reverse transitions is
equivalent to a backward conflict of their forward versions, which is
not allowed in occurrence nets.
Hence $t\ co\ t'$ as long as they do not cause each other.
Finally, consider a forward transition $t$ and a reverse
transition $t'$.
Then $t\not\preceq t'$, one of the conditions for $t\ co\ t'$, is
equivalent to no backward conflict between $t$ and $\rev{t'}$, which
is guaranteed in occurrence nets.
No immediate conflict between $t$ and $t'$ is equivalent to $\rev{t'}
\not\preceq t$ and, with $t'\not\preceq t$ given, we obtain
$t\ co\ t'$.
\end{proof}

The next result, which follows by the definition of the concurrency
relation on forward and reverse transitions, will be helpful in the
next section when we consider consecutive concurrent transitions.
\begin{lem}\label{con-rev}
Let $t;t'$ be a trace of a reversible occurrence net. Then,
$t\ co\ t'$ if and only if $\rev{t}\ co\ t'$.
\end{lem}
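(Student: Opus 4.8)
The relation $co$ is defined purely in terms of $\neq$, $\preceq$ and $\#$, and makes no reference to enabledness. The plan is therefore to show that each of the four clauses defining $t\ co\ t'$ is preserved when $t$ is replaced by $\rev{t}$, exploiting the single observation recorded just after the extension of $\prec$ to reverse transitions: $t$ and $\rev{t}$ are \emph{mutually} causally related, i.e.\ $t\preceq\rev{t}$ and $\rev{t}\preceq t$. Since $\rev{\rev{t}}=t$, the two directions of the stated equivalence are symmetric, so it suffices to argue one of them carefully and then swap the roles of $t$ and $\rev{t}$.

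First I would establish the pivotal consequence of mutual causality: because $\preceq$ is transitive and $t\preceq\rev{t}\preceq t$, for \emph{every} element $x$ of the net we have $x\preceq t \iff x\preceq\rev{t}$ and $t\preceq x\iff\rev{t}\preceq x$ (e.g.\ $x\preceq t\preceq\rev{t}$ gives one inclusion, $x\preceq\rev{t}\preceq t$ the other). From these two equivalences the causal clauses transfer at once: $t\not\preceq t'$ holds iff $\rev{t}\not\preceq t'$, and $t'\not\preceq t$ holds iff $t'\not\preceq\rev{t}$. The distinctness clause also follows, since it cannot be assumed separately: if $\rev{t}=t'$ then $t'=\rev{t}\preceq t$, contradicting $t'\not\preceq t$, so $t\neq t'$ forces $\rev{t}\neq t'$ (and symmetrically $\rev{t}\neq t'$ together with $t\preceq\rev{t}$ forces $t\neq t'$).

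Next I would treat the conflict clause. By definition $t\# t'$ asks for forward transitions $\tr t_1\preceq t$ and $\tr t_1'\preceq t'$ with $\tr t_1 \#_0 \tr t_1'$. Restricting the equivalence $x\preceq t\iff x\preceq\rev{t}$ to forward transitions $x$ shows that $t$ and $\rev{t}$ have exactly the same forward causes; hence a witnessing pair for $t$ exists iff one exists for $\rev{t}$, giving $t\# t' \iff \rev{t}\# t'$ and in particular $\neg\,t\# t' \iff \neg\,\rev{t}\# t'$. Combining this with the causal and distinctness clauses, all four conditions of $t\ co\ t'$ hold precisely when the corresponding conditions of $\rev{t}\ co\ t'$ do, which is the claim.

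The argument is essentially structural, so I do not anticipate a genuine obstacle; the only point demanding care is that the distinctness requirement must be \emph{derived} from the causal clauses rather than assumed, as $t$ and $\rev{t}$ are syntactically distinct transitions. The hypothesis that $t;t'$ is a trace only fixes the setting in which the lemma will be applied in the next section, since the equivalence of $t\ co\ t'$ and $\rev{t}\ co\ t'$ is a property of the causal and conflict structure alone.
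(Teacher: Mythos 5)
Your proof is correct and takes essentially the same route as the paper, which offers no explicit proof and merely asserts that the lemma ``follows by the definition of the concurrency relation on forward and reverse transitions'': your argument --- using the mutual causality $t \preceq \rev{t} \preceq t$ together with transitivity to show that $t$ and $\rev{t}$ stand in exactly the same $\preceq$-relations to every element (hence have the same forward causes and the same conflicts), and deriving the distinctness clause $\rev{t}\neq t'$ from the causal clauses rather than assuming it --- is precisely the careful unfolding of that definition. Your closing remark is also accurate: since $co$ is purely structural (causality and conflict only), the hypothesis that $t;t'$ is a trace plays no role in the proof and merely reflects how the lemma is applied later.
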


Definition~\ref{lbl:rev_occ} and the discussion above allow us to show
that $\rev O$ is a conservative extension of $O$.

\begin{lem} 
\label{lem:conservative-occurrence}
$(O, m) \red (O,m')$ if and only if $(\rev O, m) \fred (\rev O,m')$.
\end{lem}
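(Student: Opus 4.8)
The plan is to prove both directions by directly unfolding the single (firing) inference rule on each net and appealing to Definition~\ref{lbl:rev_occ}. The key observation is that a \emph{forward} firing of $\rev O$ is, by the very definition of $\fred$, one labelled by a transition $\tr t \in T_O$, and that such a transition carries exactly the same preset and postset in $\rev O$ as it does in $O$; hence the two firing relations, when restricted to forward transitions, are literally the same relation.

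First I would record the two facts I need from Definition~\ref{lbl:rev_occ}. Since $S_{\rev O} = S_O$, markings of $O$ and of $\rev O$ are the same multisets over the same set of places, so $m$ and $m'$ may be read on both sides without any translation. Moreover, for every forward transition $\tr t \in T_O \subseteq T_{\rev O}$, the first clause of each case in the definition gives $\preS{\tr t}_{\rev O} = \preS{\tr t}_O$ and $\postS{\tr t}_{\rev O} = \postS{\tr t}_O$. Thus the data that the (firing) rule consumes for a forward transition is identical in the two nets.

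For the ($\Rightarrow$) direction I would assume $(O,m)\red (O,m')$. By the (firing) rule there is a transition $\tr t = m_1 \pntrans m_1' \in T_O$, with $m_1 = \preS{\tr t}_O$ and $m_1' = \postS{\tr t}_O$, and a multiset $m''$ such that $m = m_1\oplus m''$ and $m' = m_1'\oplus m''$. Since $\tr t\in T_O\subseteq T_{\rev O}$ and its preset and postset are unchanged in $\rev O$, the same instance of the (firing) rule applies in $\rev O$ and yields $(\rev O,m)\red (\rev O,m')$; as the fired transition lies in $T_O$, this firing is forward, i.e. $(\rev O,m)\fred (\rev O,m')$. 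The ($\Leftarrow$) direction is symmetric: a forward firing $(\rev O,m)\fred (\rev O,m')$ is by definition labelled by some $\tr t\in T_O$, and re-reading its (preserved) preset and postset in $O$ recovers the firing $(O,m)\red (O,m')$.

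I do not expect a genuine obstacle here; the proof is a direct reading of the (firing) rule on both nets. The only point that needs to be stated explicitly, rather than computed, is that a forward firing of $\rev O$ is precisely one carrying a label in $T_O$, so that the reverse transitions added in Definition~\ref{lbl:rev_occ} are never involved on the $\fred$ side, and that the preservation of presets and postsets for forward transitions is exactly the content of the first clause of that definition.
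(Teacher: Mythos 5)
Your proof is correct and matches the paper's intent exactly: the paper states this lemma without a written proof, presenting it as an immediate consequence of Definition~\ref{lbl:rev_occ} (since $S_{\rev O}=S_O$ and forward transitions keep their presets and postsets, the forward firing relation of $\rev O$ is literally the firing relation of $O$). Your write-up simply makes that implicit argument explicit, including the one point worth stating — that a forward firing of $\rev O$ is by definition one labelled by a transition in $T_O$ — so there is nothing to add or correct.
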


In general, a reversible occurrence net is not an occurrence net.
This is because adding reverse transitions introduces cycles and
backward conflict.
Consider $N_1$ in Figure~\ref{fig:nets}.  We notice that initially
$\tr t_1$ and $\tr t_2$ are in conflict.
Then, in $\rev N_1$ in Figure~\ref{fig:rev-nets}, the place $\p c$
that contains a token has two reverse transitions in its preset,
namely $\rev{\tr t_2}$ and $\rev{\tr t_3}$, hence there is a backward
conflict.

\section{Properties}\label{sec:prop}

We now study the properties of the reversible versions of occurrence
nets. We follow here the approach of Danos and Krivine \cite{rccs} and
work with a transition system defined in Section~3.
The aim is to show that reversibility in reversible occurrence nets
is \emph{causal-consistent} (Theorem~\ref{th:causal}).
Informally, this means that we can reverse a firing (or a transition)
in a computation of a concurrent system as long as all firings
(transitions) caused by the firing (transition) have been undone
first.
We shall need several useful properties over firings, transitions or
their sequences before we can prove Theorem~\ref{th:causal}.

An important property of a \emph{fully} reversible system is the lemma
below stating that any forward transition can be undone.
In the setting of reversible occurrence nets this is stated as
follows:

\begin{lem}[Loop Lemma]\label{lm:loop}
Let $O$ be an occurrence net. Then, $(\rev O, m) \fred (\rev O,m')$ if
and only if $(\rev O, m') \bred[\rev {\tr t}] (\rev O,m)$.
\end{lem}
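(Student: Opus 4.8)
The plan is to unfold both the forward firing definition and Definition~\ref{lbl:rev_occ} and observe that the reverse transition $\rev{\tr t}$ is defined precisely so that it consumes what $\tr t$ produces and produces what $\tr t$ consumes. So the two directions of the biconditional should follow by a direct symmetric calculation, with no deep combinatorial obstacle.

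For the ($\Rightarrow$) direction, I would assume $(\rev O, m) \fred (\rev O, m')$. Since this is a forward firing, $\tr t \in T_O$, and by the firing rule there is a decomposition $m = \preS{\tr t}_O \oplus m''$ and $m' = \postS{\tr t}_O \oplus m''$, where $m''$ is the untouched part of the marking. Now by Definition~\ref{lbl:rev_occ}, the reverse transition $\rev{\tr t}$ has $\preS{\rev{\tr t}}_{\rev O} = \postS{\tr t}_O$ and $\postS{\rev{\tr t}}_{\rev O} = \preS{\tr t}_O$. Reading the firing rule for $\rev{\tr t}$ against the marking $m'$, we have $m' = \postS{\tr t}_O \oplus m'' = \preS{\rev{\tr t}}_{\rev O} \oplus m''$, so $\rev{\tr t}$ is enabled at $m'$, and firing it yields $\postS{\rev{\tr t}}_{\rev O} \oplus m'' = \preS{\tr t}_O \oplus m'' = m$. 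Hence $(\rev O, m') \bred[\rev{\tr t}] (\rev O, m)$, as required.

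The ($\Leftarrow$) direction is entirely symmetric: assuming the backward firing $(\rev O, m') \bred[\rev{\tr t}] (\rev O, m)$, I decompose $m' = \preS{\rev{\tr t}}_{\rev O} \oplus m'' = \postS{\tr t}_O \oplus m''$ and $m = \postS{\rev{\tr t}}_{\rev O} \oplus m'' = \preS{\tr t}_O \oplus m''$, which immediately exhibits $\tr t$ as enabled at $m$ with $\preS{\tr t}_O \oplus m'' \fred \postS{\tr t}_O \oplus m''$, i.e.\ $(\rev O, m) \fred (\rev O, m')$. One subtlety worth noting is that the firing rule as stated (Definition~\ref{def:net} / the \textsc{firing} rule) operates on \emph{sets} rather than arbitrary multisets for occurrence nets, but since $O$ is $(1\text{-})$safe and $\preS{\tr t}_O, \postS{\tr t}_O$ are sets, the multiset union $\oplus$ in the decomposition is really disjoint union on the relevant support, so no multiplicity bookkeeping is needed.

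The only place requiring a little care — and what I would flag as the main (though still minor) obstacle — is making sure the ``remainder'' $m''$ is genuinely the same multiset in both readings and that enabledness is preserved: we must check that subtracting $\preS{\tr t}_O$ from $m$ and then adding it back (after the round trip through $\rev{\tr t}$) returns exactly $m$, which is where the definitions $\preS{\rev{\tr t}}_{\rev O} = \postS{\tr t}_O$ and $\postS{\rev{\tr t}}_{\rev O} = \preS{\tr t}_O$ do all the work. Since $\oplus$ is commutative and associative with identity $\emptyset$ (as recorded in Section~\ref{sec:pt}), the cancellation is automatic. Thus the Loop Lemma reduces to this symmetric unfolding of definitions.
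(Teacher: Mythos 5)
Your proof is correct: the paper itself states the Loop Lemma without any proof, treating it as an immediate consequence of Definition~\ref{lbl:rev_occ}, and your symmetric unfolding of the firing rule together with the swapped pre/postsets of $\rev{\tr t}$ is exactly that intended argument. The subtleties you flag (multiset cancellation, sets vs.\ multisets in occurrence nets) are handled correctly and do not introduce any gap.
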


We can generalise this property to sequences of transitions and
reverse transitions as follows:

\begin{cor}\label{lb:corollary}
Let $O$ be an occurrence net. Then, $(\rev O, m) \red[]^{*} (\rev
O,m')$ if and only if $(\rev O, m') \red[]^{*} (\rev O,m)$.
\end{cor}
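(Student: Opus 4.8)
The plan is to prove Corollary~\ref{lb:corollary} by induction on the length of the firing sequence witnessing $(\rev O, m) \red[]^{*} (\rev O,m')$, using the Loop Lemma (Lemma~\ref{lm:loop}) as the single-step building block. The key observation is that $\red[]^{*}$ is the reflexive-transitive closure of $\red[t]$, where each $\red[t]$ is either a forward firing $\fred$ or a backward firing $\bred$, and the Loop Lemma tells us that every such single firing can be reversed by a single firing in the opposite direction.

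First I would unfold the definition: suppose $(\rev O, m) \red[]^{*} (\rev O,m')$, so there is a sequence $s = t_1; t_2; \ldots; t_n$ of firings with intermediate markings $m = m_0 \red[t_1] m_1 \red[t_2] \cdots \red[t_n] m_n = m'$. I would argue by induction on $n$. The base case $n = 0$ is trivial, since $m = m'$ and the empty sequence witnesses $(\rev O, m') \red[]^{*} (\rev O,m)$. For the inductive step, I would isolate the final firing $(\rev O, m_{n-1}) \red[t_n] (\rev O, m_n)$ and apply the Loop Lemma to invert it. There are two cases depending on whether $t_n$ is forward or backward. If $t_n = \tr{t}$ is forward, i.e.\ $(\rev O, m_{n-1}) \fred[\tr t] (\rev O, m_n)$, then Lemma~\ref{lm:loop} gives $(\rev O, m_n) \bred[\rev{\tr t}] (\rev O, m_{n-1})$. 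If $t_n = \rev{\tr t}$ is backward, then the ``if'' direction of Lemma~\ref{lm:loop} (applied to the forward firing it reverses) gives the forward firing $(\rev O, m_n) \fred[\tr t] (\rev O, m_{n-1})$. In either case there is a single firing from $m_n = m'$ back to $m_{n-1}$.

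Having reversed the last step, I would invoke the induction hypothesis on the shorter sequence $(\rev O, m) \red[]^{*} (\rev O, m_{n-1})$ of length $n-1$ to obtain $(\rev O, m_{n-1}) \red[]^{*} (\rev O, m)$, and then prepend the single reversing firing to conclude $(\rev O, m') \red[t] (\rev O, m_{n-1}) \red[]^{*} (\rev O, m)$, which is exactly $(\rev O, m') \red[]^{*} (\rev O, m)$. The reverse implication follows by symmetry: since $\rev{\rev{\tr t}} = \tr t$ and the two cases above are genuinely dual, the same argument run from $(\rev O, m') \red[]^{*} (\rev O, m)$ yields $(\rev O, m) \red[]^{*} (\rev O, m')$, so it suffices to prove one direction.

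I do not anticipate a serious obstacle here, since the Loop Lemma does all the real work at the level of single firings and the corollary is essentially its transitive closure. The only point requiring a little care is the bookkeeping in the inductive step: one must keep track of the direction of each firing $t_i$ and apply the correct direction of the Loop Lemma, noting that the backward case uses the ``only if'' reading of Lemma~\ref{lm:loop} with the forward firing $(\rev O, m_n) \fred[\tr t] (\rev O, m_{n-1})$ standing in for the hypothesis. Because each step is reversed independently, no commutation or concurrency reasoning (such as the Square Lemma) is needed for this particular statement.
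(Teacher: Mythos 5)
Your proof is correct and matches the paper's intent: the paper states this corollary without proof, as an immediate generalisation of the Loop Lemma (Lemma~\ref{lm:loop}), and the implicit argument is precisely your induction on the length of the firing sequence, inverting each step with the appropriate direction of the Loop Lemma. No gap; the observation that no concurrency reasoning (Square Lemma) is needed is also accurate.
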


Next, we have a lemma which is instrumental in the proof of
causal-consistent reversibility~\cite{rccs,rhotcs}.
Note that $t$ and $t'$ can be either forward or reverse transitions.

\begin{lem}[Square Lemma]\label{lem:SL-all}
Let\; $t$ and $t'$ be enabled coinitial concurrent transitions of a
reversible occurrence net. Then, there exist transitions $t_1$ and
$t_1'$ such that $t;t_1'$ and $t';t_1$ are cofinal.
\end{lem}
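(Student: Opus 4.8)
The plan is to prove the Square Lemma by a case analysis on whether $t$ and $t'$ are forward or reverse transitions, reducing each case either to the forward Square Lemma (Lemma~\ref{lem:SL-F}) or to the Loop Lemma (Lemma~\ref{lm:loop}), and using the characterisation of concurrency in Lemma~\ref{mixedcon} together with Lemma~\ref{con-rev}. The setting is a reversible occurrence net $\rev O$, and the hypothesis gives coinitial concurrent enabled transitions $t\ co\ t'$ at some marking $m$.

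First I would dispatch the case where both $t$ and $t'$ are forward transitions. Here Lemma~\ref{mixedcon}(a) tells us $\neg(t\ \#_0 t')$, so $t$ and $t'$ are coinitial concurrent \emph{firings} in the underlying occurrence net $O$ (using Lemma~\ref{lem:conservative-occurrence} to transfer between $\red$ in $O$ and $\fred$ in $\rev O$). Applying the forward Square Lemma (Lemma~\ref{lem:SL-F}) yields cofinal firings $t'$ after $t$ and $t$ after $t'$, so we take $t_1 = t$ and $t_1' = t'$ and transfer the resulting firings back into $\rev O$. The symmetric case where both are reverse transitions follows by the Loop Lemma: each reverse firing $\rev{\tr t}$ in $\rev O$ corresponds by Lemma~\ref{lm:loop} to a forward firing run backwards, so I would invoke the forward Square Lemma on the forward versions $\tr t, \tr t'$ (which are concurrent by Lemma~\ref{con-rev}) and then reverse the completing firings via the Loop Lemma to close the square with reverse transitions.

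The genuinely mixed case — $t$ a forward transition $\tr t$ and $t'$ a reverse transition $\rev{\tr{t'}}$ — is the one I expect to be the main obstacle, since here the two firings move tokens in opposite directions and the clean ``disjoint presets'' picture of Lemma~\ref{lem:SL-F} does not apply directly. By Lemma~\ref{mixedcon}(c), concurrency gives $t'\not\preceq t$ and $\rev{t'}\not\preceq t$, which (as noted in the discussion preceding the lemma) rules out both an immediate conflict between $t$ and $t'$ and a backward conflict between $\tr t$ and $\rev{\tr{t'}} = \tr{t'}$. The plan is to observe that, since $\tr t$ and $\tr{t'}$ do not cause each other and share neither preset nor postset places, firing $\tr t$ leaves $\rev{\tr{t'}}$ enabled and vice versa: the forward firing $\tr t$ consumes tokens in $\preS{\tr t}$ and produces tokens in $\postS{\tr t}$, none of which overlap the tokens $\postS{\tr{t'}}$ consumed by $\rev{\tr{t'}}$ or the tokens $\preS{\tr{t'}}$ it produces. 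Hence I can complete the square by taking $t_1 = \tr t$ and $t_1' = \rev{\tr{t'}}$, checking explicitly that $(\rev O, m) \fred[\tr t]\bred[\rev{\tr{t'}}]$ and $(\rev O, m)\bred[\rev{\tr{t'}}]\fred[\tr t]$ reach the same marking.

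The common thread across all three cases is that coinitiality plus enabledness plus the absence of conflict (both immediate and backward) force the token multisets manipulated by $t$ and $t'$ to be disjoint, so the two events commute and the square closes at the marking obtained by applying both effects. The main care needed is in the mixed case, where I must verify disjointness separately for the consumed and produced tokens of each transition rather than reusing a single symmetric argument; I would make this precise by writing the coinitial marking as $m = p \oplus p' \oplus r$ where $\preS{\tr t}\subseteq p$, $\postS{\tr{t'}}\subseteq p'$ (the tokens $\rev{\tr{t'}}$ consumes), and $r$ is the remainder, and then checking that each effect is applicable before and after the other. Once disjointness is established, cofinality is immediate since $\oplus$ is commutative.
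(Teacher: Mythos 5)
Your case split (forward/forward, reverse/reverse, mixed) is the same as the paper's, and two of your three cases are sound: the forward/forward case is exactly the paper's reduction to Lemma~\ref{lem:SL-F}, and your mixed case is precisely the paper's argument made more explicit --- concurrency plus enabledness rules out an immediate conflict, which gives $\preS{\tr t}\cap\postS{\tr{t'}}=\emptyset$, i.e.\ disjoint presets in $\rev O$, after which the marking decomposition from Lemma~\ref{lem:SL-F} closes the square.

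The gap is in your reverse/reverse case. The Loop Lemma turns the two \emph{coinitial} reverse firings $(\rev O,m)\bred[\rev{\tr t}](\rev O,m_1)$ and $(\rev O,m)\bred[\rev{\tr{t'}}](\rev O,m_2)$ into the forward firings $(\rev O,m_1)\fred[\tr t](\rev O,m)$ and $(\rev O,m_2)\fred[\tr{t'}](\rev O,m)$, and these are \emph{cofinal}, not coinitial (indeed $m_1\neq m_2$). Lemma~\ref{lem:SL-F} assumes coinitial concurrent firings, so it cannot be invoked on them; what your plan actually needs is the dual statement --- that a square can be completed \emph{backwards} from two cofinal concurrent firings --- which Lemma~\ref{lem:SL-F} does not provide, and whose proof would amount to redoing the token-disjointness argument anyway. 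The repair is the paper's direct argument, which is the same one you already used in the mixed case: in $\rev O$ the presets of $\rev{\tr t}$ and $\rev{\tr{t'}}$ are $\postS{\tr t}$ and $\postS{\tr{t'}}$, and these are disjoint because occurrence nets have no backward conflicts ($|\preS{\p a}|\leq 1$ for every place $\p a$, so two distinct transitions never share a postset place); with disjoint presets you can write $m = \postS{\tr t}\oplus\postS{\tr{t'}}\oplus m''$ and commute the two reverse firings exactly as in the proof of Lemma~\ref{lem:SL-F}. With that replacement your proof is correct and coincides with the paper's.
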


\begin{proof}
We consider three cases when $t$ and $t'$ are enabled, coinitial and
concurrent transitions.
If $t$ and $t'$ are both forward then we are done by
Lemma~\ref{lem:SL-F}.
If $t$ and $t'$ are both reverse, then they cannot cause one another
and they cannot be in an immediate conflict (because this would imply
a backward conflict on their forward versions which we do not have in
occurrence nets: see Lemma~\ref{mixedcon}).
So, $t, t'$ do not share tokens in their presets. Hence, we can write
$t, t'$ correspondingly as we wrote $\tr t, \tr t'$ in the proof of
Lemma~\ref{lem:SL-F}.
The rest of the case follows correspondingly as in the proof of
Lemma~\ref{lem:SL-F}.
If $t$ is a transition and $t'$ is a reverse transition, then since
they are coinitial, concurrent and enabled they cannot cause one
another and they are not in an immediate conflict, which is equivalent
to $\rev{t'}\not\preceq t$.
Hence the preset of $t$ does not overlap with the preset of $t'$. We
then finish as in the proof of Lemma~\ref{lem:SL-F}.
\end{proof}

In order to prove causal consistency we first define a notion of
equivalence on sequences of forward and reverse transitions in
reversible occurrence nets.  By following the approach
in~\cite{levy,rccs}, we define the notion of {\em reverse equivalence}
on such sequences as the least equivalence relation $\ceq$ which is
closed under composition with `$;$' such that the following rules
hold:

\begin{align*}
t ; t' \ceq t' ; t  \;\; \mbox{ if } t\ co\ t'
\qquad\qquad
t ; \rev t \ceq \voidt
\qquad\quad
\rev{t} ; {t} \ceq \voidt
\end{align*}

The reverse equivalence $\ceq$ allows us to swap the order of $t$ and
$t'$ in an execution sequence as long as $t$ and $t'$ are concurrent.
Moreover, it allows cancellation of a transition and its inverse.  We
have that ${\equiv} \subset {\ceq}$.
The equivalence classes of $\ceq$ are called
\emph{traces}; it is clear that they contain the Mazurkiewicz traces.
Hence, we shall use $\omega, \omega_1$ and $\omega_2$ to range over
such traces.

The following lemma says that, up to reverse equivalence, one can
always reach for the maximum freedom of choice, going backwards first
and only then going forwards.

\begin{lem}[Parabolic Lemma]\label{lm:rearranging}
Let $\trace$ be a trace of a reversible occurrence net.  There
exist two forward traces $\trace_1$ and $\trace_2$ such that
$\trace \ceq \rev{\trace_1};\trace_2$.
\end{lem}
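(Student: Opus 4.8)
The plan is to prove the Parabolic Lemma by induction on the number of ``bad'' adjacencies in the trace $\trace$, namely pairs where a forward transition immediately follows a reverse transition. The goal normal form $\rev{\trace_1};\trace_2$ is precisely the form in which every reverse transition precedes every forward transition, so I want a measure that counts how far $\trace$ is from this shape. For a representative $t_1;t_2;\ldots;t_n$ of the trace, let me define the number of forward/reverse \emph{inversions} as the count of pairs $(i,j)$ with $i<j$ where $t_i$ is forward and $t_j$ is reverse. When this count is zero, all reverse transitions already sit to the left of all forward ones, so $\trace$ is already of the form $\rev{\trace_1};\trace_2$ and we are done.

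For the inductive step, I would locate an adjacent pair $t;t'$ inside the sequence where $t$ is forward and $t'$ is reverse (such a pair must exist whenever the inversion count is positive). There are two cases according to whether $t'$ is the reverse of $t$. If $t' = \rev{t}$, then by the rule $t;\rev{t}\ceq\voidt$ I can delete both, which strictly reduces the length and hence the inversion count, and I apply the induction hypothesis to the shorter trace. If $t'\neq\rev{t}$, then $t$ and $t'$ are consecutive in a valid firing sequence but neither is the inverse of the other; here I want to argue that they are concurrent, so that by $t;t'\ceq t';t'$ --- more precisely $t;t'\ceq t';t$ --- I can swap them, which decreases the inversion count by exactly one while leaving the length fixed, and again invoke the induction hypothesis.

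The main obstacle is establishing the concurrency needed for the swap in the second case: I must show that a forward $t$ followed immediately by a reverse $t'$ with $t'\neq\rev{t}$, both occurring in a legal execution, satisfies $t\ co\ t'$. The key inputs are Lemma~\ref{mixedcon}(c) and the Loop Lemma. Using the Loop Lemma I can rewrite the reverse firing $t'$ as the undoing of a previously executed forward firing, which lets me realise the consecutive pair as coinitial enabled transitions after reordering via Lemma~\ref{con-rev}, which tells me that $t\ co\ t'$ is equivalent to $\rev{t}\ co\ t'$. Then I check the two conditions of Lemma~\ref{mixedcon}(c): that $t'\not\preceq t$ and $\rev{t'}\not\preceq t$. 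Since $t$ fires and then $t'$ fires as the next step, $t$ cannot be a (possibly reverse) consequence that $t'$ depends on without the firings being blocked, so causal dependence in the wrong direction is ruled out by the fact that both firings actually occur in sequence; the absence of $\rev{t'}\preceq t$ follows because otherwise $t$ would have consumed a token that $t'$ needs to produce, contradicting that $t'$ fires immediately after $t$. Assembling these facts gives concurrency, justifies the swap, and completes the induction.

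I expect the bookkeeping around which firing is enabled at which marking to be the delicate part: the rewriting rules for $\ceq$ are stated on abstract transition symbols, so I would be careful to keep track of the intermediate markings when invoking the Square Lemma (Lemma~\ref{lem:SL-all}) or the Loop Lemma, ensuring that each local swap or cancellation corresponds to a genuine equality of firing sequences and not merely a syntactic manipulation. Once the concurrency justification is secured, the termination of the procedure is immediate from the strict decrease of the chosen measure, and the resulting trace is manifestly of the form $\rev{\trace_1};\trace_2$ with $\trace_1$ and $\trace_2$ forward.
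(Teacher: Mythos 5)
Your overall skeleton is the same as the paper's: repeatedly locate an adjacent forward-then-reverse pair and eliminate it, by cancellation when the pair is $t;\rev{t}$ and by a concurrency-justified swap otherwise, with a decreasing measure driving the induction. Your inversion count is a sound measure (both operations strictly decrease it), and treating the cancellation case $t;\rev{t}$ explicitly is actually a point in your favour: the paper's case analysis tacitly assumes $\tr{t_1}\neq\tr{t_2}$ (its ``backward conflict'' contradiction under $\tr{t_1}\preceq\rev{\tr{t_2}}$ evaporates when the two transitions coincide), so any complete proof needs your case split.

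The genuine gap is in the concurrency step, which is the crux of the lemma. First, your bridging is misapplied: Lemma~\ref{mixedcon} characterises concurrency only for \emph{enabled coinitial} pairs, and your pair $t;t'$ is consecutive, not coinitial. The reduction you propose (Lemma~\ref{lm:loop} plus Lemma~\ref{con-rev}) replaces it by the coinitial pair $(\rev{t},t')$ --- but that pair consists of two \emph{reverse} transitions, so the applicable clause is Lemma~\ref{mixedcon}(b), whose conditions ($\rev{t}\not\preceq t'$ and $t'\not\preceq\rev{t}$) are not the clause-(c) conditions you say you will check; relating the two requires duality facts about the extended $\preceq$ that you never establish. (A cleaner repair: absence of backward conflicts gives $\postS{\tr{t_1}}\cap\postS{\tr{t_2}}=\emptyset$, so $t'$ is already enabled \emph{before} $t$ fires and the pair is coinitial at the earlier marking; but that is not the route you describe.) Second, and decisively, your verification of the condition $t'\not\preceq t$ --- ``ruled out by the fact that both firings actually occur in sequence'' --- is an assertion, not an argument, and this sub-case cannot be discharged by looking at the two firings alone. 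If $\preS{\tr{t_1}}\cap\preS{\tr{t_2}}\neq\emptyset$, the consecutive firings $\tr{t_1};\rev{\tr{t_2}}$ are locally consistent with the surrounding markings; nothing contradictory is visible there. The contradiction is global: since occurrence nets have no backward conflicts, the tokens consumed by $\rev{\tr{t_2}}$ can only have been produced by an earlier firing of $\tr{t_2}$ in the trace, that earlier firing consumed the shared input place, and one must then argue over the whole prefix (using 1-safety, acyclicity and again absence of backward conflicts) that this token could never have been restored for $\tr{t_1}$ to fire. This is precisely the paper's case (ii), sub-case (2), with its footnote (and sub-cases (3)--(4) for the conflict variants); your sketch compresses all of it into one clause, and without that argument the swap is unjustified and the induction has no engine.
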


\begin{proof}
The proof is by lexicographic induction on the length of $\trace$ and
on the distance between the beginning of $\trace$ and
the \emph{earliest pair} of transitions in $\trace$ of the form
$\tr{t_1};\rev{\tr{t_2}}$, where $\tr{t_1}$ and $\tr{t_2}$ are
forward.
If such pair does not exist, then the result follows
immediately (i.e., either $\trace_1 = \epsilon$ or $\trace_2
= \epsilon$).
If there exists one such pair we have two possibilities: either (i)
$\tr{t_1}$ and $\rev{\tr{t_2}}$ are concurrent or (ii) they are not.
Case (i) also implies that $\rev{\tr{t_1}}$ and $\rev{\tr{t_2}}$ are
coinitial, enabled and also concurrent by Lemma~\ref{con-rev}.
Lemma~\ref{lem:SL-all} implies that there two cofinal firings with
labels $\rev{\tr{t_2}}$ and $\rev{\tr{t_1}}$ that making up a
square.
In addition to the subtrace $\tr{t_1}; \rev{\tr{t_2}}$, there is a
coinitial and cofinal subtrace $\rev{\tr{t_2}}; \tr{t_1}$ on the other
side of the square.
So $\tr{t_1}$ and $\rev{\tr{t_2}}$ can be swapped, obtaining a later
earliest pair of the form $\tr{t};\rev{\tr{t'}}$.
The result then follows by induction since swapping keeps the total
length of the trace unchanged.

In case (ii), $\tr{t_1}$ and $\rev{\tr{t_2}}$ are not
concurrent. Then, we have the following four cases:

\begin{enumerate}
\item $\tr{t_1}\preceq \rev{\tr{t_2}}$.
This implies that
$\postS{\tr{t_1}} \cap \preS{\rev{\tr{t_2}}} \neq \emptyset$ which is
equivalent to $\postS{\tr{t_1}}\cap \postS{\tr{t_2}} \neq \emptyset$.
This means that there is a backward conflict between $\tr{t_1}$ and
$\tr{t_2}$. Since occurrence nets are free of backward conflict we get
a contradiction.

\item $\rev{\tr{t_2}} \preceq {\tr{t_1}}$. 
This implies that
$\postS{\rev{\tr{t_2}}}\cap \preS{\tr{t_1}} \neq \emptyset$, and that
$\preS{\tr{t_2}}\cap \preS{\tr{t_1}}\neq \emptyset$.
The last says that $\tr{t_1}$ and $\tr{t_2}$ are in an immediate
conflict.
Since we have $\tr{t_1};\rev{\tr{t_2}}$ in $\omega$, and all
transitions before $\tr{t_1}$ in $\omega$ are forward transitions, it
means that to undo $\tr{t_2}$ the transition $\tr{t_2}$ must have
occurred before $\tr{t_1}$\footnote{Note that it is not possible that
some transition $\tr{t_3}$ has produced a token to a place $\p a$ that
was used by $\rev{\tr{t_2}}$: this would imply
$\{\tr{t_2}, \tr{t_3}\}=\preS{\p a}$, which is a backward conflict:
contradiction.}.
However, this contradicts $\tr{t_1}$ and $\tr{t_2}$ being in an
immediate conflict, which requires that if one transition takes place
then the other cannot happen as they share tokens in their preset and
there are no cycles among the forward transitions.

\item $\tr{t_1}\#_0 \rev{\tr{t_2}}$.
By definition of $\#_0$ we have
$\preS{\tr{t_1}} \cap \preS{\rev{\tr{t_2}}}\neq \emptyset$. Since
$\tr{t_1};\rev{\tr{t_2}}$ appears in $\omega$, $\rev{\tr{t_2}}$ should
remain enabled after the firing of $\tr{t_1}$.
Since occurrence nets are 1-safe and acyclic, this contradict the
assumption $\tr{t_1}\#_0 \rev{\tr{t_2}}$.

\item $\tr{t_1}\# \rev{\tr{t_2}}$.
This implies that there exist earlier $\tr{t}$ and $\tr{t'}$ in
$\omega$ such that $\tr{t} \preceq \tr{t_1}$ and
$\tr{t'} \preceq \rev{\tr{t_2}}$ with $\tr{t} \#_0 \tr{t'}$. So we
have $\tr{t}$ and $\tr{t'}$ in an immediate conflict appearing in
$\omega$ prior to sub-trace $\tr{t_1};\rev{\tr{t_2}}$.
We then show a contradiction as in case (3) above.
\end{enumerate}
\end{proof}

The following lemma says that, if two traces $\trace_1$ and $\trace_2$
are coinitial and cofinal (namely they start from the same marking and
end in the same marking) and if $\trace_2$ has only forward
transitions, then $\trace_1$ has some forward transitions and their
reverse versions that can cancel each other out.
A consequence of this is that overall $\trace_1$ is causally
equivalent to a forward trace $\trace'_1$ in which all pairs of
inverse transitions are cancelled out.

\begin{lem}[Shortening Lemma]\label{lm:short}
Assume $\trace_1$ and $\trace_2$ are coinitial and cofinal traces, and
$\trace_2$ is forward.
Then, there exists a forward trace $\trace'_1$ with
$\len{\trace'_1} \leq \len{\trace_1}$ such that
$\trace'_1 \ceq \trace_1$.
\end{lem}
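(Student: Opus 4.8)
The plan is to prove the Shortening Lemma by induction on the length of $\trace_1$, using the Parabolic Lemma to put $\trace_1$ into a canonical backward-then-forward shape and then exploiting that $\trace_2$ is purely forward together with the fact that coinitial/cofinal forward traces in an occurrence net are Mazurkiewicz-equivalent.

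First I would apply the Parabolic Lemma (Lemma~\ref{lm:rearranging}) to $\trace_1$, obtaining forward traces $\sigma_1$ and $\sigma_2$ with $\trace_1 \ceq \rev{\sigma_1};\sigma_2$. Since $\ceq$ preserves the source and target markings, $\rev{\sigma_1};\sigma_2$ is still coinitial and cofinal with $\trace_2$. If $\sigma_1 = \voidt$, then $\trace_1 \ceq \sigma_2$ is already forward and we take $\trace'_1 = \sigma_2$; note $\len{\sigma_2} \leq \len{\trace_1}$ because each cancellation in the Parabolic rearrangement only shortens the trace. So the interesting case is when $\sigma_1$ is nonempty, and the goal becomes to show that $\rev{\sigma_1}$ must be ``absorbed'' by $\sigma_2$.

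Next I would exploit coinitiality and cofinality. Let $m$ be the common initial marking and $m'$ the common final marking. Then $(\rev O, m)$ evolves to some intermediate marking $m''$ by running $\rev{\sigma_1}$ backwards, and from $m''$ forward along $\sigma_2$ to $m'$; meanwhile $\trace_2$ takes $m$ forward to $m'$. The key observation is that $\sigma_1$, read forward, is a forward trace from $m''$ to $m$; combining it with $\trace_2$ and $\sigma_2$ I want to argue that the last transition $\rev{\tr{t}}$ of $\rev{\sigma_1}$ (equivalently, the first forward transition $\tr t$ of $\sigma_1$) reappears as a forward transition inside $\sigma_2$, so that a pair $\tr t;\rev{\tr t}$ (up to concurrency-driven commutation via $t\ co\ t'$ rules and Lemma~\ref{con-rev}) can be cancelled using $\rev{t};t \ceq \voidt$. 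Concretely, I would take the transition $\tr t$ fired first in $\sigma_1$: it consumes tokens present in $m''$ and produces tokens that are consumed when $\rev{\sigma_1}$ is undone. Because the net is an occurrence net (backward-conflict-free, $1$-safe, acyclic), the token configuration forces $\sigma_2$ to contain a matching forward occurrence of $\tr t$ that can be commuted to the front; then $\rev{\tr t};\tr t \ceq \voidt$ removes a reverse/forward pair, strictly decreasing the number of reverse transitions.

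The hard part will be justifying this matching-and-cancellation step rigorously: showing that the reverse transition at the boundary of $\rev{\sigma_1};\sigma_2$ genuinely meets a corresponding forward transition that can be legally swapped into adjacency using the concurrency commutations of $\ceq$. I expect to lean on Lemma~\ref{con-rev} (so that $t\ co\ t'$ iff $\rev t\ co\ t'$, letting reverse transitions slide past concurrent neighbours exactly as their forward versions do) and on the Square Lemma (Lemma~\ref{lem:SL-all}) to realize the swaps. Once one reverse/forward pair is cancelled, the resulting trace $\trace''_1 \ceq \trace_1$ is still coinitial and cofinal with $\trace_2$, still has $\trace_2$ forward, and has strictly fewer reverse transitions, so the induction hypothesis applied to $\trace''_1$ yields the desired forward $\trace'_1$ with $\len{\trace'_1} \leq \len{\trace''_1} \leq \len{\trace_1}$ and $\trace'_1 \ceq \trace_1$. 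The base case, when $\trace_1$ already has no reverse transitions, follows since a coinitial/cofinal pair of forward traces is Mazurkiewicz-equivalent, hence $\ceq$-equivalent, and we simply set $\trace'_1 = \trace_1$.
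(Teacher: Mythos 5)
Your proposal matches the paper's proof essentially step for step: apply the Parabolic Lemma to reach the form $\rev{\sigma_1};\sigma_2$, locate the boundary reverse transition $\rev{\tr t}$, use the occurrence-net properties (no backward conflicts, no loops of forward transitions) to argue that the only transition in $\sigma_2$ able to restore the tokens consumed by $\rev{\tr t}$ is $\tr t$ itself, commute that occurrence of $\tr t$ leftwards past the intervening forward transitions via Lemma~\ref{con-rev} and Lemma~\ref{mixedcon} (the paper establishes the needed concurrency by a $1$-safety contradiction), cancel $\rev{\tr t};\tr t \ceq \voidt$, and conclude by induction on a shorter trace. The ``hard part'' you defer is exactly where the paper's proof concentrates its effort, and your sketch of it --- the matching forced by occurrence-net structure, the swaps enabled by Lemma~\ref{con-rev} --- is precisely the argument the paper carries out.
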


\begin{proof}
By induction on the length of $\trace_1$. If $\trace_1$ is a forward
trace, then we are already done.
Otherwise, by applying Lemma~\ref{lm:rearranging}, we have that
$\trace_1 \ceq \rev\trace;\trace'$ where both $\trace$ and $\trace'$
are \textit{forward} traces.
Consider the sub-trace $\rev{\tr{t_1}};\tr{t_2}$ in $\rev\trace;\trace'$. 
Clearly $\rev{\tr{t_1}};\tr{t_2}$  is the only sub-trace of this form 
in $\rev\trace;\trace'$.
Suppose $\preS{\rev{\tr{t_1}}} = m'$ and $\postS{\rev{\tr{t_1}}} = m''$.  
Since $\trace_2$ is a forward only trace and since $\trace_1
(\ceq \rev\trace;\trace')$ and $\trace_2$ coinitial and cofinal, the
marking $m'$ removed by $\rev{\tr{t_1}}$ has to be put back in
$\trace_1$, otherwise this change would be visible in $\trace_2$
(since it is a forward only trace). Let $\tr{t_3}$ the earliest
transition in $\trace'$ able to consume $m''$ and put back $m'$.
This implies that $\tr{t_3} = \tr{t_1}$ since there are no loops of forward
transitions in occurrence nets.
If there are no forward transitions between $\rev{\tr{t_1}}$ and
$\tr{t_1}$, then we can cancel them out by applying the rules for
$\ceq$.
Then, we are done by induction on a shorter trace.

Otherwise, assume there are forward transitions between
$\rev{\tr{t_1}}$ and $\tr{t_1}$, and $\tr{t_4}$ is the last such
transition.
Next, we show that $\tr{t_4}$ and $\tr{t_1}$ are concurrent so that
they can be swapped thus moving $\tr{t_1}$ closer to $\rev{\tr{t_1}}$.
It is sufficient to show $\rev{\tr{t_4}}$ and $\tr{t_1}$ are
concurrent by Lemma~\ref{con-rev}.
So we require $\rev{\tr{t_4}}\not\preceq \tr{t_1}$ and
$\tr{t_4}\not\preceq \tr{t_1}$ by Lemma~\ref{mixedcon}. Since
$\tr{t_4};\tr{t_1}$ is a trace, the transitions cannot be in an
immediate conflict.
This is equivalent to
$\postS{\tr{t_4}} \cap \preS{\tr{t_1}}= \emptyset$, which is
$\rev{\tr{t_4}}\not\preceq \tr{t_1}$.
So, we are only left to prove $\tr{t_4}\not\preceq \tr{t_1}$. We
proceed by contradiction.  Assume that $\tr{t_4}\preceq \tr{t_1}$.
Hence, $\postS{\tr{t_4}} \cap \preS{\tr{t_1}}\neq \emptyset$, so $\p
a \in \postS{\tr{t_4}} \cap \preS{\tr{t_1}}$ for some place $\p a$.
After $\rev{\tr{t_1}}$ takes place in $\trace_1$ there will be a token
in $\p a$.
If that token stays there while computation progresses towards
$\tr{t_4}$, then $\tr{t_4}$ will place another token in $\p a$:
contradiction with the 1-safe property.
So, there is a transition $\tr{t_5}$ between $\rev{\tr{t_1}}$ and
$\tr{t_4}$ that consumes the token from $\p a$.
Hence, $\tr{t_5}$ and $ \tr{t_1}$ are in an immediate conflict and
form a forward trace $\tr{t_5}; \ldots;\tr{t_4};\tr{t_1}$:
contradiction.
 
Since $\rev{\tr{t_4}}$ and $\tr{t_1}$ are concurrent, so are
$\tr{t_4}$ and $\tr{t_1}$. Hence, $\tr{t_4};\tr{t_1}
\ceq \tr{t_1};\tr{t_4}$, which moves $\tr{t_1}$ closer to $\rev{\tr{t_1}}$.
We then continue this way until transitions $\rev{\tr{t_1}}$ and
$\tr{t_1}$ are adjacent.
Finally, we cancel them out by applying the rules for $\ceq$ and then
we conclude by induction on a shorter trace.
\end{proof}

Next, we give our causal consistency result. If two traces composed of
transitions or reverse transitions are coinitial and cofinal, then
they are equivalent with respect to $\ceq$, and vice versa.
Given a computation represented by trace $\trace$ we can reverse it by
simply doing $\rev{\trace}$, which would be \emph{backtracking}, or by
doing $\rev{\trace'}$ if $\trace \ceq \trace'$.
The latter option allows us to undo transitions in any order as long
as all the consequences of these transitions have been undone first.

\begin{thm}[Causal Consistency]\label{th:causal}
Let $\trace_1$ and $\trace_2$ be two coinitial traces. Then,
$\trace_1 \ceq \trace_2$ if and only if $\trace_1$ and $\trace_2$ are
cofinal.
\end{thm}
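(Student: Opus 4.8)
The plan is to prove the two implications separately, with the easy direction establishing that $\ceq$ respects endpoints and the hard direction reducing an arbitrary coinitial/cofinal pair to a single cancelling loop by means of the Shortening Lemma (Lemma~\ref{lm:short}).

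For the forward implication, assume $\trace_1$ and $\trace_2$ are coinitial and $\trace_1 \ceq \trace_2$; I would show they are cofinal by induction on the derivation of $\trace_1 \ceq \trace_2$. Since $\ceq$ is the least equivalence closed under composition with `$;$' and generated by the three rules $t;t' \ceq t';t$ (when $t\ co\ t'$), $t;\rev{t} \ceq \voidt$ and $\rev{t};t \ceq \voidt$, it suffices to check that each generator maps a coinitial trace to a cofinal one and that congruence and transitivity preserve this. The swap rule preserves the final marking by the Square Lemma (Lemma~\ref{lem:SL-all}): two coinitial concurrent transitions reach a common marking irrespective of their order. The two cancellation rules preserve the marking by the Loop Lemma (Lemma~\ref{lm:loop}), since $t$ followed by $\rev t$ (or $\rev t$ followed by $t$) returns to the starting marking. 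Closure under `$;$' and under transitivity is then immediate, as composing or stacking cofinal equalities keeps the shared endpoints aligned.

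For the backward implication — the substantive part — assume $\trace_1$ and $\trace_2$ are coinitial and cofinal, say both running from $m_0$ to $m_f$. By the Loop Lemma (and Corollary~\ref{lb:corollary}) the reversal $\rev{\trace_2}$ is a trace from $m_f$ back to $m_0$, so $\trace_1;\rev{\trace_2}$ is a well-formed trace, coinitial and cofinal with the empty trace $\voidt$ at $m_0$. I would then apply the Shortening Lemma (Lemma~\ref{lm:short}) with the forward trace taken to be $\voidt$: it yields a forward trace $\trace'$ with $\trace' \ceq \trace_1;\rev{\trace_2}$ and $\len{\trace'}\leq\len{\trace_1;\rev{\trace_2}}$. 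Since $\trace'$ is a forward trace from $m_0$ to $m_0$ and occurrence nets contain no cycles of forward transitions, $\trace'$ must be empty, whence $\trace_1;\rev{\trace_2}\ceq\voidt$. Finally, using $\rev{\trace_2};\trace_2 \ceq \voidt$ (an easy induction on $\len{\trace_2}$ from the cancellation rule $\rev t;t\ceq\voidt$ and congruence), I compute
\[
\trace_1 \ceq \trace_1;(\rev{\trace_2};\trace_2) \ceq (\trace_1;\rev{\trace_2});\trace_2 \ceq \voidt;\trace_2 \ceq \trace_2,
\]
which gives $\trace_1 \ceq \trace_2$.

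I expect the main obstacle to lie entirely in the backward direction, and specifically in invoking the Shortening Lemma cleanly: one must check that $\trace_1;\rev{\trace_2}$ is genuinely composable (which needs the Loop Lemma to turn $\trace_2$ into the backward trace $\rev{\trace_2}$) and that the forward trace extracted from it can only be empty, appealing to acyclicity of the occurrence net. The remaining algebra — that a backward trace cancels its forward counterpart up to $\ceq$, together with the final chain of equivalences — is routine bookkeeping with the generating rules, but it is where orientation errors are easy to make, so I would state $\rev{\trace_2};\trace_2 \ceq \voidt$ explicitly as a sublemma rather than leave it implicit.
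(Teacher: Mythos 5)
Your proof is correct, and while your forward direction coincides with the paper's (induction over applications of the generating rules of $\ceq$, with the Square Lemma~\ref{lem:SL-all} and Loop Lemma~\ref{lm:loop} guaranteeing that each rule preserves endpoints), your backward direction takes a genuinely different and more economical route. The paper first normalises both traces with the Parabolic Lemma~\ref{lm:rearranging} and then runs a lexicographic induction on the sum of the lengths and the position of the earliest disagreeing pair of transitions, with a three-way case analysis (forward/reverse, forward/forward, reverse/reverse) that repeatedly re-derives concurrency of adjacent transitions so as to swap them, invoking the Shortening Lemma only inside one case. You instead compose the loop $\trace_1;\rev{\trace_2}$ (legitimate by the Loop Lemma and Corollary~\ref{lb:corollary}), apply the Shortening Lemma~\ref{lm:short} once against the empty forward trace, conclude from acyclicity that the resulting forward trace must be empty, and finish by algebra with the cancellation rules. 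Your route buys brevity and modularity: all the combinatorial work is delegated to Lemma~\ref{lm:short} and the case analysis disappears. What it costs is two side conditions that you should state explicitly: (i) the empty trace $\voidt_{m_0}$ is an admissible instance of the forward trace in Lemma~\ref{lm:short} (it is, vacuously); and (ii) the forward trace $\trace'$ produced by the Shortening Lemma is still a loop at $m_0$ --- knowing that it \emph{ends} at $m_0$ requires the already-proven forward implication (equivalently, that $\ceq$-rewriting preserves endpoints), a dependency you leave implicit. Since you establish the forward implication first there is no circularity, and your appeal to the non-existence of forward firing loops is at the same level of rigour as the paper's own uses of that fact (e.g.\ in case~(2) of its proof). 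The paper's induction, by contrast, is more self-contained --- it never composes $\trace_1$ with the reversal of $\trace_2$ and makes the swapping mechanics fully explicit --- but it is considerably longer and harder to check.
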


\begin{proof}
\noindent $\Rightarrow$ part: If $\trace_1$ and $\trace_2$ are coinitial
and $\trace_1 \ceq \trace_2$, then $\trace_1$ and $\trace_2$ are
cofinal.
We notice that if $\trace_1\ceq \trace_2$, then $\trace_1$ can be
transformed into $\trace_2$ (and vice versa) via $n \geq 0$
applications of the rules of $\ceq$.
So we proceed by induction on $n$. For $n = 0$ we have that
$\trace_1 \ceq \trace_2$ by applying $0$ times the rules of $\ceq$.
Since $\ceq$ is an equivalence, this means that $\trace_1 = \trace_2$
which in turn implies that the traces are coinitial and cofinal.
In the inductive case there exist $n$ traces $\trace^k$ (with $0 \leq
k \leq n$) obtained as a result of applying the rules of $\ceq$ to
$\trace_1$ exactly $k$ times; hence, $\trace^0 = \trace_1$ and
$\trace^n = \trace_2$.
We then have $\trace_1 \ceq \trace^{n-1}$, and
$\trace^{n-1} \ceq \trace_2$, i.e., traces $\trace^{n-1}$ and
$\trace_2$ differ in one rule application.
This means that we can decompose both traces as
$\trace^{n-1}= \trace_a;\trace';\trace_b$ and
$\trace_2= \trace_a;\trace'';\trace_b$, where $\trace'$ and $\trace''$
are shortest traces that differ by just one application of the rules
for $\ceq$.
Then, there are three cases:

\begin{enumerate}
\item
$\trace' = \tr{t_1};\tr{t_2}$ and $\trace'' = \tr{t_2};\tr{t_1}$ with
$\tr{t_1}\ co \ \tr{t_2}$;

\item $\trace' = \tr{t}; \rev{\tr{t}}$ and $\trace'' =\voidt_m$ with
$m$ the marking before firing $\tr{t}$; \item $\trace' = \rev{\tr{t}}
; \tr{t}$ and $\trace'' = \voidt_m$ with $m$ the marking after firing
$\tr{t}$.
\end{enumerate}

We need to show that $\trace_1$ and $\trace_2$ are cofinal.  In all
the cases above it is easy to see that $\trace^{n-1}$ and $\trace_2$
are both coinitial and cofinal.
By the inductive hypothesis, $\trace_1 \ceq \trace^{n-1}$ implies that
$\trace_1$ and $\trace^{n-1}$ are coinitial and cofinal.
Hence, we can conclude that also $\trace_1$ and $\trace_2$ are
coinitial and cofinal.

\vspace{.4cm}
\noindent $\Leftarrow$ part: 
If $\trace_1$ and $\trace_2$ are coinitial and cofinal, then
$\trace_1 \ceq \trace_2$.
We can assume by Lemma~\ref{lm:rearranging} that $\trace_1$ and
$\trace_2$ are compositions of a backward trace and a forward trace.
The proof is by lexicographic induction on the sum of the lengths of
$\trace_1$ and $\trace_2$, and on the distance between the end of
$\trace_1$ and $t_1$ of the earliest pair of $t_1$ in $\trace_1$ and
$t_2$ in $\trace_2$ such that $t_1 \neq t_2$. If all the transitions
in $\trace_1$ and $\trace_2$ are equal then we are done.
Otherwise, we have to consider three cases depending on the direction
of the two transitions.

\begin{enumerate}
\item $t_1=\tr{t_1}$ and $t_2=\rev{\tr{t_2}}$.
We have $\trace_1 = \rev\trace;\tr{t_1};\trace'$ and $\trace_2
= \rev\trace;\rev{\tr{t_2}};\trace''$, where $\rev\trace$ is the
common backward sub-trace and $\tr{t_1};\trace'$ is a forward trace.
Since $\trace_1$ and $\trace_2$ are coinitial and cofinal, this
implies that also $\tr{t_1};\trace'$ and $\rev{\tr{t_2}};\trace''$ are
coinitial and cofinal.
By applying Lemma~\ref{lm:short} to the trace
$\rev{\tr{t_2}};\trace''$ we obtain a shorter equivalent forward
trace, and therefore $\trace_2$ has a shorter causally equivalent
trace. We then conclude by induction.
\item $t_1=\tr{t_1}$  and $t_2=\tr{t_2}$.
By assumption we have $\tr{t_1} \neq \tr{t_2}$. We can decompose the
two traces as follows: $\trace_1 = \trace;\tr{t_1};\trace^1$ and
$\trace_2 = \trace;\tr{t_2};\trace^2$, for some $\trace, \trace^1$ and
$\trace^2$, where all the transitions in $\trace^1$ and $\trace^2$ are
forward.
Next, we show that $\tr{t_1}$ and $\tr{t_2}$ are concurrent. Assume
for contradiction that $\tr{t_1}$ and $\tr{t_2}$ are in an immediate
conflict. This means
$\preS{\tr{t_1}} \cap \preS{\tr{t_2}} \neq \emptyset$.
Since the two traces are cofinal we can write
\begin{align*}
  &
  (\rev{O},m_0) \red[\trace] (\rev{O},m) \fred[\tr{t_1}] (\rev{O},m_1)
  \fred[\trace^1] (\rev{O},m_f)
  \\
  &
  (\rev{O},m_0) \red[\trace] (\rev{O},m) \fred[\tr{t_2}] (\rev{O},m_2)
  \fred[\trace^2] (\rev{O},m_f)
\end{align*}
where $m_1 \neq m_2$ and $m_f$ is the final marking. The effect of
$\tr{t_1}$ remains visible at the end of $\trace^1$, namely in $m_f$,
and hence at the end of $\trace^2$.
This implies that there is a transition in $\trace^2$ that produces
the tokens necessary to enable $\tr{t_1}$.
Hence, there is a loop.  This is a contradiction since here we
consider forward transitions only of occurrence nets, which by
definition are acyclic with regards to forward transitions.
Hence, $\tr{t_1}\ co \ \tr{t_2}$ by Lemma~\ref{mixedcon}, and
$\tr{t_1}\in \trace^2$.
If there are no transitions between $\tr{t_2}$ and $\tr{t_1}$ in
$\trace^2$, then we swap them by applying the rules for $\ceq$ thus
obtaining $\trace_2\ceq \trace; \tr{t_1}; \tr{t_2}; \trace^3$, where
$\trace^3$ is $\trace^2$ without the leading $\tr{t_1}$.
The traces $\trace;\tr{t_1};\trace^1$ and
$\trace; \tr{t_1}; \tr{t_2}; \trace^3$ have the same length as before,
but the first pair of transitions that do not agree is closer to the
end of the trace, and the result follows by induction.
Otherwise, assume there are forward transitions between $\tr{t_2}$ and
$\tr{t_1}$, and let $\tr{t_3}$ is the last such transition.
We now show that $\tr{t_1}$ and $\tr{t_3}$ are concurrent, which
requires $\rev{\tr{t_3}}\not\preceq \tr{t_1}$ and
$\tr{t_3}\not\preceq \tr{t_1}$ by Lemma~\ref{con-rev} and
Lemma~\ref{mixedcon}.
Since $\tr{t_3};\tr{t_1}$ is a trace, the transitions cannot be in an
immediate conflict:
$\preS{\tr{t_3}} \cap \preS{\tr{t_1}}= \emptyset$.
This is equivalent to
$\postS{\rev{\tr{t_3}}} \cap \preS{\tr{t_1}}= \emptyset$,
which is $\rev{\tr{t_3}}\not\preceq \tr{t_1}$.
So we only need to prove $\tr{t_3}\not\preceq \tr{t_1}$. Assume for
contradiction that $\tr{t_3}\preceq \tr{t_1}$.
Hence, $\postS{\tr{t_3}} \cap \preS{\tr{t_1}}\neq \emptyset$, so $\p
a \in \postS{\tr{t_3}} \cap \preS{\tr{t_1}}$ for some place $\p a$.
Since $\tr{t_1}$ takes place in $\trace_1$ there is a token in $\p
a$. If that token stays there while computation progresses in
$\trace_2$ towards $\tr{t_3}$, then $\tr{t_3}$ will place another
token in $\p a$: contradiction with the 1-safe property.
So there is a transition $\tr{t_4}$ between $\tr{t_2}$ and $\tr{t_3}$
that consumes the token from $\p a$.
Hence, $\tr{t_4}$ and $ \tr{t_1}$ are in an immediate conflict and
form a forward trace $\tr{t_4};\ldots;\tr{t_3};\tr{t_1}$:
contradiction.
  
Since $\tr{t_3}$ and $\tr{t_1}$ are concurrent we can apply the rule
$\tr{t_3};\tr{t_1} \ceq \tr{t_1};\tr{t_3}$, which `moves' $\tr{t_1}$
closer to $\tr{t_2}$.
We then continue this way until $\tr{t_2}$ and $\tr{t_1}$ are adjacent
and are shown to be concurrent.
Finally, we swap them and, since we get a later pair of transitions
that do not agree, we conclude by induction.

\item $t_1=\rev{\tr{t_1}}$ and $t_2=\rev{\tr{t_2}}$. 
We have assumed that $t_1 \neq t_2$.
Moreover, the two transitions consume different tokens, that is
$\preS{\rev{\tr{t_1}}} \cap \preS{\rev{\tr{t_2}}} = \emptyset$,
otherwise we would have that
$\postS{\tr{t_1}} \cap \postS{\tr{t_2}} \neq \emptyset$, which is
impossible since occurrence nets have no backward conflicts.
Since the two traces are coinitial and cofinal, then either there
exists $\rev{\tr{t_1}} \in \trace_2$ or the effect of $\rev{\tr{t_1}}$
is undone later in the trace $\trace_1$ by $\tr{t_1}$.
We now consider the two cases in turn:

\begin{enumerate}
\item We can decompose the two traces as
\[
\begin{array}{lll}
  \trace_1
  &
  =
  &
  \rev{\trace^a};\rev{\tr{t_1}};\rev{\trace^b_1};\trace^c_1
  \\[4pt]

  \trace_2
  &
  =
  &
  \rev{\trace^a};\rev{\tr{t_2}};\rev{\trace^b_{21}};
  \rev{\tr{t_1}}\rev{\trace^b_{22}};\trace^c_2
\end{array}
\]
where $\rev{\trace^b_{21}}, \rev{\trace^b_{22}}$ are (potentially
empty) reverse traces and $\trace^c_1$ and $\trace^c_2$ are forward
traces.
Assume that $\rev{\tr{t_3}}$ is the last transition in
$\rev{\trace^b_{21}}$, i.e., just before $\rev{\tr{t_1}}$.
Next, we show that $\rev{\tr{t_3}}$ and $\rev{\tr{t_1}}$ are
concurrent, which requires $\rev{\tr{t_1}}\not\preceq \tr{t_3}$ and
$\tr{t_1}\not\preceq \tr{t_3}$ by Lemma~\ref{con-rev} and
Lemma~\ref{mixedcon}.
Since $\rev{\tr{t_3}};\rev{\tr{t_1}}$ is a trace, the transitions
cannot be in an immediate conflict:
$\preS{\rev{\tr{t_3}}} \cap \preS{\rev{\tr{t_1}}}= \emptyset$. This is
equivalent to
$\preS{\rev{\tr{t_3}}} \cap \postS{\tr{t_1}}= \emptyset$, which is
$\tr{t_1} \not\preceq \rev{\tr{t_3}}$.
So, we only need to show $\tr{t_1}\not\preceq \tr{t_3}$. Assume for
contradiction that $\tr{t_1}\preceq \tr{t_3}$.
Hence, $\postS{\tr{t_1}} \cap \preS{\tr{t_3}}\neq \emptyset$, so $\p
a \in \postS{\tr{t_1}} \cap \preS{\tr{t_3}}$ for some place $\p a$.
Since $\rev{\tr{t_1}}$ takes place in $\trace_1$ there is a token in
$\p a$. If that token stays there while computation progresses in
$\trace_2$ towards $\rev{\tr{t_3}}$, then $\rev{\tr{t_3}}$ will place
another token in $\p a$: contradiction.
So there is a transition $\rev{\tr{t_4}}$ between $\rev{\tr{t_2}}$ and
$\rev{\tr{t_3}}$ that consumes the token from $\p a$. Hence, $\p
a \in \preS{\rev{\tr{t_4}}}$, which is $\p a \in \postS{\tr{t_4}}$.
Combined with $\p a \in \postS{\tr{t_1}}$ the last given backward
conflict: contradiction.
  
Since $\rev{\tr{t_3}}$ and $\rev{\tr{t_1}}$ are concurrent we can move
$\rev{\tr{t_1}}$ closer to $\rev{\tr{t_2}}$ by using the rules of
$\ceq$.
We then continue this way until $\rev{\tr{t_2}}$ and $\rev{\tr{t_1}}$
are adjacent and are shown concurrent. Finally, we swap them and,
since we get a later pair of transitions that do not agree, we
conclude by induction.
 
\item
We have that
$\trace_1= \rev{\trace^a_1};\rev{\tr{t_1}};\rev{\trace^b_1};\trace^c_1$,
where $\trace^c_1$ is forward only trace that contains $\tr{t_1}$.
We show that $\rev{\tr{t_1}}$ is concurrent with all the reverse
transitions in $\rev{\trace^b_1}$ in the corresponding way as in part
(a) above.
We then swap $\rev{\tr{t_1}}$ with all the reverse transitions that
follow it, thus making it the last such transition of the trace.
Next, we show that $\tr{t_1}$ is concurrent with all transitions in
$\trace^c_1$ that appear before $\tr{t_1}$.
This is done very much as in the proof of Shortening Lemma. This
allows us to swap $\tr{t_1}$ with all forward transitions preceding
it, resulting in the sub-trace $\rev{\tr{t_1}};\tr{t_1}$.
We can then apply the axiom $\rev{\tr{t_1}};\tr{t_1} \ceq \voidt_m$
with $m$ being the marking of the net before firing $\rev{\tr{t_1}}$.
This gives a shorter trace equivalent to $\trace_1$, and we finish by
induction.
\end{enumerate}
\end{enumerate}
\end{proof}

With Theorem~\ref{th:causal} we proved that the notion of causal
consistency characterises a space for admissible rollbacks which are:
(i) consistent in the sense that they do not lead to previously
unreachable configurations, and (ii) flexible enough to allow
rearranging of independent undo events.
This implies that starting from an initial marking, all the markings
reached by mixed computations are markings that could be reached by
performing only forward computations.
Hence, we have:
\begin{thm}\label{th:rev_sem}
Let $O$ be an occurrence net and $m_0$ an initial marking. Then,
\begin{align*}
	(\rev O,m_0) \red[]^* (\rev O, m) \iff (\rev O,m_0) \fred[]^*
	(\rev O, m).
\end{align*}
\end{thm}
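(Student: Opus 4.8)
The plan is to reduce an arbitrary mixed computation to a purely forward one by pushing all reverse steps to the front and then observing that none of them can actually fire from the initial marking. The $\Leftarrow$ direction is immediate: every forward firing is in particular a firing, so $(\rev O,m_0)\fred[]^{*}(\rev O,m)$ yields $(\rev O,m_0)\red[]^{*}(\rev O,m)$ with no further work.

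For the $\Rightarrow$ direction I would start from a mixed computation $(\rev O,m_0)\red[\trace](\rev O,m)$ and apply the Parabolic Lemma (Lemma~\ref{lm:rearranging}) to obtain forward traces $\trace_1,\trace_2$ with $\trace\ceq\rev{\trace_1};\trace_2$. Since $\trace$ and $\rev{\trace_1};\trace_2$ are coinitial (both begin at $m_0$) and reverse-equivalent, Causal Consistency (Theorem~\ref{th:causal}) makes them cofinal, so $\rev{\trace_1};\trace_2$ also runs from $m_0$ to $m$; write it as $(\rev O,m_0)\red[\rev{\trace_1}](\rev O,m')\red[\trace_2](\rev O,m)$ for some intermediate $m'$.

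The crux is then to show $\trace_1=\voidt$. Because $O$ is an occurrence net, its initial marking is $m_0=\minp{O}$, the set of places with empty preset. A reverse transition $\rev{\tr t}$ is enabled at a marking only if its preset $\preS{\rev{\tr t}}_{\rev O}=\postS{\tr t}_O$ is contained in that marking; but every place of $\postS{\tr t}_O$ has $\tr t$ in its preset and hence does not belong to $\minp{O}$. As $\postS{\tr t}_O\neq\emptyset$, no reverse transition whatsoever is enabled at $m_0$. Consequently the backward prefix $\rev{\trace_1}$ cannot perform even a single step from $m_0$, forcing $\trace_1=\voidt$ and $m'=m_0$. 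What remains is the forward trace $\trace_2$ realising $(\rev O,m_0)\fred[\trace_2](\rev O,m)$, i.e.\ $(\rev O,m_0)\fred[]^{*}(\rev O,m)$, as required.

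I expect the only delicate point to be the justification that $\rev{\trace_1}$ is genuinely unenabled at $m_0$: this is where I must combine two facts specific to occurrence nets, namely that the initial marking coincides with $\minp{O}$ and that transitions have non-empty postsets, so that the preset of any reverse transition (a non-empty postset of a forward transition) can never sit inside the set of source places. Everything else is a routine appeal to Lemma~\ref{lm:rearranging} and the $\Rightarrow$ direction of Theorem~\ref{th:causal}.
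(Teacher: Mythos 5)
Your proof is correct and follows essentially the same route as the paper's: the Parabolic Lemma to obtain $\trace \ceq \rev{\trace_1};\trace_2$, Causal Consistency to conclude this rearranged trace also runs from $m_0$ to $m$, and the observation that no reverse transition can fire at the initial marking to force $\trace_1 = \voidt$. Your justification of that last step (the preset of any reverse transition in $\rev O$ is the non-empty postset $\postS{\tr t}_O$ of a forward transition, whose places all have non-empty presets and hence lie outside $m_0 = \minp O$) merely spells out the detail the paper leaves implicit.
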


\begin{proof}
The $\Leftarrow$ part follows trivially since ${\fred[]} \subset
{\red[]}$.  For the $\Rightarrow$ part, we have that $(\rev
O,m_0) \red[\trace] (\rev O, m)$ for some $\trace$.
By using Lemma~\ref{lm:rearranging}, we obtain
$\trace \ceq \rev{\trace_1};\trace_2$ for some forward traces
$\trace_1, \trace_2$.
Since ${\trace} \ceq {\rev{\trace_1};\trace_2}$,
Theorem~\ref{th:causal} gives us $(\rev
O,m_0) \bred[\trace_1] \fred[\trace_2] (\rev O, m)$. Let us note that
$m_0$ is the initial marking, and this implies that no backward
computations can take place from $(\rev O, m_0)$.
This means that $\trace_1 = \voidt_{m_0}$. Hence, we have
$\trace \ceq \trace_2$, where $\trace_2$ is a forward trace.
Finally, Theorem~\ref{th:causal} gives us $(O,m_0) \fred[\trace_2]
(O,m)$, which implies $(\rev O, m_0) \fred[\trace_2] (\rev O,m)$ as
required.
\end{proof}

\section{Reversing \textsc{p/t} nets} 

This section takes advantage of the classical unfolding construction
for \textsc{p/t} nets and the reversible semantics of occurrence nets
to add causally-consistent reversibility to \textsc{p/t} nets.

\begin{defi}
  Let $(N, m)$ be a marked \textsc{p/t} net and $\mathcal{U}\lbrack
  N,m \rbrack$ its unfolding. The reversible version of $(N, m)$,
  written $\rev{(N, m)}$, is $\rev{\mathcal{U}\lbrack N, m\rbrack}$.
\end{defi}

The following result states that a reversible net is a conservative
extension of its original version, namely reversibility does not
change the set of reachable markings.
The result is a direct consequence of
Lemma~\ref{lem:conservative-occurrence} and the fact
that morphisms preserve reductions~\cite[Theorem
3.1.5]{Winskel86}.

\begin{lem}\label{lm:corr}
  $(N, m) \red[]^* (N,m')$ if and only if $\rev{(N, m)} \fred[]^*
  (\rev O,m'')$ and $m' = f_s(m'')$, where
  $(f_s,f_t): \mathcal{U}\lbrack N,m\rbrack \rightarrow N$, defined
  such that $f_S(\p a, \_, \_) = \p a$ and $f_T(\tr t, \_) = \tr t$,
  is the folding morphism.
\end{lem}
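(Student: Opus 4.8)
The plan is to prove a biconditional relating firing sequences in the original \textsc{p/t} net $(N,m)$ to forward firing sequences in its reversible unfolding. The key external ingredients are cited: Lemma~\ref{lem:conservative-occurrence}, which shows that forward firings in $\rev O$ coincide exactly with firings in $O$, and the fact that the folding morphism preserves reductions (Winskel). So the real content reduces to relating computations of $N$ with computations of its unfolding $\mathcal{U}\lbrack N,m\rbrack$, then transporting along Lemma~\ref{lem:conservative-occurrence} to the reversible version.

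\textbf{Forward direction ($\Rightarrow$).} Suppose $(N,m)\red[]^*(N,m')$. I would proceed by induction on the length of the firing sequence $\fseq = \tr t_1;\ldots;\tr t_n$. For the base case the empty sequence maps to the empty sequence and $m' = f_S(\minp{\mathcal U\lbrack N,m\rbrack}) = m$ by the \textsc{ini-mk} rule. For the inductive step, the unfolding's universal/completeness property guarantees that any firing of $N$ consuming a submarking can be matched: given the concurrent set $H$ of unfolding places mapped by $f_S$ onto the preset of $\tr t$, the \textsc{pre} and \textsc{post} rules produce a transition $\tr t(H)$ in $\mathcal U\lbrack N,m\rbrack$ whose firing is mapped by the morphism onto the firing of $\tr t$ in $N$. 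Applying Lemma~\ref{lem:conservative-occurrence} turns each such forward firing in $\mathcal U\lbrack N,m\rbrack$ into a forward firing in $\rev{\mathcal U\lbrack N,m\rbrack} = \rev{(N,m)}$, yielding $\rev{(N,m)}\fred[]^*(\rev O,m'')$ with $f_S(m'') = m'$ by construction.

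\textbf{Backward direction ($\Leftarrow$).} Suppose $\rev{(N,m)}\fred[]^*(\rev O,m'')$. By Lemma~\ref{lem:conservative-occurrence} this is exactly a forward computation in the occurrence net $\mathcal U\lbrack N,m\rbrack$ itself. Since $(f_S,f_T)$ is a net morphism, it preserves the firing relation (this is the cited \cite[Theorem 3.1.5]{Winskel86}): each transition $\tr t(H)$ fired in the unfolding projects to $f_T(\tr t(H)) = \tr t$, and each firing $(\mathcal U\lbrack N,m\rbrack, n)\red (\mathcal U\lbrack N,m\rbrack, n_1)$ maps to a valid firing $(N, f_S(n))\red[\tr t](N, f_S(n_1))$. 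Composing these firings along the sequence gives $(N,m)\red[]^*(N, f_S(m''))$, and setting $m' = f_S(m'')$ completes this direction.

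\textbf{Main obstacle.} The delicate point is the forward direction: establishing that every firing of $N$ can be \emph{matched} in the unfolding, which is the completeness of the unfolding rather than mere soundness (the backward direction). I would rely on the fact that the unfolding is the \emph{least} occurrence net accounting for all computations of $N$ — concretely, that whenever a submarking of $N$ matching $\preS{\tr t}$ is reachable, there is a corresponding concurrent set $H$ of unfolding places witnessing $CO(H)$ with $f_S(H) = \preS{\tr t}$, so the \textsc{pre} rule applies. This is standard for the Winskel-style unfolding but must be invoked carefully, since the initial marking may place several tokens in a place and transitions may be multiset-weighted; the disambiguating index $i$ in the place names $\p a(H,i)$ is precisely what ensures enough copies are generated to match multiplicities. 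Given this completeness together with Lemma~\ref{lem:conservative-occurrence}, the two directions close the biconditional.
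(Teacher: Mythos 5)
Your proof is correct and takes essentially the same route as the paper's: both reduce the claim to the two-way correspondence between computations of $(N,m)$ and of its unfolding (completeness ``by the unfolding construction'' in one direction, preservation of reductions by the folding morphism via \cite[Theorem 3.1.5]{Winskel86} in the other), and then transfer to the reversible net by Lemma~\ref{lem:conservative-occurrence}. The only difference is presentational: you spell out the length induction and the matching argument that the paper compresses into a single appeal to the unfolding construction.
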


\begin{proof}
Let $\mathcal{U}\lbrack N,m\rbrack = O$ and $m_0$ be such that $m =
f_s(m_0)$: namely $m_0$ is the set of the minimal places of $O$.
By the unfolding construction, $(N, m) \red[]^* (N,m')$ implies $(O,
m_0)\red[] ^* (O,m'')$ for an appropriate $m''$.
Since morphisms preserve reductions~\cite[Theorem 3.1.5]{Winskel86},
$(O, m_0)\red[] ^* (O,m'')$ implies $(N, m) \red[]^* (N,m')$.
Hence, $(N, m) \red[]^* (N,m')$ if and only if $(O, m_0)\red[] ^*
(O,m'')$.
Finally, we observe that $(O, m) \red (O,m')$ if and only if $(\rev O,
m) \fred (\rev O,m')$ by Lemma \ref{lem:conservative-occurrence}.
\end{proof}

We remark that the reversible version of a \textsc{p/t} net is defined
as the reversible version of an occurrence net, namely its
unfolding.
Consequently, all properties shown in the previous section apply to
the reversible semantics of
\textsc{p/t} nets.
In particular, Lemma~\ref{lm:corr} combined with
Theorem~\ref{th:rev_sem} ensures that all markings reachable by the
reversible semantics are just the reachable markings of the original
P/T net.
Formally:
\begin{thm}
  $(N, m) \red[]^* (N,m')$ if and only if $\rev{(N, m)} \red[]^* (\rev
  O,m'')$ and $m' = f_s(m'')$, where $(f_s,f_t): \mathcal{U}\lbrack
  N,m\rbrack \rightarrow N$, defined such that $f_S(\p a, \_, \_) = \p
  a$ and $f_T(\tr t, \_) = \tr t$, is the folding morphism.
\end{thm}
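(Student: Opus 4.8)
The plan is to reduce the statement to the two results already established, namely Lemma~\ref{lm:corr}, which relates forward computations of $(N,m)$ with forward computations of its reversible unfolding, and Theorem~\ref{th:rev_sem}, which collapses mixed computations of a reversible occurrence net, started from its initial marking, to forward-only ones. First I would fix notation by setting $O = \mathcal{U}\lbrack N,m\rbrack$ and taking $m_0$ to be the set of minimal places of $O$, so that $m = f_s(m_0)$ and $\rev{(N,m)} = (\rev O, m_0)$. The crucial feature of $m_0$ is that it is the \emph{initial} marking of the unfolding, which is precisely the hypothesis needed to invoke Theorem~\ref{th:rev_sem}.

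For the $\Rightarrow$ direction I would argue directly: assuming $(N,m) \red[]^* (N,m')$, Lemma~\ref{lm:corr} yields some $m''$ with $\rev{(N,m)} \fred[]^* (\rev O, m'')$ and $m' = f_s(m'')$. Since every forward computation is in particular a mixed one, i.e.\ $\fred[]^* \subseteq \red[]^*$, we immediately obtain $\rev{(N,m)} \red[]^* (\rev O, m'')$ with $m' = f_s(m'')$.

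The $\Leftarrow$ direction is where the real content lies. Here I would start from the hypothesis $\rev{(N,m)} \red[]^* (\rev O, m'')$, which by the choice of $m_0$ reads $(\rev O, m_0) \red[]^* (\rev O, m'')$ with $m_0$ the initial marking. Applying Theorem~\ref{th:rev_sem} converts this mixed computation into a forward-only one $(\rev O, m_0) \fred[]^* (\rev O, m'')$ reaching the \emph{same} marking $m''$. At this point Lemma~\ref{lm:corr} applies in the opposite direction and gives $(N,m) \red[]^* (N, f_s(m'')) = (N,m')$, as required.

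The main obstacle is entirely concentrated in the $\Leftarrow$ direction and is resolved by Theorem~\ref{th:rev_sem}: a mixed computation in $\rev O$ may perform backward firings and hence visit intermediate markings that are not forward-reachable, so the naive inclusion $\red[]^* \subseteq \fred[]^*$ fails and cannot be used. What makes the argument go through is that $m_0$ is the initial marking, so Theorem~\ref{th:rev_sem} guarantees that the \emph{final} marking $m''$ of any mixed computation is still forward-reachable; only then can the forward-oriented Lemma~\ref{lm:corr} be applied to pull the result back to the original P/T net.
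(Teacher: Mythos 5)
Your proposal is correct and follows essentially the same route as the paper's own proof: the $\Rightarrow$ direction via Lemma~\ref{lm:corr} together with the inclusion ${\fred[]} \subset {\red[]}$, and the $\Leftarrow$ direction by first invoking Theorem~\ref{th:rev_sem} (which applies because the computation starts from the initial marking of the unfolding) to collapse the mixed computation to a forward one, and then applying Lemma~\ref{lm:corr}. Your additional remark explaining why the initiality of $m_0$ is the key hypothesis enabling Theorem~\ref{th:rev_sem} is a useful clarification that the paper leaves implicit.
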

\begin{proof}
The $\Rightarrow$ part holds thanks to Lemma \ref{lm:corr} and because
$\fred[] \subset \red[]$.
For the $\Leftarrow$ part we proceed as follows.
Since $N$ is the product of the unfolding, we have that $\rev{(N,
m)} \red[]^* (\rev O,m'')$ implies $\rev{(N, m)} \fred[]^* (\rev
O,m'')$ by Theorem \ref{th:rev_sem}.
Then we apply Lemma~\ref{lm:corr} and obtain $(N, m) \red[]^* (N,m')$
as required.
\end{proof}

\section{Finite Representation of Reversible \textsc{p/t} Nets}

As seen in Figure~\ref{fig:RevN3}, the reversible version of a finite
net may be infinite.  In this section we show how to represent
reversible nets in a compact, finite way by using coloured Petri nets.
We assume infinite sets $\varset$ of variables and $\colourset$ of
colours, defined such that $\varset\subset\colourset$.
We use $x, y, \ldots$ to range over $\varset$, $c, d, \ldots$ to range
over $\colourset$ and $\p c, \p d, \ldots$ to range over
$\colourset\setminus\varset$.
For $c\in\colourset$, we write $\vars c$ for the set of variables in
$c$. With abuse of notation we write $\vars m$ for the set of
variables in a multiset $m\in\nat^{\places \times \colourset}$.
Let $\sigma:\varset\rightarrow\colourset$ be a partial function and
$c$ a colour (also, $m\in \nat^{\places\times\colourset}$), we write
$c\sigma$ (respectively $m\sigma$) for the simultaneous substitution
of each variable $x$ in $c$ (respectively $m$) by $\sigma(x)$.
  
\begin{defi}[\textsc{c-p/t} net]
\label{def:cptnet}
A \emph{coloured place/transition net} (\textsc{c-p/t} net) is a
4-tuple $N = (S_N, T_N, \preS{\_}_N, \postS{\_}_N)$, where
$S_N\subseteq \mathcal{P}$ is the (nonempty) set of places,
$T_N \subseteq \transitions$ is the set of transitions, and the
functions $\preS{\_}_N, \postS{\_}_N:
T_N\rightarrow \nat^{S_N\times\colourset}$ assign source and target to
each transition defined such that $\vars {\postS {\tr t}}
\subseteq \vars {\preS {\tr t}}$.
A marking of a \textsc{c-p/t} net $N$ is a multiset over
$S_N\times\colourset$ that does not contain variables, i.e.,
$m \in \nat^{S\times \colourset}$ and $\vars m = \emptyset$.
A marked \textsc{c-p/t} net is a pair $(N, m)$ where $N$ is
a \textsc{p/t} net and $m$ is a marking of $N$.
\end{defi}

\textsc{c-p/t} nets generalise \textsc{p/t} nets by extending markings to   
multisets of coloured tokens, and transitions to patterns that need to
be instantiated with appropriate colours for firing, as formally
stated by the firing rule below.
\[
\begin{array}{c}
\mathrule{coloured-firing}
           {\tr t = m \;\pntrans\;  m' \in T_N}
           {(N, m\sigma\oplus m'')
            \red
            (N, m'\sigma\oplus m'')}
\end{array}
\]
The firing of a transition $t = m \;\pntrans\; m' $ requires to
instantiate $m$ and $m'$ by substituting variables by colours, namely
the firing of $t$ consumes the instance $m\sigma$ of the preset $m$
and produces the instance $m'\sigma$ of the postset $m'$.
Note that the initial marking of a net does not contain variables by
Definition~\ref{def:cptnet}.
Hence, $ m\sigma\oplus m''$ does not contain any
variable. Consequently, $\sigma(\vars {\preS {\tr t}}) \cap \varset
= \emptyset$.
Moreover, Definition~\ref{def:cptnet} also requires $\vars {\postS
{\tr t}} \subseteq \vars {\preS {\tr t}}$. Hence, $m'\sigma\oplus m''$
does not contain any variable.

\begin{exa}
\begin{figure}
\subfigure[]{
\label{fig:cpn-example-1}
  $$
  \xymatrix@R=1.5pc@C=1pc{
  \drawplace\ar_x[dr]
  \nameplaceup {\p a}
  \POS[]+<-0.1pc,.25pc>\drop{\scriptscriptstyle{{\p c_1}}}
  \POS[]-<-0.1pc,.3pc>\drop{\scriptscriptstyle{{\p c_2}}}
  &
  &
  \drawplace\ar^y[dl]
  \nameplaceup {\p b}
  \POS[]\drop{\scriptscriptstyle{{\p c_3}}}
  \\
  &
  \drawtrans {\tr{t}} \ar_{(x,y)}[d]
  &
  \\
  & 
  \drawplace
  \nameplacedown {\p c}
  &
  }
$$
}
\hspace{2cm}
\subfigure[]{
\label{fig:cpn-example-2}
$$
  \xymatrix@R=1.5pc@C=1pc{
  &
  \drawplace\ar_{(\p c_1,x)}[dl]\ar^{(x, \p c_2)}[dr]
  \nameplaceup {\p a}
  \POS[]+<-0.15pc,.2pc>\drop{\scriptscriptstyle{{(\p c_1,}}}
  \POS[]-<-0.15pc,.25pc>\drop{\scriptscriptstyle{{\p c_2)}}}
  \\
  \drawtrans {\tr{t}_1} \ar_{(x,\p c_3)}[d]
  &&
  \drawtrans {\tr{t}_2} \ar_{\p c_4}[d]
  &
  \\
  \drawplace
  \nameplacedown {\p b}
  &&
  \drawplace
  \nameplacedown {\p c}
  }
$$
}
\caption{A simple \textsc{c-p/t} net} 
\label{fig:cpn-example}
\end{figure}
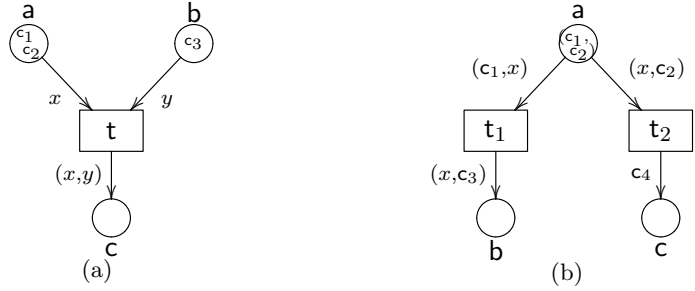

Consider the simple \textsc{c-p/t} net depicted in
Figure~\ref{fig:cpn-example-1}, which consists of the three places $\p
a$, $\p b$ and $\p c$ and the coloured transition $\tr t = \p
a(x) \oplus \p b (y) \;\pntrans\; \p c (x,y)$.
The firing of $\tr t$ consumes a token from $\p a$ and another one
from $\p b$ and produces a token in $\p c$, which is coloured by a
pair containing the colours of the consumed tokens.
Take the marking $m = \p a(\p c_1) \oplus \p a(\p c_2) \oplus \p b (\p
 c_3)$. There are two possible firings of $\tr t$ in $m$: one
 instantiates $x$ by $\p c_1$ and $y$ by $\p c_3$, and the other
 instantiates $x$ by $\p c_2$ and $y$ by $\p c_3$.
In the former case $m \red \p a(\p c_2)\oplus \p c (\p c_1,\p c_3)$;
in the latter $m \red \p a(\p c_1)\oplus \p c (\p c_2,\p c_3)$.

The enabling of a coloured transition may depend on the colour of the
tokens. For instance, the transition $\tr t_1$ in
Figure~\ref{fig:cpn-example-2} can be fired only when $\p a$ contains
a token coloured by a pair whose first component is $\p c_1$.
Similarly, $\tr t_2$ can be fired with a token from $\p a$ only when
its colour is a pair whose second component is $\p c_2$.
In particular, the marking $m = \p a (\p c_1, \p c_2)$ enables both
$\tr t_1$ and $\tr t_2$: in the former case $x$ is substituted by $\p
c_2$ and $m \red[\tr t_1] \p b(\p c_2,\p c_3)$; in the latter $x$ is
substituted by $\p c_1$ and $m \red[\tr t_2] \p c (\p c_4)$.
On the contrary, $\tr t_2$ is not enabled in $m' = \p a (\p c_1, \p
d)$ when $\p d \neq \p c_2$ because there available token $\p a(\p
c_1, \p d)$ is not an instance of the preset of $\tr t_2$, namely
there is no substitution $\sigma$ such that $\p a(\p c_1, \p d)= \p
a(x,\p c_2)\sigma$.
\end{exa}

We now exploit colours to attach to each token its execution history
and propose an encoding that associates each \textsc{p/t} net $N$ with
a \textsc{c-p/t} net $\enc N$ whose tokens carry their execution
history.
Our construction resembles the unfolding constructions of \textsc{p/t}
nets~\cite{NielsenPW81}.
Instead of using different places for representing tokens with
different causal histories and different transitions for representing
firings associated with different tokens, we use colour to distinguish
tokens and pattern matching in transitions to distinguish firings.

Our construction relies on the set of colours $\colourset$ defined as
the least set that contains $\varset$ and it is closed under the
following rules.
\[
\begin{array}{l}
  \mathrule{token}{
    h\in 2^{\colourset} \quad n\in\nat}{(h, n) \in \colourset
  }
  \hspace{1cm}
  \mathrule{elem}{
    x \in \transitions\cup\places\quad h\in 2^{\colourset}
  }{
    x(h) \in \colourset
  }
\end{array}
\]
Colours resemble the unfolding construction in
Figure~\ref{fig:unfolding}: the colours for tokens are $(h,n)$, where
$h$ denotes its (possibly empty) set of causes and $n$ is a natural
number used for distinguishing tokens with identical causal history.
Causal histories are built up from coloured versions of transitions
($\tr t(h)$) and places ($\p a(h)$).

\begin{defi}[P/T as C-P/T] \label{def:ptascpt}
Let $N = (S_N, T_N, \preS{\_}_N, \postS{\_}_N)$ be a \textsc{p/t} net.
Then, $\enc N$ is the \textsc{c-p/t} net defined as $\enc N = (S_N,
T_N, \preS{\_}_{\enc N}, \postS{\_}_{\enc N})$, where
5
\begin{itemize}
\item
$\preS{\tr t}_{\enc N} = \p{a_1}(x_1)\oplus\ldots\oplus\p{a_n}(x_n)$
where $\preS{\tr t}_{N} = \p{a_1}\ldots\p {a_n}$ and $\forall 1\leq
i,j \leq n. (x_i\in\varset \wedge x_i = x_j \implies i = j)$;

\item
      $\postS{\tr t}_{\enc N} =
          \{\p a (\{\tr t(h)\}, i) 
          \mid 
          \p a \in \supp{(\postS{\tr t}_{ N})}
          \  
          \wedge\  
          1 \leq i \leq \postS{\tr{t}}_N(\p a)
          \ 
          \wedge\  
         h = \preS{\tr t}_{\enc N}\}$.
\end{itemize}
A marked net $(N,m)$ is encoded as 
$\enc {(N,m)} =
              (\enc N, \enc m)$ where
$\enc m =  \{\p a (\emptyset, i) 
          \mid 
          \p a \in \supp{(m)}
          \  
          \wedge\  
          1 \leq i \leq m(\p a)
         \}$.
\end{defi}

The encoding does not alter the structure of a net, it only adds
colours to its tokens.
In fact, an encoded net has the same places and transitions as the
original net, and presets and postsets of each transition have the
same support.
Added colours do not interfere with firing because the preset of each
transition uses different colour variables for different tokens.
The colour $\{\tr t(h)\}$ assigned to each token produced by the
firing of $\tr t$ describes the causal history of the token, namely it
indicates that the token has been produced by $\tr t$ after consuming
the tokens in the preset of $\tr t$, which is denoted by $h$.
The natural numbers in the tokens in the postsets distinguish multiple
tokens produced by the same firing.
Tokens in the initial marking have empty causal history and are
coloured as $(\emptyset, i)$.

\begin{exa}
Consider the marked \textsc{p/t} net $(N,\p a\oplus\p a)$ in
Figure~\ref{fig:N-multiple}; its initial marking $\p a \oplus \p a$
assigns two tokens to place $\p a$ and each firing of its transition
$\tr {t_1}$ produces two token in place $\p b$.
Its encoding as a coloured \textsc{c-p/t} net is shown in
Figure~\ref{fig:N-multiple-cp}.
The colours of tokens in the initial marking have different natural
numbers as second components, and these numbers are inherited by the
two tokens generated by $\tr{t_1}$.

\begin{figure}[t]
\subfigure[$(N,\p a\oplus \p a)$]{ 
\label{fig:N-multiple}
$$
 \xymatrix@R=1pc@C=1pc{
 &
 \drawplace\ar@/_/[dl]
 \POS[]+<-0.2pc,.2pc>\drop{\bullet}
 \POS[]+<.2pc,-.2pc>\drop{\bullet}
 \nameplaceup{\p a}
 \\
 \drawtrans {\tr{t_1}} \ar@/_/_{2}[dr] 
 &
 &
 \drawtrans {\tr{t_2}} \ar@/_/[ul] 
 \\
 &
 \drawplace\ar@/_/[ur]
 \nameplacedown {\p b}
}$$
}
\hspace{2cm}
\subfigure[$\enc{(N,\p a\oplus \p a)}$]{ 
\label{fig:N-multiple-cp}
$$
 \xymatrix@R=1pc@C=1pc{
 &
 *[o]=<1.6pc,1.6pc>{\ }\drop\cir{}\ar@/_/_{x}[dl]
 \POS[]+<-0.1pc,.3pc>\drop{\scriptscriptstyle{(\!\emptyset,1\!)}}
 \POS[]+<.1pc,-.3pc>\drop{\scriptscriptstyle{(\!\emptyset,2\!)}}
 \POS[]+<0pc,1.2pc>\drop{{\p a}}
 \\
 \drawtrans {\tr{t_1}} \ar@/_/_{(\tr{t_1}(\p a(x)),1)\oplus (\tr{t_1}(\p a(x)),2)}[dr] 
 &
 &
 \drawtrans {\tr{t_2}} \ar@/_/_{(\tr{t_2}(\p b(x)),1)}[ul] 
 \\
 &
 \drawplace\ar@/_/_{x}[ur]
 \nameplacedown {\p b}
}
$$
}
\caption{A simple \textsc{P/T} with
multiple tokens with identical causal history}
\label{fig:multiple-tokens}
\end{figure}
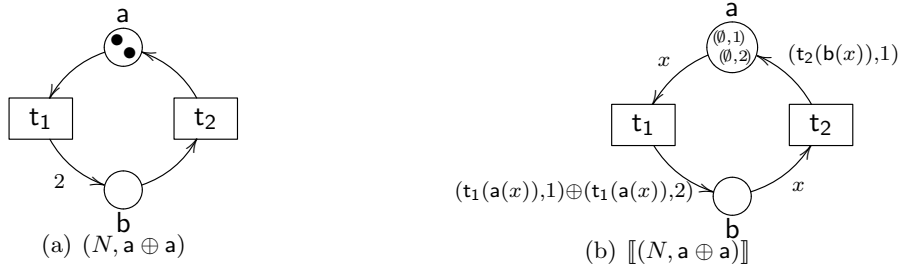
\end{exa}

\begin{exa}
\label{ex:encoding-nets}
The encoding of the nets in Figure~\ref{fig:nets} are shown in
Figure~\ref{fig:enc-nets}.
We comment on $\enc{N_1}$, which is the encoding of $N_1$.
The transition $\tr{t_1} = \p a \pntrans \p c$ in $N_1$ is encoded as
$\p a(x) \pntrans \p c(\tr{t_1}(a(x)),1)$, i.e., the firing of
$\tr{t_1}$ consumes a token with colour $h$ from place $\p a$ and
generates a token in $\p c$ with colour $(\tr{t_1}(a(h)),1)$.
The transition $\tr{t_2} = \p b \oplus c \pntrans \p e$ has two places
in the preset and uses two variables $x$ and $y$ in its encoded form
$\p b(x) \oplus c(y) \pntrans \p e(\tr{t_2}(\p b(x)\oplus \p
c(y)),1)$.
Note that the colour of the token produced in $\p c$ carries the
information of the tokens consumed from both places $\p b$ and $\p
c$.
The encoding for $\tr {t_3}$ is defined analogously.

We illustrate a sequence of firings of $\enc{(N_1, \p a \oplus \p
b \oplus \p c \oplus \p d)}$.
We remark that all the three transitions of the net $\enc{N_1}$ are
enabled at the initial marking.
Let us consider the case in which $\tr t_1$ is fired by consuming the
unique token in $\p a$:
\begin{align*}
  \enc{(N_1, \p a \oplus \p b \oplus \p c \oplus \p d)}  = \
  &
  (\enc{{N_1}}, \p a(\emptyset,1) \oplus \p b(\emptyset,1)
  \oplus
  \p c(\emptyset,1)\oplus \p d(\emptyset,1))
  \\
  \red [{\tr {t_1}}] 
  &
  (\enc{{N_1}}, \p b(\emptyset,1)
  \oplus \p c(\tr{t_1}(\p a(\emptyset,1)),1)
  \oplus \p c(\emptyset,1)
  \oplus \p d(\emptyset,1))
\end{align*}
In this case, $\tr t_1$ consumes a token of colour $(\emptyset,1)$
from $\p a$ and produces a token of colour $(\tr{t_1}(\p a$
$(\emptyset,1)),1)$ in $\p c$.
The causal history of the produced token, i.e., the first component of
its colour, $\tr{t_1}(\p a(\emptyset,1)$ indicates that the token has
been produced by the firing of $\tr t_1$ that consumed a token of
colour $(\emptyset,1)$ from $\p a$.
In the obtained marking both $\tr {t_2}$ and $\tr{t_3}$ are
enabled. Moreover, each transition can be fired in two different ways
depending on which one of the two available tokens in $\p c$ is
consumed.
One possible firing of $\tr{t_2}$ is 
\begin{align*}
 &
 (\enc{{N_1}}, \p b(\emptyset,1)
 \oplus \p c(\tr{t_1}(\p a(\emptyset,1)),1)
 \oplus \p c(\emptyset,1)
 \oplus \p d(\emptyset,1)) 
 \\
 &
 \hspace{3cm}\red[{\tr{t_2}}]
 (\enc{{N_1}}, \p e(\tr{t_2}(\p b(\emptyset,1)
 \oplus \p c(\emptyset,1),1))
 \oplus \p c(\tr{t_1}(\p a(\emptyset,1)),1)
 \oplus \p d(\emptyset,1)) 
\end{align*}
which consumes the tokens in $\p b$ and $\p c$ available in the
original marking.
Alternatively, 
\begin{align*}
  &
  (\enc{{N_1}}, \p b(\emptyset,1)
  \oplus \p c(\tr{t_1}(\p a(\emptyset,1)),1)
  \oplus \p c(\emptyset,1)
  \oplus \p d(\emptyset,1)) 
  \\
  &
  \hspace{3cm}\red[{\tr{t_2}}]
  (\enc{{N_1}}, \p e(\tr{t_2}(\p b(\emptyset,1)
  \oplus \p c(\tr{t_1}(\p a(\emptyset,1)),1)),1)
  \oplus \p c(\emptyset,1)
  \oplus \p d(\emptyset,1)) 
\end{align*}
where the firing of $\tr t_2$ consumes the token in $\p b$ generated
by the previous firing of $\tr t_1$.

We remark that nets in Figure~\ref{fig:enc-nets} use colours whose
second component is $1$.
We recall that natural numbers play the same r\^ole here as in the
standard definition of the unfolding, which is to distinguish tokens
that have the same causal history (see Section~\ref{sec:unf}).
All nets in Figure~\ref{fig:nets} have sets as initial markings and
transitions that do not generate multiple tokens in the same place;
consequently, the usage of natural numbers in these cases is
inessential.
\end{exa}

Although the second component in colours plays no role when reversing
nets, they allow us to state a tight correspondence between the
semantics of the coloured version of a \textsc{p/t} net and its
unfolding.

\begin{lem} 
\label{lemma:correspondence-unfolding-coloured}
Let $(N,m)$ be a marked \textsc{p/t} net and $\mathcal{U}\lbrack
N,m\rbrack = (O,m')$ its unfolding.
Then, $\enc{N,m} \red[s] (\enc{N},m'')$ if and only if $(O,m') \red[s]
(O,m'')$.
\end{lem}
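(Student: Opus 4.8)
The plan is to fix a precise identification between the coloured tokens of $\enc N$ and the places of the unfolding $O$, and then prove both implications together by induction on the length of $s$. The starting observation is that the colour-formation rules (\textsc{token}) and (\textsc{elem}) are a transcription of the unfolding rules of Figure~\ref{fig:unfolding}: I would identify a coloured token $\p a(c)$ carrying colour $c=(h,n)$ with the unfolding place $\p a(H,i)$ exactly when $n=i$ and $h$ equals $H$ recursively (a set of histories of the form $\tr t(h')$). Under this identification the two initial markings coincide on the nose, since by (\textsc{ini-mk}) we have $m'=\minp O=\{\p a(\emptyset,i)\mid \p a\in\supp(m),\ 1\le i\le m(\p a)\}$, which is literally $\enc m$. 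This discharges the base case $s=\voidt$.

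For the inductive step I would prove a single-step correspondence. Suppose $\bar m$ is reached by $s'$ in both nets (identified via the map above). A coloured firing $(\enc N,\bar m)\red[\tr t](\enc N,m'')$ uses a substitution $\sigma$ and consumes $\preS{\tr t}_{\enc N}\sigma=\p{a_1}(c_1)\oplus\cdots\oplus\p{a_n}(c_n)$, whose underlying places form the multiset $\preS{\tr t}_N$. Reading the $\p{a_j}(c_j)$ as unfolding places, the set $H$ they constitute is contained in $\bar m$; since $\bar m$ is a reachable marking of the occurrence net $O$ it is a concurrent set, so $CO(H)$ holds and the (\textsc{pre}) rule yields $\tr t(H)\in T_O$ with $\preS{\tr t(H)}=H$. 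Hence $(O,\bar m)\red[\tr t(H)](O,m''')$ is a legal firing. Its postset is, by (\textsc{post}), $\{\p a(\{\tr t(H)\},i)\mid 1\le i\le\postS{\tr t}_N(\p a)\}$, which matches token-for-token the colours $(\{\tr t(h)\},i)$ with $h=\preS{\tr t}_{\enc N}\sigma$ produced in the coloured net; therefore $m'''=m''$ under the identification. The converse direction is symmetric: given $\tr t(H)\in T_O$ enabled at $\bar m$, I recover $\sigma$ by mapping each preset variable $x_j$ to the colour attached to the corresponding place of $H$, and the same computation shows the coloured firing reaches the identified marking.

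The step needing the most care, and the main obstacle, is obtaining an exact equality of markings as multisets rather than a mere bijection up to renaming. I would pin this down with two observations. First, the index $i$ runs over $1\le i\le\postS{\tr t}_N(\p a)$ in both constructions, so the several tokens generated by one firing are tagged identically on the two sides and the identification stays a bijection. Second, the distinctness condition on the preset variables of $\enc N$ together with $1$-safety of $O$ forces $\sigma$ to be injective on those variables whenever a place occurs with multiplicity greater than one in $\preS{\tr t}_N$, so that the consumed tokens are genuinely distinct and correspond to the set $H$ of the (\textsc{pre}) rule. With the single-step correspondence established, both implications of the lemma follow by a routine induction on $|s|$: append the last firing, apply the induction hypothesis to the identified intermediate marking, and invoke the one-step result in the appropriate direction.
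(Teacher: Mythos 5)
Your proof is correct and takes essentially the same route as the paper's: induction on the length of the firing sequence, with the base case given by the literal coincidence of $\enc{m}$ with the unfolding's initial marking, and the inductive step given by a one-step correspondence that uses concurrency ($CO$) of the consumed tokens in a reachable marking of the occurrence net together with the (\textsc{pre})/(\textsc{post}) unfolding rules. The paper's version is considerably terser, leaving the token/place identification and the injectivity of $\sigma$ implicit, so your extra care on those points elaborates rather than diverges from its argument.
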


\begin{proof} The $\Leftarrow$ part is by induction
on the length of the reduction. The base case follows by taking $m'' =
m'$ and noting that $\enc{N,m} = (\enc{N}, m')$.
The inductive step $s = s';\tr t$ follows by applying inductive
hypothesis on $s'$ to conclude that $\enc{N,m} \red[s']
(\enc{N},m''')$ iff $(O,m) \red[s'] (O,m''')$. {\em If)}
$(\enc{N},m''') \red(\enc{N},m'')$ implies $m''' = \preS{\tr
t}_{\enc{N}} \oplus m''''$ and $m'' = \postS {\tr t}_{\enc{N}} \oplus
m''''$.
Since $(O,m) \red[s'] (O,m''')$, $CO(\preS{\tr t})$. By the unfolding
construction we conclude $(O,m''') \red (O,m'')$. The $\Rightarrow$
part follows analogously.
\end{proof}

We now recast the notion of reversible \textsc{p/t} net by taking
advantage of the above correspondence.
In particular, we show that the reversible version of $\enc{N}$ can be
defined analogously to the case of occurrence nets, i.e., by adding
transitions that are the swapped versions of the ones in $N$.

\begin{defi}[Reversible \textsc{p/t} net] Let $N$ be a
\textsc{p/t} net. The reversible version of $N$ is $\rev{\enc N}$.
The reversible version of a marked \textsc{p/t} net $(N, m)$ is the
marked \textsc{c-p/t} net $(\rev{\enc N}, \enc m)$.
\end{defi}

We remark that $\rev{\enc N}$ is a \textsc{c-p/t} net. Note that the
definition of \textsc{c-p/t} net (Definition~\ref{def:cptnet}) imposes
$\vars {\postS {\tr t}} \subseteq \vars {\preS {\tr t}}$ for any
transition $\tr t$.
Consequently, the addition of reverse transitions to a \textsc{c-p/t}
net may not produce a \textsc{c-p/t} net.
However, the definition of $\enc N$ (Definition~\ref{def:ptascpt}) and
the fact that we consider nets without transitions with empty postsets
ensures $\vars {\postS {\tr t}} = \vars {\preS {\tr t}}$ for any $\tr
t\in\enc N$.

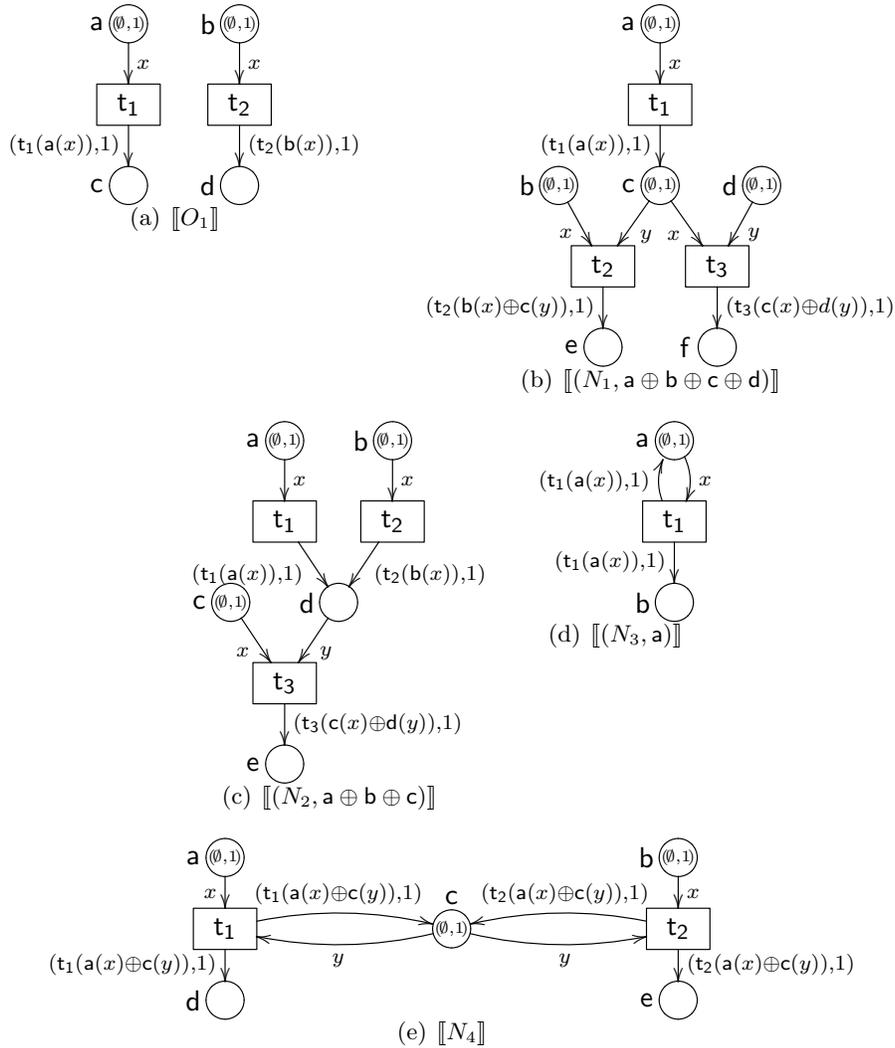
\begin{figure}[t]
\subfigure[$\enc{O_1}$]{ 
\label{fig:enc-O1}
$$
 \xymatrix@R=1.3pc@C=1.5pc{
 \drawcolouredmarkedplace{\scriptscriptstyle{(\!\emptyset,1\!)}}\ar^x[d]
 \nameplaceleft {\p a}
 &
 \drawcolouredmarkedplace{\scriptscriptstyle{(\!\emptyset,1\!)}}\ar^x[d]
 \nameplaceleft {\p b}
 \\
 \drawtrans {\tr{t_1}} \ar_{(\tr{t_1}(\p a(x)),1)}[d] 
 &
 \drawtrans {\tr{t_2}} \ar^{(\tr{t_2}(\p b(x)),1)}[d] 
 \\
 \drawplace
 \nameplaceleft {\p c}
 &
 \drawplace
 \nameplaceleft {\p d}
 }
$$
}
\subfigure[$\enc{(N_1, \p a \oplus \p b \oplus \p c \oplus \p d)}$]{ 
\label{fig:enc-N1}
$$
 \xymatrix@R=1.3pc@C=-.2pc{
 &
 &
 \drawcolouredmarkedplace{\scriptscriptstyle{(\!\emptyset,1\!)}}\ar^x[d]
 \nameplaceleft {\p a}
 \\
 &
 &
 \drawtrans {\tr{t_1}} \ar_{(\tr{t_1}(\p a(x)),1)}[d]
 \\
 \drawcolouredmarkedplace{\scriptscriptstyle{(\!\emptyset,1\!)}}\ar_x[dr]
 \nameplaceleft {\p b}
 &
 &
 \drawcolouredmarkedplace{\scriptscriptstyle{(\!\emptyset,1\!)}}\ar^y[dl]\ar_x[dr]
 \nameplaceleft {\p c}
 &
 &
 \drawcolouredmarkedplace{\scriptscriptstyle{(\!\emptyset,1\!)}}\ar^y[dl]
 \nameplaceleft {\p d}
 &
 &
 \\
 &
  \drawtrans {\tr{t_2}} \ar_{(\tr{t_2}(\p b(x)\oplus \p c(y)),1)}[d]
 &
 &
  \drawtrans {\tr{t_3}} \ar^{(\tr{t_3}(\p c(x)\p\oplus d(y)),1)}[d]
 \\
 &
 \drawplace
 \nameplaceleft {\p e}
 &
 &
  \drawplace
 \nameplaceleft {\p f}
 }
$$
}
\hspace{-1cm}
\subfigure[$\enc{(N_2, \p a \oplus \p b \oplus \p c)}$]{ 
\label{fig:enc-N2}
$$
 \xymatrix@R=1.3pc@C=.1pc{
 &
 \drawcolouredmarkedplace{\scriptscriptstyle{(\!\emptyset,1\!)}}\ar^x[d]
 \nameplaceleft {\p a}
 &
 &
 \drawcolouredmarkedplace{\scriptscriptstyle{(\!\emptyset,1\!)}}\ar^x[d]
 \nameplaceleft {\p b}
 \\
 &
 \drawtrans {\tr{t_1}} \ar_{(\tr{t_1}(\p a(x)),1)}[dr] 
 &
 &
 \drawtrans {\tr{t_2}} \ar^{(\tr{t_2}(\p b(x)),1)}[dl]  
 \\
\drawcolouredmarkedplace{\scriptscriptstyle{(\!\emptyset,1\!)}}\ar_x[dr]
 \nameplaceleft {\p c}
 &
 &
 \drawplace\ar^y[dl]
 \nameplaceleft {\p d}
 &
 &
 \\
 &
  \drawtrans {\tr{t_3}} \ar^{(\tr{t_3}(\p c(x)\oplus \p d(y)),1)}[d]
 &
 &
 \\
 &
 \drawplace
 \nameplaceleft {\p e}
 &
 }
$$
}
\hspace{-.3cm}
\subfigure[$\enc{(N_3,\p a)}$]{ 
\label{fig:enc-N3}
$$
  \xymatrix@R=1.3pc@C=.1pc{
 \drawcolouredmarkedplace{\scriptscriptstyle{(\!\emptyset,1\!)}}\ar^x@/^/[d]
 \nameplaceleft {\p a}
  \\
 \drawtrans {\tr{t_1}}  \ar_{(\tr{t_1}(\p a(x)),1)}[d]\ar^{(\tr{t_1}(\p a(x)),1)}@/^/[u]
  \\
 \drawplace
 \nameplaceleft {\p b}
 }
$$
}
\subfigure[$\enc{N_4}$]{ 
\label{fig:enc-N4}
$$
  \xymatrix@R=1pc@C=2.5pc{
 \drawcolouredmarkedplace{\scriptscriptstyle{(\!\emptyset,1\!)}}\ar_x[d]
 \nameplaceleft {\p a}
 &&
 &&
 \drawcolouredmarkedplace{\scriptscriptstyle{(\!\emptyset,1\!)}}\ar^x[d]
 \nameplaceleft {\p b}
 \\
 \drawtrans {\tr{t_1}} \ar_{(\tr{t_1}(\p a(x)\oplus\p c(y)),1)}[d] \ar^{(\tr{t_1}(\p a(x)\oplus\p c(y)),1)}@/^/[rr]
 &&
 \drawcolouredmarkedplace{\scriptscriptstyle{(\!\emptyset,1\!)}}\ar_y@/_/[rr]\ar^y@/^/[ll]
 \nameplaceup {\p c} 
 &&
 \drawtrans {\tr{t_2}} \ar^{(\tr{t_2}(\p a(x)\oplus\p c(y)),1)}[d] \ar_{(\tr{t_2}(\p a(x)\oplus\p c(y)),1)}@/_/[ll]
 \\
 \drawplace
 \nameplaceleft {\p d}
 &&
 &&
 \drawplace
 \nameplaceleft {\p e}
 }
$$
}
\caption{{\sc p/t} nets as {\sc c-p/t} nets}\label{fig:enc-nets}
\end{figure}

\begin{figure}[htp]
  \vspace{-5mm}
\subfigure[$\rev{\enc{O_1,\p a\oplus\p b}}$]{ 
\label{fig:REO1}
$$
 \xymatrix@R=1.3pc@C=1.5pc{
 \drawcolouredmarkedplace{\scriptscriptstyle{(\!\emptyset,1\!)}}\ar_x[d]
 \nameplaceleft {\p a}
 &&&
 &
 \drawcolouredmarkedplace{\scriptscriptstyle{(\!\emptyset,1\!)}}\ar_x[d]
 \nameplaceleft {\p b}
 \\
 \drawtrans {\tr{t_1}} \ar_{(\tr{t_1}(\p a(x)),1)}[d] 
 &
 \drawtrans {{\rev{\tr{ t_1}}}} \ar_{x}@{-->}@/_/[ul]  
 &&&
 \drawtrans {\tr{t_2}} \ar_{(\tr{t_2}(\p b(x)),1)}[d] 
 &
 \drawtrans {{\rev {\tr{t_2 }}}} \ar_{x}@{-->}@/_/[ul]
 \\
 \drawplace\ar_{({\tr{t_1}(\p a(x))},1)}@{-->}@/_/[ur]  
 \nameplaceleft {\p c}
 &&&
 &
 \drawplace\ar_{({\tr{t_2}(\p b(x))},1)}@{-->}@/_/[ur]  
 \nameplaceleft {\p d}
 }
$$
}
\hspace{-1.2cm}
\subfigure[$\rev{\enc{(N_1,\p a \oplus \p b\oplus \p c\oplus \p d)}}$]{ 
\label{fig:REN1}
$$
 \xymatrix@R=1.3pc@C=1.5pc{
 &
 &
 \drawcolouredmarkedplace{\scriptscriptstyle{(\!\emptyset,1\!)}}\ar_x[d]
 \nameplaceleft {\p a}
 \\
 &
 &
 \drawtrans {\tr{t_1}} \ar_{(\tr{t_1}(\p a(x)),1)}[d]
 &
 \drawtrans { \rev{\tr{ t_1}}} \ar_{x}@{-->}@/_/[ul]   
 \\
 \drawcolouredmarkedplace{\scriptscriptstyle{(\!\emptyset,1\!)}}\ar_x[dr]
 \nameplaceleft {\p b}
 &
 &
 \drawcolouredmarkedplace{\scriptscriptstyle{(\!\emptyset,1\!)}}\ar^y[dl]\ar_x[dr]
 \ar_{(\tr{t_1}(\p a(x)),1)}@{-->}@/_/[ur] 
 \nameplaceleft {\p c}
 &
 &
 \drawcolouredmarkedplace{\scriptscriptstyle{(\!\emptyset,1\!)}}\ar^y[dl]
 \nameplaceleft {\p d}
 &
 &
 \\
 &
  \drawtrans {\tr{t_2}} \ar_{(\tr{t_2}(\p b(x)\oplus\p c(y)),1)}[d]
 &
 &
  \drawtrans {\tr{t_3}} \ar^{(\tr{t_3}(\p c(x)\oplus\p d(y)),1)}[d]
 \\
 &
 \drawplace\ar_{(\tr{t_2}(\p b(x)\oplus\p c(y)),1)}@{-->}[d]
 \nameplaceleft {\p e}
 &
 &
  \drawplace\ar^{(\tr{t_3}(\p c(x)\oplus\p d(y)),1)}@{-->}[d]
 \nameplaceleft {\p f}
 \\
 &
  \drawtrans { \rev{\tr{ t_2}}} 
  \ar^{ x}@{-->}@/^14ex/[uuul] 
  \ar^{ y}@{-->}@/_3ex/[uuur]     
 &
 &
  \drawtrans{ \rev{\tr{ t_3}}} 
  \ar_{ x}@{-->}@/^3ex/[uuul] 
  \ar_{ y}@{-->}@/_14ex/[uuur]     
 }
$$
}
\subfigure[$\rev{\enc{(N_2, \p a \oplus\p b\oplus \p c)}}$]{ 
\label{fig:rev-enc-N2}
$$
 \xymatrix@R=1.3pc@C=1.5pc{
 &
 \drawcolouredmarkedplace{\scriptscriptstyle{(\!\emptyset,1\!)}}\ar^x[d]
 \nameplaceup {\p a}
 &
 &
 \drawcolouredmarkedplace{\scriptscriptstyle{(\!\emptyset,1\!)}}\ar^x[d]
 \nameplaceup {\p b}
 \\
 \drawtrans {\rev {\tr{t_1}}} 
  \ar^{x}@/^1ex/@{-->}[ur]
 &
 \drawtrans {\tr{t_1}} \ar^{(\tr{t_1}(\p a(x)),1)}[dr] 
 &
 &
  \drawtrans {\tr{t_2}} \ar^{(\tr{t_2}(\p b(x)),1)}[dl]  
  &
  \drawtrans {\rev {\tr{t_2}}}  
  \ar_{x}@/_1ex/@{-->}[ul]
 \\
\drawcolouredmarkedplace{\scriptscriptstyle{(\!\emptyset,1\!)}}\ar_x[drr]
 \nameplaceleft {\p c}
 &&
 \drawplace\ar^y[d]
 \ar_{(\tr{t_1}(\p a(x)),1)}@/^3ex/@{-->}[ull]
 \ar_{(\tr{t_2}(\p b(x)),1)}@/_3ex/@{-->}[urr]
 \nameplaceup {\p d}
 &
 &
 \\
 &
 &
  \drawtrans {\tr{t_3}} \ar_{(\tr{t_3}(\p c(x)\oplus\p d(y)),1)}[d]
 &
 &
 \\
 &&
 \drawplace\ar_{(\tr{t_3}(\p c(x)\oplus\p d(y)),1)}@{-->}[d]
 \nameplaceleft {\p e}
 &
 \\
 &&
 \drawtrans {\rev {\tr{t_3}}} 
  \ar_{y}@/_14ex/@{-->}[uuu]
  \ar^{x}@/^10ex/@{-->}[uuull]
 }
$$
}
\hspace{.1cm}
\subfigure[$\rev{\enc{(N_3, \p a)}}$]{ 
\label{fig:rev-enc-N3}
$$
  \xymatrix@R=1.3pc@C=2pc{
 \drawcolouredmarkedplace{\scriptscriptstyle{(\!\emptyset,1\!)}}\ar^x@/^/[d]
   \ar^{(\tr{t_1}(\p a(x)),1)}@{-->}@/^/[drr]
   \nameplaceleft {\p a}
 &&
  \\
 \drawtrans {\tr{t_1}}  \ar_{(\tr{t_1}(\p a(x)),1)}[d]
 \ar^{(\tr{t_1}(\p a(x)),1)}@/^/[u]
 &&
  \drawtrans {\rev {\tr{t_1}}}  
  \ar^{x}@{-->}@/^/[ull]
  \\
 \drawplace  \ar_{(\tr{t_1}(\p a(x)),1)}@{-->}@/_/[urr]
 \nameplaceleft {\p b}
 &
 &
 }
$$
}
\subfigure[$\rev{\enc{N_4, \p a \oplus\p b\oplus \p c}}$]{ 
\label{fig:rev-enc-N4}
$$
  \xymatrix@R=1.5pc@C=2.5pc{
 \drawcolouredmarkedplace{\scriptscriptstyle{(\!\emptyset,1\!)}}\ar_x[dr]
 \nameplaceup {\p a}
 &&&&
 &&
 \drawcolouredmarkedplace{\scriptscriptstyle{(\!\emptyset,1\!)}}\ar^x[dl]
 \nameplaceup {\p b}
 \\
 &\drawtrans {\tr{t_1}} \ar_{(\tr{t_1}(\p a(x)\oplus\p c(y)),1)}[d] \ar^{(\tr{t_1}(\p a(x)\oplus\p c(y)),1)}@/^/[rr]
 &&
 \drawcolouredmarkedplace{\scriptscriptstyle{(\!\emptyset,1\!)}}\ar_y@/_/[rr]\ar^y@/^/[ll]
 \ar^{(\tr{t_1}(\p a(x)\oplus\p c(y)),1)}@{-->}@/^/[ddll]
 \ar_(.7){(\tr{t_2}(\p b(x)\oplus\p c(y)),1)}@{-->}@/_/[ddrr]
 \nameplaceup {\p c} 
 &&
 \drawtrans {\tr{t_2}} \ar^{(\tr{t_2}(\p b(x)\oplus\p c(y)),1)}[d] \ar_{(\tr{t_2}(\p b(x)\oplus\p c(y)),1)}@/_/[ll]
 \\
 &
 \drawplace\ar_(.4){(\tr{t_1}(\p a(x)\oplus\p c(y)),1)}@{-->}[d]
 \nameplaceleft {\p d}
 &&
 &&
 \drawplace\ar^(.4){(\tr{t_2}(\p b(x)\oplus\p c(y)),1)}@{-->}[d]
 \nameplaceright {\p e}
 \\
 &
  \drawtrans{\rev{\tr{t_1}}}  \ar^{y}@/^/@{-->}[rruu]\ar^{x}@/^15ex/@{-->}[uuul]
&&
&&
 \drawtrans{\rev{\tr{t_2}}}  \ar_{y}@/_/@{-->}[lluu] \ar_{x}@/_15ex/@{-->}[uuur]
 &&
 }
$$
}
\caption{Reversible  coloured nets}\label{fig:rev-coloured-nets}
\end{figure}
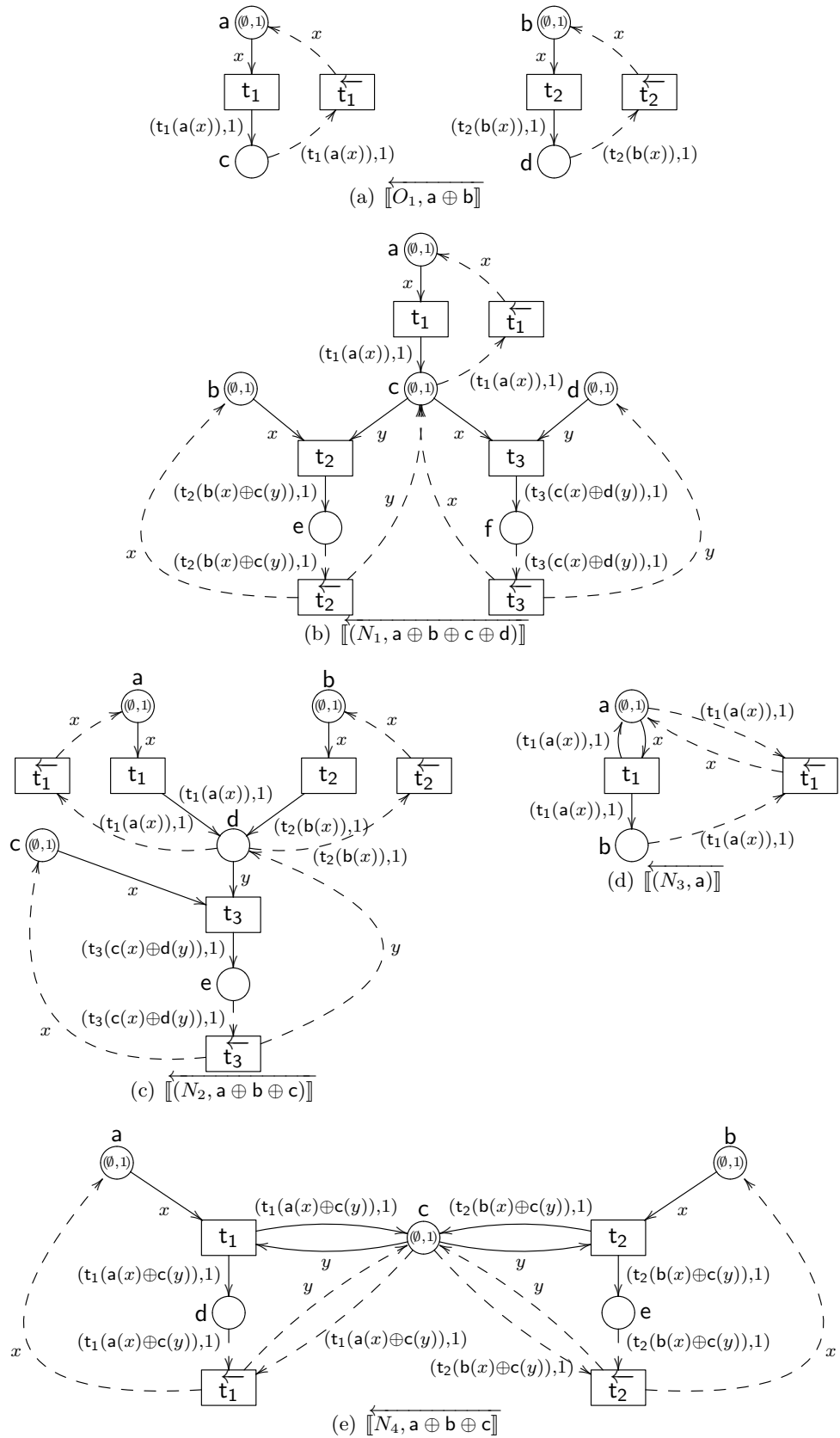

\begin{exa}
\label{ex:encoding-rnets}
Consider the {\sc p/t} net $(N_2,\p a \oplus \p b \oplus \p c )$ in
Figure~\ref{fig:N2}, whose reversible version is shown in
Figure~\ref{fig:rev-enc-N2}.
Such net is obtained by (i) mapping $(N_2,\p a \oplus \p b \oplus \p
c)$ into the {\sc c-p/t} net $\enc{(N_2,\p a \oplus \p b \oplus \p
c)}$ shown in Figure~\ref{fig:enc-N2}, and (ii) adding the reverse
transitions.
We now illustrate the execution of $ (\rev{\enc{N_2}}, \p
a(\emptyset,1) \oplus \p b(\emptyset,1) \oplus \p c(\emptyset,1))$.
Consider the following forward computation consisting of the firing of
$\tr{t_1}$ followed by $\tr{t_3}$.
{\small
\begin{align*}
  &
  (\rev{\enc{N_2}}, \p a(\emptyset,1)
  \oplus \p b(\emptyset,1)
  \oplus \p c(\emptyset,1))
  \\
  \red [{\tr {t_1}}] 
  &
  (\rev{\enc{N_2}}, \p b(\emptyset,1)
  \oplus \p c(\emptyset,1)
  \oplus \p d(\tr{t_1}(\p a(\emptyset,1)),1)  )
  \\
  \red[{\tr{t_3}}]
  &
  (\rev{\enc{N_2}},  \p b(\emptyset,1)
  \oplus \p e(\tr{t_3}(\p c(\emptyset,1)
  \oplus \p d(\tr{t_1}(\p a(\emptyset,1)),1) ,1)))
\end{align*}
}
Suppose, we would like to undo the above computation. The last marking
only enables ${\rev{\tr{t_3}}}$, which can be fired as follows
{\small
\begin{align*}
  (\rev{\enc{N_2}},  \p b(\emptyset,1)
  \oplus \p e(\tr{t_3}(\p c(\emptyset,1)
  \oplus \p d(\tr{t_1}(\p a(\emptyset,1)),1) ,1)))
  \bred[\rev{\tr{t_3}}]
  (\rev{\enc{N_2}}, \p b(\emptyset,1)
  \oplus \p c(\emptyset,1)
  \oplus \p d(\tr{t_1}(\p a(\emptyset,1)),1)  ).
\end{align*}
}
Note that the obtained marking enables, as expected, $\tr{t_3}$
again.
Additionally, it also enables $\rev{\tr{t_1}}$ because the
token in $\p d$ matches the preset of $\rev{\tr{t_1}}$. As expected,
the the firing of $\rev{\tr{t_1}}$ returns to the initial marking
{\small
\begin{align*}
  (\rev{\enc{N_2}}, \p b(\emptyset,1)
  \oplus \p c(\emptyset,1)
  \oplus \p d(\tr{t_1}(\p a(\emptyset,1)),1)  )
  \bred[\rev{\tr{t_1}}]
  (\rev{\enc{N_2}}, \p a(\emptyset,1)
  \oplus \p b(\emptyset,1)
  \oplus \p c(\emptyset,1)).
\end{align*}
}
Finally, we remark that the colour in the preset $\rev{\tr{t_2}}$
prevents the firing of $\rev{\tr{t_2}}$ in the marking of
$(\rev{\enc{N_2}}, \p b(\emptyset,1) \oplus \p
c(\emptyset,1) \oplus \p d(\tr{t_1}(\p a(\emptyset,1)),1) )$, because
the colour $\tr{t_1}(\p a(\emptyset,1))$ in the token in $\p d$ does
not match the one in the preset.
\end{exa}

\begin{exa}
We now illustrate the behaviour of $(\rev{\enc{N_3}},\p a)$ in
Figure~\ref{fig:rev-enc-N3} corresponding to the reversible version of
$(N_3,\p a)$ in Figure~\ref{fig:N3}. Consider the following forward
computation consisting of two consecutive firings of $\tr{t_1}$.
\[
\begin{array}{ll}
  &
  (\enc{\rev{N_3}},\p a(\emptyset,1)) 
  \\
  &
  \red [{\tr {t_1}}] 
  (\enc{\rev{N_3}},\p a(\tr {t_1}(\p a(\emptyset,1)),1)
  \oplus \p b(\tr {t_1}(\p a(\emptyset,1)),1))
  \\
  &
  \red [{\tr {t_1}}] 
  (\enc{\rev{N_3}},\p a(\tr {t_1}(\p a(\tr {t_1}(\p a(\emptyset,1)),1)),1)
  \oplus \p b(\tr {t_1}(\p a(\tr {t_1}(\p a(\emptyset,1)),1)),1)
  \oplus  \p b(\tr {t_1}(\p a(\emptyset,1)),1))
\end{array}
\]
Note that $\tr{t_1}$ can be fired infinitely many times. At any point,
the last firing can be undone by applying $\rev{\tr{t_1}}$. Also note
that the colours in the preset of $\rev{\tr{t_1}}$ ensure that the
last produced token in $\p b$, whose colour matches the one in $\p a$,
is consumed.
Then, we may have the following backward computation.
\[
\begin{array}{ll}
 &
 (\enc{\rev{N_3}},\p a(\tr {t_1}(\p a(\tr {t_1}(\p a(\emptyset,1)),1)),1)
 \oplus \p b(\tr {t_1}(\p a(\tr {t_1}(\p a(\emptyset,1)),1)),1)
 \oplus  \p b(\tr {t_1}(\p a(\emptyset,1)),1))
 \\
 \bred [\rev{\tr {t_1}}]
 &
 (\enc{\rev{N_3}},\p a(\tr {t_1}(\p a(\emptyset,1)),1)
 \oplus \p b(\tr {t_1}(\p a(\emptyset,1)),1))
 \\
 \bred[\rev{\tr{t_1}}]
 &
 (\enc{\rev{N_3}},\p a(\emptyset,1)) 
\end{array}
\]
\end{exa}

The following result states that the reductions of the
reversible \textsc{c-p/t} version of a net are in one-to-one
correspondence with the reductions of its reversible version of its
unfolding.
\begin{thm}[Correctness] 
\label{theorem:correspondence-unfolding-coloured}
Let $(N,m)$ be a marked \textsc{p/t} net and $\mathcal{U}\lbrack
N,m\rbrack = (O,m')$ its unfolding. Then, $(\rev{\enc N}, \enc
m) \red[s] (\rev{\enc{N}},m'')$ if and only if $(\rev O,m') \red[s]
(\rev O,m'')$.
\end{thm}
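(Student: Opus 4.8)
The plan is to extend the forward-only correspondence of Lemma~\ref{lemma:correspondence-unfolding-coloured} to mixed (forward and backward) firing sequences by establishing a single firing step correspondence and iterating it by induction on $|s|$. First I would fix the canonical bijection between the two state spaces: a coloured token $\p a(c)\in S_N\times\colourset$ of $\rev{\enc N}$ is identified with the homonymous place $\p a(c)\in S_O$ of the unfolding $\rev O$. This identifies each marking $n$ of $\rev{\enc N}$ with a marking $\hat n$ of $\rev O$ (and $\enc m$ with $m'$), and it is compatible with the folding morphism, under which a label $\tr t(H)$ of $\rev O$ folds onto $\tr t$; so the common $s$ in the statement is read as a sequence of symbols $\tr t$ and $\rev{\tr t}$ on both sides. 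By Definition~\ref{lbl:rev_occ} the reverse transition $\rev{\tr t}$ of $\rev O$ swaps $\preS{\tr t}_O$ and $\postS{\tr t}_O$, exactly the swap that turns $\enc N$ into $\rev{\enc N}$, so the two constructions are structurally aligned.

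Next I would prove the single-step claim: for every marking $n$ reachable from $\enc m$, with corresponding reachable $\hat n$, a firing $(\rev{\enc N}, n)\red[t](\rev{\enc N},n'')$ holds iff the matching firing $(\rev O,\hat n)\red[t](\rev O,\hat{n''})$ holds. For a \emph{forward} step $t=\tr t$ this is essentially Lemma~\ref{lemma:correspondence-unfolding-coloured}, with one decisive addition. By Theorem~\ref{th:rev_sem} every marking reachable in $\rev O$ is already reachable by forward firings and is therefore a co-set; hence the set $H$ of tokens consumed by $\tr t$ satisfies $CO(H)$, so the unfolding transition $\tr t(H)$ indeed belongs to $T_O$. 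Since the preset pattern of $\tr t$ in $\enc N$ uses distinct variables (Definition~\ref{def:ptascpt}), enabledness in $\rev{\enc N}$ reduces to the mere presence of tokens in the right places, which matches the presence of the places $H$ enabling $\tr t(H)$; Definition~\ref{def:ptascpt} then makes the produced colours $\{\tr t(h)\}$ coincide with the freshly named places $\p a(\{\tr t(H)\},i)$, so that $n''$ corresponds to $\hat{n''}$.

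For a \emph{reverse} step $t=\rev{\tr t}$ I would argue directly from the swapped presets and postsets. The preset of $\rev{\tr t}$ in $\rev{\enc N}$ is $\postS{\tr t}_{\enc N}$, whose colours have the shape $\{\tr t(h)\}$, so enabledness forces the consumed tokens to carry precisely the history recording a previous forward firing of $\tr t$, and firing then restores the instance $\preS{\tr t}_{\enc N}\sigma$ of the original preset. The identically-swapped transition $\rev{\tr t(H)}$ of $\rev O$ consumes $\postS{\tr t(H)}_O=\{\p a(\{\tr t(H)\},i)\mid\ldots\}$ and produces $H=\preS{\tr t(H)}_O$, so under the naming bijection the two reverse firings consume and produce corresponding tokens; equivalently one may combine the Loop Lemma~\ref{lm:loop} for $\rev O$ with the analogous loop property for $\rev{\enc N}$ to reduce each reverse step to an already-matched forward step. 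The essential point is that colours encode enough history to pin down exactly \emph{which} forward firing is being undone, so no spurious reverse firing is available on either side, as illustrated by the blocked firing of $\rev{\tr{t_2}}$ in Example~\ref{ex:encoding-rnets}.

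With the single-step correspondence in hand the theorem follows by induction on $|s|$: the base case is immediate, and each inductive step invokes the single-step claim while the induction hypothesis keeps the intermediate markings in correspondence and, crucially, reachable. I expect the main obstacle to be precisely the forward case applied at a marking produced after some backward firings, since such a marking is not \emph{a priori} forward-reachable; it is the appeal to Theorem~\ref{th:rev_sem}, collapsing mixed reachability to forward reachability and thereby delivering the co-set property needed to place $\tr t(H)$ in $T_O$, that removes this obstacle and makes the forward matching go through.
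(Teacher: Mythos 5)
Your proof is correct and follows essentially the same route as the paper, whose entire argument for this theorem is ``induction on the length of the reduction analogously to the proof of Lemma~\ref{lemma:correspondence-unfolding-coloured}.'' Your elaboration --- in particular the appeal to Theorem~\ref{th:rev_sem} to guarantee that markings reached by mixed sequences in $\rev O$ are forward-reachable and hence co-sets, so that the unfolding transition $\tr{t}(H)$ genuinely lies in $T_O$ --- is exactly the detail needed to make the paper's ``analogously'' rigorous for reverse steps and for forward steps fired after backward ones.
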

\begin{proof}
It follows by induction on the length of the reduction analogously to
the proof of Lemma~\ref{lemma:correspondence-unfolding-coloured}.
\end{proof}

\section{Conclusions}

This paper addresses the problem of reversing computations in Petri
nets.  The problem of reversing computation is different from the
classical property of reversibility in Petri nets~\cite{revPT}, which
refers to the ability of returning to its initial marking from any
reachable marking without requiring to traverse back the states of the
system.  In this sense, making a net reversible equates to adding a
minimal number of (forward) transitions that return us to the initial
marking~\cite{BarylskaKMP18,MikulskiL19}.  Reversibility is a global
property while reversing a computation is a local one, as discussed
in~\cite{BarylskaKMP18,MikulskiL19}.

Reversing computation in Petri nets has also been studied
in \cite{PhilippouP18}, where reversible Petri nets (RPNs) are
introduced.  RPNs are Petri nets endowed with a new kind of tokens,
called bonds, and a computational history that records the order of
execution. The motivation for bonds comes from bonds in biochemical
reactions. Transitions produce bonds that are constructed from the
consumed bonds.
For example, the bond generated by a transition that consumes a bond
from place $\p a$ and bond from place $\p b$ may have the shape $\p a
- \p b$.
In this way, each bond keeps track of its causal history and uses this
information for reversal.
Additionally, RPNs record the execution history in order to handle the
reversal of non-deterministic choices.
Informally, each firing in a computation is assigned with an integer
value, which induces an ordering on firings.  The main motivation to
have bonds is the possibility to model out-of-causal order
reversibility \cite{PhiUliYuen12,UK16,UK18}. Moreover, they are able
to model also causal-consistent
reversibility~\cite{rccs,ccsk,cc_bulletin} and
backtracking \cite{HoeyU19}.  Also, a translation of RPNs into
coloured Petri nets has been given \cite{BarylskaGMPPP18}.
Reversibility in RPN can be controlled by adding conditions on
backward transitions. In this way it is possible to model wireless
communications scenarios \cite{PhilippouPS19}.

We have presented a causally reversible semantics for P/T nets based
on two observations.  First, an occurrence net can be
straightforwardly reversed by adding a reverse version for each
(forward) transition.  Second, the standard unfolding construction
associates a P/T net with an occurrence net that preserves all of its
computation. Consequently, the reversible semantics of a P/T net can
be obtained as the reversible semantics of its unfolding.  We have
showed that reversibility in reversible occurrence net is
causal-consistent, namely that it preserves causality.  The unfolding
of an occurrence net can be infinite, for example when the original
P/T net is not acyclic.  Therefore we have shown that the reversible
behaviour of reversible occurrence nets can be expressed as a finite
net whose tokens are coloured by causal histories.
Colours in our encoding resemble the causal memories that are typical
in reversible process calculi~\cite{rhotcs,rccs}.  We plan to
implement an algorithm to automate our translation into coloured Petri
nets and to integrate it with the CPN tool \cite{cpntool}. This will
allow us to simulate our reversible nets and eventually could provide
visual analysis support for reversible
debuggers \cite{GiachinoLM14,Lanese0PV18,HoeyU19}.

Occurrence nets have a direct mapping into prime event structures. 
Following the research line undertaken in \cite{rpes_ron},
we will continue investigating in the future the relation between
reversible event
structures~\cite{PhillipsU15,CristescuKV15,UlidowskiPY18,GraversenPY18}
and our reversible occurrence nets.  There are alternative methods
for proving causal-consistent reversibility in a reversible model of
computation. They are based on showing other properties than those in
Section~4, either the Well-Foundedness (lack of infinite reverse
sequences) and Reverse Diamond properties in \cite{ccsk,PhiUli07} or
the properties in~\cite{revaxiom}. It would be worthwhile to prove the
alternative properties for our reversible nets, and compare the two
approaches.

\section*{Acknowledgment}
We would like to thank the anonymous reviewers for their helpful
suggestions of improvements.

\bibliographystyle{alpha}
\bibliography{biblio}

\end{document}